\newcolumntype{C}[1]{>{\Centering}m{#1}}
\begin{document}

\title{Quantum Codes from additive constacyclic codes over a mixed alphabet and the MacWilliams identities}


\author{Indibar Debnath \and Ashutosh Singh \and Om Prakash$^{*}$ \and Abdollah Alhevaz 
}


\institute{${*}$
\at
              Department of Mathematics\\
              Indian Institute of Technology Patna, Bihar 801 106, India \\
              \email{indibar$\_$1921ma07@iitp.ac.in, ashutosh$\_$1921ma05@iitp.ac.in, om@iitp.ac.in(*corresponding author)}\\
              \and Abdollah Alhevaz
            \at
             Faculty of Mathematical Sciences\\ Shahrood University of Technology\\P.O. Box: 316-3619995161, Shahrood, Iran\\
              \email{a.alhevaz@gmail.com}
}
\date{Received: date / Accepted: date}
\maketitle

\begin{abstract}
Let $\mathbb{Z}_p$ be the ring of integers modulo a prime number $p$ where $p-1$ is a quadratic residue modulo $p$. This paper presents the study of constacyclic codes over chain rings $\mathcal{R}=\frac{\mathbb{Z}_p[u]}{\langle u^2\rangle}$ and $\mathcal{S}=\frac{\mathbb{Z}_p[u]}{\langle u^3\rangle}$. We also study additive constacyclic codes over $\mathcal{R}\mathcal{S}$ and $\mathbb{Z}_p\mathcal{R}\mathcal{S}$ using the generator polynomials over the rings $\mathcal{R}$ and $\mathcal{S},$ respectively. Further, by defining Gray maps on $\mathcal{R}$, $\mathcal{S}$ and $\mathbb{Z}_p\mathcal{R}\mathcal{S},$ we obtain some results on the Gray images of additive codes. Then we give the weight enumeration and MacWilliams identities corresponding to the additive codes over $\mathbb{Z}_p\mathcal{R}\mathcal{S}$. Finally, as an application of the obtained codes, we give quantum codes using the CSS construction.

\keywords{Constacyclic code, chain ring, Frobenius ring, Gray image, MacWilliams identity, Quantum code}
\end{abstract}

\section{Introduction}
Initially, codes were studied over finite fields. From the $1970$s onward, the study of codes over rings started; see \cite{Blake1,Blake2}. But this study over rings found momentum and created a lot of interest among the researchers after the extraordinary work by Hammons et al. \cite{Hammons} in 1993. Recently noncommutative rings have been considered in many works to study and obtain better codes \cite{Habibul1,ShikhaP1,ShikhaP2}. Still, the study has been done mostly on commutative rings for ease of computation. In $1997$, Rif\`a and Pujol \cite{Rifa} first encountered codes over mixed alphabets. Later, Borges et al. studied $\mathbb{Z}_2\mathbb{Z}_4$-additive codes and defined the duality of such codes. Abualrub et al. \cite{Abualrub3} studied algebraic structures of $\mathbb{Z}_2\mathbb{Z}_4$-additive cyclic codes and determined a set of generator polynomials of those codes. They showed that the duals of $\mathbb{Z}_2\mathbb{Z}_4$-additive cyclic codes are also cyclic and further obtained some optimal codes from the $\mathbb{Z}_2\mathbb{Z}_4$-additive cyclic codes. As a natural generalization of $\mathbb{Z}_2\mathbb{Z}_4$-additive codes, Aydogdu and Siap \cite{Aydogdu1} in $2013$ investigated the algebraic structure of $\mathbb{Z}_2\mathbb{Z}_{2^s}$-additive codes and presented the standard form of the generator and parity-check matrices. Aydogdu et al. \cite{Aydogdu3} determined the algebraic structure of linear and cyclic codes over $\mathbb{Z}_2\mathbb{Z}_2[u^3]$. Islam et al. \cite{Habibul3} studied cyclic and constacyclic codes over $\mathbb{Z}_4\mathbb{Z}_4[u]$ and determined their generator polynomials and the minimal spanning sets. They defined new Gray maps and showed that the images of the $\mathbb{Z}_4\mathbb{Z}_4[u]$-additive constacyclic codes and skew $\mathbb{Z}_4\mathbb{Z}_4[u]$-additive constacyclic codes are generalized quasi-cyclic codes over $\mathbb{Z}_4$. The structure of $\mathbb{Z}_4\mathbb{Z}_4[u^3]$ has been considered by Prakash et al. \cite{Sir} to study additive cyclic and constacyclic codes. First, they obtained the generator polynomials along with the minimal generating set of additive cyclic codes and then extended the results to determine the structure of additive constacyclic codes. Recently, in $2022$, a book has been published where Borges et al. \cite{Borges} have thoroughly studied the $\mathbb{Z}_2\mathbb{Z}_4$-linear codes. They have investigated various properties to show the importance of these codes. Further, they studied the dual structure, the rank and kernel, and encoding and decoding of $\mathbb{Z}_2\mathbb{Z}_4$-linear codes.

One of the most important and frequently studied linear codes is the class of constacyclic codes. This family of linear codes has a wide range of applications in information technology. Due to its beautiful algebraic structure, constacyclic codes are easy to implement and shift registers can encode them. Over the years, researchers have considered constacyclic codes to study different aspects of coding theory extensively, see \cite{Aydin,Chen,Indibar,Habibul2,Rk2}.

Weight distribution in coding theory is another important aspect. The weight enumerator of a linear code of length $n$ indicates the number of codewords of each possible weight $0,1,\ldots,n$. In $1963$, a remarkable work of MacWilliams \cite{MacWilliams} proposed a formula that relates the weight enumerator of a code with that of its dual. Yildiz and Karadeniz \cite{Yildiz} considered linear codes over $\mathbb{Z}_4 + u\mathbb{Z}_4$ and proved the MacWilliams identities for complete, symmetrized, and Lee weight enumerators. Aydogdu et al. \cite{Aydogdu2} introduced a new class of additive codes, $\mathbb{Z}_2\mathbb{Z}_2[u]$-additive codes and they proved a MacWilliams-type identity on the weight enumerator of such codes. Later, Tang et al. \cite{Tang} studied the existence of the MacWilliams-type identities for the Lee and Euclidean weight enumerators and provided necessary and sufficient conditions for the existence of those identities over $\mathbb{Z}_l$. Recently, in $2021$, Bhaintwal and Biswas \cite{Bhaintwal} studied the algebraic structure of $\mathbb{Z}_p\mathbb{Z}_p[u]/\langle u^k\rangle$-cyclic codes and established the MacWilliams identities for complete weight enumerators of $\mathbb{Z}_p\mathbb{Z}_p[u]/\langle u^k\rangle$-linear codes.

The above research works motivate us to study $\mathbb{Z}_p\mathcal{R}\mathcal{S}$-additive codes where $\mathcal{R}=\frac{\mathbb{Z}_p[u]}{\langle u^2\rangle}$, $\mathcal{S}=\frac{\mathbb{Z}_p[u]}{\langle u^3\rangle}$ and $\mathbb{Z}_p$ is the ring of integers modulo a prime $p$, where $p-1$ is a quadratic residue modulo $p$. The ring $\mathbb{Z}_p\mathcal{R}\mathcal{S}$ is a generalization of the rings such as $\mathbb{Z}_2\mathbb{Z}_2[u]$, $\mathbb{Z}_2\mathbb{Z}_2[u^3]$ and $\mathbb{Z}_p\mathbb{Z}_p[u]/\langle u^k\rangle$, on which additive codes have already been studied. But the study of additive constacyclic codes over the ring $\mathbb{Z}_p\mathcal{R}\mathcal{S}$ is not yet available in the literature. Specifically, we study the $\mathbb{Z}_p\mathcal{R}\mathcal{S}$-additive constacyclic codes of block length $(q,r,s)$ and derive the form of generators of these codes. First, we derive the form of generators of additive constacyclic codes over $\mathcal{R}$ and $\mathcal{S}$ each. Then using these generators, we find the generator of $\mathbb{Z}_p\mathcal{R}\mathcal{S}$-additive constacyclic codes. We define a suitable inner product on both $\mathcal{R}^r\mathcal{S}^s$ and $\mathbb{Z}_p^q\mathcal{R}^r\mathcal{S}^s$, and use them to find a relation between an additive constacyclic code and its dual. Next, we define Gray maps on $\mathcal{R}^r$, $\mathcal{S}^s$ and $\mathbb{Z}_p^q\mathcal{R}^r\mathcal{S}^s$, respectively, and study the Gray images of additive cyclic codes and additive constacyclic codes over $\mathcal{R}$, $\mathcal{S}$ and $\mathbb{Z}_p\mathcal{R}\mathcal{S}$. Then we obtain the MacWilliams identities of $\mathbb{Z}_p\mathcal{R}\mathcal{S}$-additive codes corresponding to the complete, Hamming, Symmetrized, and Lee weight enumerators. Finally, as an application, we find some quantum codes over $\mathcal{R}$.

This paper is arranged as follows: In section $2$, we recall some basic definitions and results, which will be required later. Section $3$ deals mainly with the generators of additive constacyclic codes over $\mathcal{R}$, $\mathcal{S}$ and $\mathbb{Z}_p\mathcal{R}\mathcal{S}$ each. Section $4$ revolves around the Gray maps defined over $\mathcal{R}^r$, $\mathcal{S}^s$ and $\mathbb{Z}_p^q\mathcal{R}^r\mathcal{S}^s$, and also their images. In section $5$, we obtain the MacWilliams identities of $\mathbb{Z}_p\mathcal{R}\mathcal{S}$-additive codes over several weight enumerators. We find some quantum codes over $\mathcal{R}$ in section $6$. Section $7$ concludes our work.

\section{Preliminaries}\label{sec 2}
In this section, we state some basic definitions and results. Let $\mathcal{R}=\frac{\mathbb{Z}_p[u]}{\langle u^2\rangle}$ and $\mathcal{S}=\frac{\mathbb{Z}_p[u]}{\langle u^3\rangle}$ where $\mathbb{Z}_p$ is a ring of integers modulo a prime $p$ and $p-1$ being a quadratic residue in $\mathbb{Z}_p$. Note that $\mathcal{R}$ and $\mathcal{S}$ are chain rings of order $p^2$ and $p^3$, respectively.\vspace{0.01 cm}

For any two positive integers $r$ and $s$, $\mathcal{R}^r\times \mathcal{S}^s$ is an additive group and $\mathcal{R}^r\times \mathcal{S}^s$ forms an $\mathcal{S}$-module with the scalar multiplication
 \begin{align*}
 d\cdot (x,y) &= d\cdot (x_0,x_1, \ldots, x_{r-1},y_0,y_1, \ldots, y_{s-1})\\
 &= (d^{\prime\prime}x_0,d^{\prime\prime}x_1, \ldots,d^{\prime\prime} x_{r-1},dy_0,dy_1, \ldots, dy_{s-1}),
 \end{align*}
 where $d\in\mathcal{S},~ x\in\mathcal{R}^r,~ y\in\mathcal{S}^s,~ d^{\prime\prime} = d(\text{mod } u^2)$.\vspace{0.01 cm}

 Similarly, for any three positive integers $q,r\text{ and }s$, $\mathbb{Z}_p^q\times\mathcal{R}^r\times\mathcal{S}^s$ also forms a module over $\mathcal{S}$ with the scalar multiplication
 \begin{align*}
 d\cdot (w,x,y) &= d\cdot (w_0,w_1,\ldots,w_{q-1},x_0,x_1, \ldots, x_{r-1},y_0,y_1, \ldots, y_{q-1})\\
 &= (d^{\prime}w_0,d^{\prime}w_1,\ldots,d^{\prime}w_{q-1},d^{\prime\prime}x_0,d^{\prime\prime}x_1, \ldots,d^{\prime\prime} x_{r-1},dy_0,dy_1, \ldots, dy_{q-1}),
 \end{align*}
 where $d\in\mathcal{S},~ w\in\mathbb{Z}_p^q,~ x\in\mathcal{R}^r,~ y\in\mathcal{S}^s,~d^{\prime} = d~(\text{mod } u) \text{ and }d^{\prime\prime} = d~(\text{mod } u^2)$.\vspace{0.01 cm}

The direct product of $\mathcal{R}$ and $\mathcal{S}$ is denoted by $\mathcal{R}\mathcal{S}$ and the direct product of $\mathbb{Z}_p$, $\mathcal{R}$ and $\mathcal{S}$ is denoted by $\mathbb{Z}_p\mathcal{R}\mathcal{S}$. Now, we define an $\mathcal{R}\mathcal{S}$-additive code and a $\mathbb{Z}_p\mathcal{R}\mathcal{S}$-additive code.

\begin{definition}
An $\mathcal{R}\mathcal{S}$-additive code of block length $(r,s)$ is an $\mathcal{S}$-submodule of $\mathcal{R}^r\mathcal{S}^s$ and similarly, a $\mathbb{Z}_p\mathcal{R}\mathcal{S}$-additive code of block length $(q,r,s)$ is an $\mathcal{S}$-submodule of $\mathbb{Z}_p^q\mathcal{R}^r\mathcal{S}^s$.
\end{definition}

\begin{example}\label{Example1}
 Consider the code $\mathfrak{C}_1 = \langle\{ (1,0;0,u;1+u^2,0),(0,1;1+u,0;0,1+u) \}\rangle$ of block length $(2,2,2)$ over $\mathbb{Z}_2\mathcal{R}\mathcal{S}$. Then
 \begin{align*}
    \mathfrak{C}_1 =&~ \langle\{(1,0;0,u;1+u^2,0),(0,0;0,0;u,0),(0,0;0,0;u^2,0),(0,1;1+u,0;0,1+u),\\
 &~~~ (0,0;u,0;0,u+u^2),(0,0;0,0;0,u^2)\}\rangle
 \end{align*}
over $\mathbb{Z}_2$, i.e., $\mathfrak{C}_1$ is a vector space over $\mathbb{Z}_2$ of dimension $6$.
\end{example}
Next, we recall the definition of a Frobenius ring.

\begin{definition}\cite{Dougherty}
 Let $R$ be a ring with unity. Then $R$ is called a Frobenius ring if $R$ is Artinian and $R/\text{Rad}(R)\cong\text{Soc}(R)$ both (left as well as right) $R$-modules. $\text{Rad}(R)$ denotes the Jacobson radical of $R$ and $\text{Soc}(R)$ denotes the socle of $R$ as an $R$-module.
\end{definition}

\begin{lemma}
The rings $\mathcal{R},~ \mathcal{S}~ \text{and}~ \mathbb{Z}_p\mathcal{R}\mathcal{S}$ are Frobenius.
\end{lemma}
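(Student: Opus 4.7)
The plan is to treat $\mathcal{R}$ and $\mathcal{S}$ by directly computing their Jacobson radicals and socles, and then lift the result to $\mathbb{Z}_p\mathcal{R}\mathcal{S}$ via the fact that Rad and Soc distribute over finite direct products.

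First I would handle $\mathcal{R}=\mathbb{Z}_p[u]/\langle u^2\rangle$. Since $\mathcal{R}$ is finite it is certainly Artinian, and it is a local (chain) ring with unique maximal ideal $\mathfrak{m}=\langle u\rangle$, so $\operatorname{Rad}(\mathcal{R})=\langle u\rangle$ and the residue field is $\mathcal{R}/\operatorname{Rad}(\mathcal{R})\cong\mathbb{Z}_p$. For the socle, because $u^2=0$ we have $\operatorname{Ann}(\mathfrak{m})=\langle u\rangle$, giving $\operatorname{Soc}(\mathcal{R})=\langle u\rangle$; as an $\mathcal{R}$-module this is annihilated by $u$ and so is naturally a $1$-dimensional $\mathbb{Z}_p$-vector space, hence isomorphic to $\mathcal{R}/\operatorname{Rad}(\mathcal{R})$. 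The argument is symmetric on the left and on the right because $\mathcal{R}$ is commutative.

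The same template works for $\mathcal{S}=\mathbb{Z}_p[u]/\langle u^3\rangle$: it is a finite commutative chain ring with ideal chain $\mathcal{S}\supsetneq\langle u\rangle\supsetneq\langle u^2\rangle\supsetneq 0$, so $\operatorname{Rad}(\mathcal{S})=\langle u\rangle$ and $\operatorname{Soc}(\mathcal{S})=\operatorname{Ann}(\langle u\rangle)=\langle u^2\rangle$. Both quotient and socle are $1$-dimensional $\mathbb{Z}_p$-spaces, so they are isomorphic as $\mathcal{S}$-modules, which finishes the Frobenius check for $\mathcal{S}$. (In fact this argument is the standard proof that every finite chain ring is Frobenius, which is the fact one could alternately simply quote.)

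For $\mathbb{Z}_p\mathcal{R}\mathcal{S}=\mathbb{Z}_p\times\mathcal{R}\times\mathcal{S}$ I would use the general principle that a finite direct product of Frobenius rings is Frobenius: $\mathbb{Z}_p$ is a field, hence trivially Frobenius (Rad $=0$, Soc $=\mathbb{Z}_p$), and from the previous paragraphs $\mathcal{R}$ and $\mathcal{S}$ are Frobenius. Since both $\operatorname{Rad}$ and $\operatorname{Soc}$ commute with finite products of rings,
\begin{align*}
\operatorname{Rad}(\mathbb{Z}_p\mathcal{R}\mathcal{S}) &= 0\times\langle u\rangle\times\langle u\rangle,\\
\operatorname{Soc}(\mathbb{Z}_p\mathcal{R}\mathcal{S}) &= \mathbb{Z}_p\times\langle u\rangle\times\langle u^2\rangle,
\end{align*}
and the componentwise isomorphisms established for each factor combine to give the required $(\mathbb{Z}_p\mathcal{R}\mathcal{S})$-module isomorphism between the quotient by the radical and the socle. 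Artinianity is immediate as the ring is finite.

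The main obstacle, such as it is, is purely bookkeeping: one must verify that the isomorphisms between $\operatorname{Soc}$ and the top quotient are module isomorphisms and not just abelian-group isomorphisms, i.e.\ that the $\mathcal{R}$-action (respectively $\mathcal{S}$-action) on each side factors through the residue field $\mathbb{Z}_p$ in compatible ways. This is clear since the socle is annihilated by the radical in each factor, but it is the only nontrivial checkpoint in an otherwise routine verification.
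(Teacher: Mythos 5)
Your proof is correct. The paper in fact states this lemma without any proof at all, so your argument supplies the missing justification, and it is the standard one: for a finite commutative chain ring the socle is $\operatorname{Ann}(\mathfrak{m})$, which for $\mathcal{R}$ and $\mathcal{S}$ is $\langle u\rangle$ and $\langle u^2\rangle$ respectively, each a one-dimensional $\mathbb{Z}_p$-space on which the radical acts trivially, hence isomorphic as a module to the residue field $\mathcal{R}/\langle u\rangle\cong\mathbb{Z}_p$ (resp.\ $\mathcal{S}/\langle u\rangle$); commutativity disposes of the left/right issue, and the product case follows since $\operatorname{Rad}$ and $\operatorname{Soc}$ decompose componentwise over the direct product $\mathbb{Z}_p\times\mathcal{R}\times\mathcal{S}$. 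All the computations you cite ($\operatorname{Ann}_{\mathcal{R}}(\langle u\rangle)=\langle u\rangle$, $\operatorname{Ann}_{\mathcal{S}}(\langle u\rangle)=\langle u^2\rangle$) check out, so there is no gap.
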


\begin{definition}\cite{Claasen}
 A character $\chi$ of a ring $R$ is a group homomorphism from $R$ to $\mathbb{C}^{\ast}$ where $\mathbb{C}^{\ast}$ is the group of all nonzero complex numbers.
\end{definition}

\begin{definition}\cite{Wood}
 Let $R$ be a finite ring and let $\widehat{R}$ be the set of all characters of $R$. If there exists an $R$-module isomorphism $f: R\rightarrow\widehat{R},$ then $\chi = f(1)$ is said to be a generating character of $R$.
\end{definition}

\begin{lemma}\label{Lemma5}\cite{Claasen}
Let $\chi$ be a character of a finite ring $R$. Then $\chi$ is a generating character if and only if $\text{ker}(\chi)$ contains no nonzero ideals of $R$.
\end{lemma}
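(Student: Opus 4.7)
The plan is to build the natural candidate map $f\colon R\to\widehat{R}$ given by $f(r)(s)=\chi(sr)$ and to translate both halves of the ``iff'' into a single statement about injectivity of $f$; surjectivity will then follow for free from a cardinality count.

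First I would equip $\widehat{R}$ with its canonical left $R$-module structure via $(r\cdot\psi)(s)=\psi(sr)$ and verify, by a one-line computation, that $f(r)=r\cdot\chi$ is an $R$-module homomorphism satisfying $f(1)=\chi$. The forward direction then follows almost formally: if $\chi$ is generating, there is by definition an $R$-module isomorphism $g\colon R\to\widehat{R}$ with $g(1)=\chi$, and $R$-linearity forces $g(r)=r\cdot g(1)=r\cdot\chi=f(r)$, so the candidate map $f$ is in fact the promised isomorphism and in particular injective. If a nonzero (left) ideal $I$ were contained in $\ker(\chi)$, picking any nonzero $r\in I$ would give $Rr\subseteq I\subseteq\ker(\chi)$, whence $f(r)(s)=\chi(sr)=1$ for every $s$; but then $f(r)$ equals the trivial character $f(0)$, contradicting injectivity.

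For the converse, I would assume $\ker(\chi)$ contains no nonzero left ideal. If $f(r)=0$, then $\chi(sr)=1$ for all $s\in R$, i.e.\ $Rr\subseteq\ker(\chi)$; since $Rr$ is a left ideal and contains $r=1\cdot r$, it must be zero, forcing $r=0$ and proving $f$ injective. To promote injectivity to bijectivity, I would invoke the standard fact that for a finite abelian group $(R,+)$ one has $|\widehat{R}|=|R|$; an injective map between finite sets of equal cardinality is automatically a bijection, so $f$ is an $R$-module isomorphism and $\chi=f(1)$ is a generating character.

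The only step requiring any genuine care is pinning down the $R$-module structure on $\widehat{R}$ so that the assignment $r\mapsto r\cdot\chi$ is $R$-linear, and observing that this choice is forced once we demand $f(1)=\chi$ together with $R$-linearity; the rest is bookkeeping. Beyond this, the lemma collapses to the single equivalence ``$f$ injective'' $\iff$ ``$\ker(\chi)$ contains no nonzero principal left ideal'' (equivalently, no nonzero left ideal), with $|R|=|\widehat{R}|$ bridging injectivity and being an isomorphism. I expect no deeper obstacle: the proof is essentially the standard Claasen--Goldbach argument, specialised to the finite setting where finiteness replaces the need for any continuity or duality beyond the elementary character count.
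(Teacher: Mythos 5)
Your proof is correct, but note that the paper does not actually prove this lemma: it is imported verbatim from Claasen--Goldbach \cite{Claasen} (see also Wood \cite{Wood}) with no argument supplied, so there is no in-paper proof to compare against. Your write-up is essentially the standard argument from that literature, and all the individual steps check out: the left action $(r\cdot\psi)(s)=\psi(sr)$ is associative, $f(r)=r\cdot\chi$ is additive and $R$-linear with $f(1)=\chi$, any isomorphism $g$ with $g(1)=\chi$ is forced to coincide with $f$, the kernel of $f$ is exactly $\{r : Rr\subseteq\ker(\chi)\}$, and $|\widehat{R}|=|R|$ upgrades injectivity to bijectivity. The one point worth flagging is a mismatch of quantifiers between your proof and the literal statement: you prove the equivalence of ``$\chi$ generating'' with ``$\ker(\chi)$ contains no nonzero \emph{left} ideal,'' whereas the lemma says ``no nonzero ideals.'' In the forward direction this is harmless (a two-sided ideal is a left ideal), but in the converse your argument only uses the absence of nonzero left ideals of the form $Rr$, which is a weaker hypothesis than the absence of nonzero two-sided ideals only in the noncommutative setting; establishing the full equivalence of the left, right, and two-sided versions for general finite rings is the genuinely nontrivial content of the Claasen--Goldbach/Wood theorem. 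Since every ring in this paper ($\mathbb{Z}_p$, $\mathcal{R}$, $\mathcal{S}$, and their products) is commutative, this distinction evaporates and your proof is a complete and self-contained justification of the lemma as it is actually used here.
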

Now, let us recall the definitions of a few special classes of codes.

\begin{definition}
Let $R$ be a ring, $n$ be a positive integer, and $\lambda$ be a unit in $R$. We denote the $\lambda$-constacyclic shift operator by $\sigma_{\lambda}$ and it is defined on $R^n$ by$$\sigma_{\lambda}(x_0, x_1,\ldots, x_{n-1}) = (\lambda x_{n-1},x_0,x_1,\ldots, x_{n-2}),$$where $x_i\in R$ for $i = 0,1,\ldots,n-1$. An $R$-additive code $\mathcal{C}$ of length $n$ is said to be a $\lambda$-constacyclic code if $\mathcal{C}$ is invariant under the map $\sigma_{\lambda}$. In particular, when $\lambda = 1$, we denote the operator $\sigma_1$ simply as $\sigma$, which is known as the cyclic shift operator. An $R$-additive code invariant under $\sigma$ is called a cyclic code.
\end{definition}

\begin{definition}\cite{Conan}
Let $R$ be a ring and $n = lm$ where $l,m$ are positive integers. We denote the $l$-quasi-cyclic shift operator by $\theta_l$ and it is defined on $R^n$ by $$\theta_l(x_0\vert x_1\vert\ldots\vert x_{l-1}) = (\sigma(x_0)\vert\sigma(x_1)\vert\ldots\vert\sigma(x_{l-1}))$$where $x_i\in R^m$ for $i = 0,1,\ldots,l-1$ and $\sigma$ is the cyclic shift operator on $R^m$. An $R$-additive code $\mathcal{C}$ of length $n$ is said to be an $l$-quasi-cyclic code if $\mathcal{C}$ is invariant under the map $\theta_l$.
\end{definition}

\begin{definition}\cite{Hill}
Let $R$ be a ring and $n = lm$ where $l,m$ are positive integers. We denote the $(\lambda,l)$-quasi-twisted shift operator by $\theta_{\lambda,l}$ and it is defined on $R^n$ by$$\theta_{\lambda,l}(x_0\vert x_1\vert\ldots\vert x_{l-1}) = (\sigma_{\lambda}(x_0)\vert\sigma_{\lambda}(x_1)\vert\ldots\vert\sigma_{\lambda}(x_{l-1})),$$where $\lambda\in R$ is a unit, $x_i\in R^m$ for $i = 0,1,\ldots,l-1$ and $\sigma_{\lambda}$ is the $\lambda$-constacyclic shift operator on $R^m$. An $R$-additive code $\mathcal{C}$ of length $n$ is said to be a $(\lambda,l)$-quasi-twisted code if $\mathcal{C}$ is invariant under the map $\theta_{\lambda,l}$.
\end{definition}

\begin{definition}
Let $R$ be a ring and $R_i = \frac{R[x]}{\langle x^{m_i}-\lambda_i\rangle},~i=1,2,\ldots,l$ where $m_1,m_2,\ldots,m_l$ are positive integers and $\lambda_1,\lambda_2,\ldots,\lambda_l$ are units in $R$. Then any $R[x]$-submodule of $R_1\times R_2\times\cdots\times R_l$ is called a generalized $(\lambda_1,\lambda_2,\ldots,\lambda_l)$-quasi-twisted code of block length $(m_1,m_2,\ldots,m_l)$.\vspace{0.01 cm}

We can observe that a generalized $(\lambda_1,\lambda_2,\ldots,\lambda_l)$-quasi-twisted code of block length $(m_1,m_2,\ldots,m_l)$, where $\lambda_1=\lambda_2=\cdots=\lambda_l=\lambda,~m_1=m_2=\cdots=m_l=m$, is a $(\lambda,l)$-quasi-twisted code of length $lm$.
\end{definition}

Now, by using the standard inner product (Euclidean inner product), we define the dual of a code, self-orthogonal code, self-dual code and dual containing code.\vspace{0.01 cm}

 \begin{definition}
 Let $R$ be a ring and $\mathcal{C}$ be a linear code of length $n$ over $R$. Then the dual code $\mathcal{C}^\perp$ of code $\mathcal{C}$ is defined as $$\mathcal{C}^\perp =\{v\in R^n \mid v\cdot c=0 \text{ for all } c \in \mathcal{C}\}.$$
 A code $\mathcal{C}$ is called self-orthogonal, dual-containing and self-dual code if it satisfies $\mathcal{C}\subseteq \mathcal{C}^\perp$, $\mathcal{C}^\perp \subseteq \mathcal{C}$ and $\mathcal{C}= \mathcal{C}^\perp$, respectively.
 \end{definition}

 \section{Constacyclic codes over $\mathcal{R}$, $\mathcal{S}$, $\mathcal{R}\mathcal{S}$ and $\mathbb{Z}_p\mathcal{R}\mathcal{S}$}

 In this section, first, we study additive constacyclic codes over $\mathcal{R}$ and $\mathcal{S}$, and then generalize the results over $\mathcal{R}\mathcal{S}$ and $\mathbb{Z}_p\mathcal{R}\mathcal{S}$. Here, our main objective is to find the generators of additive constacyclic codes. We define suitable inner products, and under those inner products, we also find the generators of the dual of additive constacyclic codes.

 \subsection{\textbf{Constacyclic codes over }$\mathbf{\mathcal{R}}$\textbf{ and }$\mathbf{\mathcal{S}}$}\label{Section3.1}

Here, we first study all the units of $\mathcal{R}$ and $\mathcal{S}$, and then we find the generators of constacyclic codes over $\mathcal{R}$ and $\mathcal{S}$ respectively.\vspace{0.2 cm}

Let us denote the group of units of $\mathcal{R}$ and $\mathcal{S}$ by $U(\mathcal{R})$ and $U(\mathcal{S}),$ respectively. Then $U(\mathcal{R}) = \{a+ub\mid a,b\in \mathbb{Z}_p ~\text{and}~ a\neq 0\}$ and $U(\mathcal{S}) = \{a+ub+u^2d\mid a,b,d\in \mathbb{Z}_p ~\text{and}~ a\neq 0\}$. We denote the set of all nonzero elements of $\mathbb{Z}_p$ by $\mathbb{Z}_p^{\ast}$.

Now, we define a few maps.\vspace{0.2 cm}

(a) Define $\eta_0 : \mathcal{R} \rightarrow \mathbb{Z}_p$ by $\eta_0(a+bu) = a$.\vspace{0.2 cm}

(b) Define $\eta_1 : \mathcal{S} \rightarrow \mathbb{Z}_p$ by $\eta_1(a+bu+du^2) = a$.\vspace{0.2 cm}

(c) Define $\eta_2 : \mathcal{S} \rightarrow \mathcal{R}$ by $\eta_2(a+bu+du^2) = a+bu$.\vspace{0.2 cm}

The following lemma shows the interrelation between the units of $\mathbb{Z}_p, \mathcal{R}$ and $\mathcal{S}$.

\begin{lemma}
Let $\mu_1 \in \mathcal{R}$ and $\mu_2\in \mathcal{S}$. Then we have the following.
\begin{enumerate}
\item $\mu_1 \in U(\mathcal{R})$ if and only if $\eta_0(\mu_1) \in \mathbb{Z}_p^{\ast}$;
\item $\mu_2 \in U(\mathcal{S})$ if and only if $\eta_1(\mu_2) \in \mathbb{Z}_p^{\ast}$;
\item $\mu_2 \in U(\mathcal{S})$ if and only if $\eta_2(\mu_2) \in U(\mathcal{R})$.
\end{enumerate}
\end{lemma}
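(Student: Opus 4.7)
The plan is to exploit the chain ring structure of $\mathcal{R}$ and $\mathcal{S}$: in each case the unique maximal ideal is $\langle u \rangle$, so an element is a unit if and only if it lies outside this maximal ideal, which is precisely the condition that its constant term (the image under $\eta_0$ or $\eta_1$) is a nonzero element of $\mathbb{Z}_p$. I would organize the proof as three short parts mirroring the three items of the lemma, handling (1) and (2) by essentially the same construction and then deriving (3) as a bookkeeping corollary.

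For part (1), write $\mu_1 = a + bu$ and argue both directions explicitly. If $a \in \mathbb{Z}_p^{\ast}$, then $a$ has an inverse $a^{-1}$ in $\mathbb{Z}_p$, and a direct calculation shows that $(a+bu)(a^{-1} - a^{-2}bu) = 1$ modulo $u^2$, so $\mu_1$ is a unit. Conversely, if $a = 0$, then $\mu_1 = bu \in \langle u \rangle$, and since $u^2 = 0$ in $\mathcal{R}$, any product $(bu)(c+du) = bcu$ lies in $\langle u \rangle$ and cannot equal $1$. For part (2), the argument is analogous but with one extra term: if $\eta_1(\mu_2) = a \neq 0$, exhibit the inverse of $a + bu + du^2$ using the geometric-series idea $a^{-1}\bigl(1 - a^{-1}(bu + du^2) + a^{-2}(bu + du^2)^2\bigr)$, which terminates because $(bu+du^2)^3 \in \langle u^3 \rangle = 0$; if $a = 0$, then $\mu_2 \in \langle u \rangle$ and the same nilpotence argument blocks invertibility.

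For part (3), I would simply observe that the diagram commutes: $\eta_1 = \eta_0 \circ \eta_2$, since both maps read off the constant term in $\mathbb{Z}_p$. Then $\mu_2 \in U(\mathcal{S})$ iff $\eta_1(\mu_2) \in \mathbb{Z}_p^{\ast}$ by (2), iff $\eta_0(\eta_2(\mu_2)) \in \mathbb{Z}_p^{\ast}$, iff $\eta_2(\mu_2) \in U(\mathcal{R})$ by (1). No new content is needed.

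There is really no significant obstacle here; the lemma is essentially a restatement of the chain-ring structure already asserted in Section~\ref{sec 2}. The only mild care required is to display the explicit inverse in part (2) without slipping a sign, since the nilpotent part has two monomials $bu$ and $du^2$ rather than one. I would write the inverse in closed form and then verify the product equals $1$ by reducing modulo $u^3$ rather than expanding fully and simplifying term by term.
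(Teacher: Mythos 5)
Your proof is correct, and all three computations (the explicit inverse $(a+bu)^{-1}=a^{-1}-a^{-2}bu$, the terminating geometric series in $\mathcal{S}$, and the factorization $\eta_1=\eta_0\circ\eta_2$) check out. The paper in fact states this lemma without any proof, so there is nothing to compare against; your argument via the chain-ring structure (units are exactly the elements outside the maximal ideal $\langle u\rangle$) is the standard one and correctly supplies the omitted justification.
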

Now, we recall one result from \cite{AbhaySir}.
\begin{theorem}\label{Theorem1}
Let $\mathfrak{C}$ be a cyclic code of length $n$ over $\frac{\mathbb{Z}_p[u]}{\langle u^k\rangle}$. If $n$ is relatively prime to $p$ then $\mathfrak{C} = \langle f_0(x)+uf_1(x)+\cdots+u^{k-1}f_{k-1}(x)\rangle$ where $f_0(x), f_1(x),\ldots,f_{k-1}(x)\in \mathbb{Z}_p[x]$ and $f_{k-1}(x)\mid f_{k-2}(x)\mid\cdots\mid f_0(x)\mid(x^n-1)$ in $\mathbb{Z}_p[x]$.
\end{theorem}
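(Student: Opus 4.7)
The plan is to identify cyclic codes of length $n$ over $R_k := \mathbb{Z}_p[u]/\langle u^k\rangle$ with ideals of $A := R_k[x]/\langle x^n-1\rangle$, and to exploit the decomposition induced by the Chinese Remainder Theorem. Since $\gcd(n,p)=1$, the polynomial $x^n-1$ is separable over $\mathbb{Z}_p$, so it factors as $x^n-1 = g_1(x)g_2(x)\cdots g_t(x)$ into pairwise coprime monic irreducibles in $\mathbb{Z}_p[x]$. Applying CRT in both variables gives
\[
A \;\cong\; \prod_{i=1}^{t} E_i[u]/\langle u^k\rangle, \qquad E_i := \mathbb{Z}_p[x]/\langle g_i(x)\rangle,
\]
where each factor $E_i[u]/\langle u^k\rangle$ is a finite chain ring with ideal lattice $0 \subsetneq \langle u^{k-1}\rangle \subsetneq \cdots \subsetneq \langle u\rangle \subsetneq E_i[u]/\langle u^k\rangle$. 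Hence every ideal of $A$, and in particular $\mathfrak{C}$, corresponds under this isomorphism to a product $\prod_{i=1}^{t}\langle u^{j_i}\rangle$ for a unique tuple $(j_1,\ldots,j_t)\in\{0,1,\ldots,k\}^t$.

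Next I would construct the required generator by pulling this chain-ring data back to $\mathbb{Z}_p[x]$. Set
\[
f_j(x) \;:=\; \prod_{\,i\,:\,j_i\,>\,j\,} g_i(x) \;\in\; \mathbb{Z}_p[x] \qquad (j=0,1,\ldots,k-1).
\]
Because the index set $\{i:j_i>j\}$ shrinks as $j$ increases, the divisibility chain $f_{k-1}\mid f_{k-2}\mid\cdots\mid f_0\mid(x^n-1)$ holds automatically, with $f_0$ dividing the full product $\prod_i g_i = x^n-1$. To verify that $F(x) := \sum_{j=0}^{k-1} u^j f_j(x)$ generates $\mathfrak{C}$, I would examine its $i$th CRT component: by construction $g_i\mid f_j$ for $j<j_i$ while $g_i\nmid f_{j_i}$, so the component becomes $u^{j_i}\cdot(\text{unit in }E_i[u]/\langle u^k\rangle)$, which generates $\langle u^{j_i}\rangle$ in the $i$th factor. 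The degenerate cases $j_i=0$ (the whole factor) and $j_i=k$ (the zero ideal) fit seamlessly: in the former $g_i$ appears in none of the $f_j$'s, in the latter it appears in all of them.

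The main obstacle is not the CRT decomposition itself but producing polynomials \emph{in $\mathbb{Z}_p[x]$} whose $u$-coefficients have precisely the required vanishing pattern with respect to each $g_i$; assembling $f_j$ as a product of the $\mathbb{Z}_p$-irreducible factors of $x^n-1$ rather than working separately over each extension $E_i$ is what accomplishes this and simultaneously secures the divisibility chain. An alternative route is induction on $k$: the base case $k=1$ is the classical factorization of cyclic codes over the field $\mathbb{Z}_p$, and the step from $R_{k-1}$ to $R_k$ proceeds via the short exact sequence $0\to u^{k-1}A\to A\to R_{k-1}[x]/\langle x^n-1\rangle\to 0$, lifting a generator modulo $u^{k-1}$ and appending the $u^{k-1}f_{k-1}$ term; but the CRT approach yields a clean single-shot derivation.
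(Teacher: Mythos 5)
The paper does not prove this statement at all: Theorem~\ref{Theorem1} is imported verbatim from the reference \cite{AbhaySir} with the remark ``we recall one result,'' so there is no in-paper argument to compare against. Your CRT proof is correct and complete on its own terms. The two points a referee would want spelled out are both fine: the irreducible factors $g_i$ of $x^n-1$ over $\mathbb{Z}_p$ remain pairwise comaximal in $R_k[x]$ because a B\'ezout identity $ag_i+bg_j=1$ in $\mathbb{Z}_p[x]$ persists in $R_k[x]$ (here it matters that $R_k$ has equal characteristic, so $\mathbb{Z}_p\subset R_k$ and no Hensel lifting is needed), and in each local factor $E_i[u]/\langle u^k\rangle$ the component of $F$ equals $u^{j_i}\bigl(\bar f_{j_i}+u(\cdots)\bigr)$ with $\bar f_{j_i}\neq 0$ in the field $E_i$, hence a unit times $u^{j_i}$, which indeed generates $\langle u^{j_i}\rangle$; your choice $f_j=\prod_{i:\,j_i>j}g_i$ then delivers the divisibility chain for free. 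This is the standard ``semisimple plus chain-ring fibres'' derivation; the cited source \cite{AbhaySir} reaches the same generator form by analysing the tower of torsion codes $\mathrm{Tor}_j(\mathfrak{C})=\{f\bmod u:\ u^jf\in\mathfrak{C}\}$ and lifting their generators, which works more generally but is lengthier, whereas your single-shot CRT argument is cleaner in the coprime case $\gcd(n,p)=1$ that the theorem actually assumes. Your sketched inductive alternative via $0\to u^{k-1}A\to A\to R_{k-1}[x]/\langle x^n-1\rangle\to 0$ would need more care (lifting a generator modulo $u^{k-1}$ and arranging the divisibility conditions is exactly where the work lies), so it is right that you made the CRT route the primary argument.
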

The following theorem gives the generator of a constacyclic code over $\mathcal{R}$.

\begin{theorem}\label{Theorem2}
Let $\mathcal{R} = \frac{\mathbb{Z}_p[u]}{\langle u^2\rangle}$ and $\mu_1\in U(\mathcal{R})$. Then for any positive integer $r$ satisfying $\gcd(p,r)=1$ and $r\equiv 1 ~(\text{mod}~ \text{ord}{(\mu_1)})$, every $\mu_1$-constacyclic code $\mathfrak{C}$ of length $r$ over $\mathcal{R}$ is given by $$\mathfrak{C} = \langle f_0(x)+uf_1(x)\rangle$$ where $f_1(x)\mid f_0(x)\mid (x^r-\mu_1)$ over $\mathcal{R}$.
\end{theorem}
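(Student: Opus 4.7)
The strategy is to reduce Theorem 2 to the cyclic case in Theorem 1 (with $k=2$) via a ring isomorphism between $\mathcal{R}[x]/\langle x^r-1\rangle$ and $\mathcal{R}[x]/\langle x^r-\mu_1\rangle$. The hypothesis $r\equiv 1 \pmod{\mathrm{ord}(\mu_1)}$ is used decisively in a single place: it gives $\mu_1^r = \mu_1$, and therefore $\mu_1^{-r}=\mu_1^{-1}$, which is the only algebraic fact driving the whole argument.

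First I would define $\Phi:\mathcal{R}[x]/\langle x^r-1\rangle \to \mathcal{R}[x]/\langle x^r-\mu_1\rangle$ by $x\mapsto \mu_1^{-1}x$. Well-definedness is immediate from
\[
\Phi(x^r-1) \;=\; \mu_1^{-r}x^r-1 \;\equiv\; \mu_1^{-1}\cdot\mu_1-1 \;=\; 0 \pmod{x^r-\mu_1}.
\]
The map $\Psi:x\mapsto \mu_1 x$ in the other direction is a two-sided inverse, so $\Phi$ is a ring isomorphism. Given a $\mu_1$-constacyclic code $\mathfrak{C}$ of length $r$ over $\mathcal{R}$, the pull-back $\Psi(\mathfrak{C})$ is a cyclic code over $\mathcal{R}$, and Theorem \ref{Theorem1} supplies generators $g_0(x)+ug_1(x)$ with $g_1(x)\mid g_0(x)\mid x^r-1$. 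Applying $\Phi$ gives $\mathfrak{C}=\langle f_0(x)+uf_1(x)\rangle$ with $f_i(x):=g_i(\mu_1^{-1}x)$.

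The last thing to check is that the divisibility chain transports. From $g_0(x)h(x)=x^r-1$ in $\mathcal{R}[x]$, the substitution $x\mapsto\mu_1^{-1}x$ yields $f_0(x)\,h(\mu_1^{-1}x)=\mu_1^{-r}x^r-1=\mu_1^{-r}(x^r-\mu_1)$, and multiplying by the unit $\mu_1^r$ gives $f_0(x)\mid x^r-\mu_1$ in $\mathcal{R}[x]$; the relation $f_1\mid f_0$ follows by the same substitution applied to any factorisation $g_0(x)=g_1(x)\ell(x)$. The main subtlety, and the only place where care is needed, is keeping track of the unit factors in $U(\mathcal{R})$ that arise from the change of variable; this is routine because divisibility in $\mathcal{R}[x]$ is defined up to units in $\mathcal{R}$. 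Beyond this bookkeeping, the argument is short: the real work sits in recognising that the order condition on $\mu_1$ is exactly what makes $x^r-\mu_1$ behave like $x^r-1$ after a linear substitution.
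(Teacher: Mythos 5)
Your proposal is correct and follows essentially the same route as the paper: the paper's proof also defines the substitution isomorphism $\rho_{\mathcal{R}}\colon \mathcal{R}[x]/\langle x^r-1\rangle \to \mathcal{R}[x]/\langle x^r-\mu_1\rangle$, $f(x)\mapsto f(\mu_1^{-1}x)$, uses $r\equiv 1\ (\mathrm{mod}\ \mathrm{ord}(\mu_1))$ to identify $(\mu_1^{-1}x)^r-1$ with a unit multiple of $x^r-\mu_1$, pulls the constacyclic code back to a cyclic code, and applies Theorem \ref{Theorem1}. Your additional remarks on transporting the divisibility chain make explicit what the paper leaves as ``one can easily verify.''
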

\begin{proof}
We define a map $\rho_{\mathcal{R}} : \frac{\mathcal{R}[x]}{\langle x^r-1\rangle} \rightarrow \frac{\mathcal{R}[x]}{\langle x^r-\mu_1\rangle}$ by $\rho_{\mathcal{R}} (f(x)) = f(\mu_1^{-1}x)$. Let $a(x), b(x)$ and $h(x)$ be three polynomials in $\mathcal{R}[x]$ such that $a(x)-b(x) = (x^r-1)h(x)$. Now, $a(x)-b(x) = (x^r-1)h(x) \Leftrightarrow a(\mu_1^{-1}x)-b(\mu_1^{-1}x) = ((\mu_1^{-1}x)^r-1)h(\mu_1^{-1}x) \Leftrightarrow a(\mu_1^{-1}x)-b(\mu_1^{-1}x) = \mu_1^{-1}(x^r-\mu_1)h(\mu_1^{-1}x)$. This shows that $\rho_{\mathcal{R}}$ is an isomorphism of rings and if $I$ is an ideal of $\frac{\mathcal{R}[x]}{\langle x^r-1\rangle}$ then $\rho_{\mathcal{R}}(I)$ is an ideal of $\frac{\mathcal{R}[x]}{\langle x^r-\mu_1\rangle}$. Thus if $\mathfrak{C}$ is a $\mu_1$-constacyclic code over $\mathcal{R},$ then $\rho_\mathcal{R}^{-1}(\mathfrak{C})$ is a cyclic code over $\mathcal{R}$ and by Theorem $\ref{Theorem1}$, $\rho_\mathcal{R}^{-1}(\mathfrak{C}) = \langle a_0(x)+ua_1(x)\rangle$ where $a_1(x)\mid a_0(x)$ and $a_0(x)\mid (x^r-1)$ over $\mathcal{R}$. Hence, $\mathfrak{C} = \langle a_0(\mu_1^{-1}x)+ua_1(\mu_1^{-1}x)\rangle$. Consider $f_0(x) = a_0(\mu_1^{-1}x)$ and $f_1(x) = a_1(\mu_1^{-1}x)$. Then one can easily verify that $f_1(x)\mid f_0(x)\mid (x^r-\mu_1)$ over $\mathcal{R}$.
\end{proof}
Next, we find the generator of a constacyclic code over $\mathcal{S}$.

\begin{theorem}
Let $\mathcal{S} = \frac{\mathbb{Z}_p[u]}{\langle u^3\rangle}$ and $\mu_2\in U(\mathcal{S})$. Then for any positive integer $s$ satisfying $\gcd(p,s)=1$ and $s\equiv 1 ~(\text{mod}~ \text{ord}{(\mu_2)})$, every $\mu_2$-constacyclic code $\mathfrak{C}$ of length $s$ over $\mathcal{S}$ is given by $$\mathfrak{C} = \langle f_0(x)+uf_1(x)+u^2f_2(x)\rangle$$ where $f_2(x)\mid f_1(x)\mid f_0(x)\mid (x^s-\mu_2)$ over $\mathcal{S}$.
\end{theorem}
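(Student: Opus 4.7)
The plan is to mimic the proof strategy used for Theorem \ref{Theorem2}, since the only new ingredient is an extra $u^2$-term. First I would define the substitution map $\rho_{\mathcal{S}} : \frac{\mathcal{S}[x]}{\langle x^s-1\rangle} \to \frac{\mathcal{S}[x]}{\langle x^s-\mu_2\rangle}$ by $\rho_{\mathcal{S}}(f(x)) = f(\mu_2^{-1}x)$ and verify that it is a well-defined isomorphism of rings. Well-definedness reduces to checking that $(\mu_2^{-1}x)^s - 1$ lies in $\langle x^s-\mu_2\rangle$; since $s\equiv 1\,(\text{mod}~\text{ord}(\mu_2))$ we have $\mu_2^{s-1}=1$, hence $\mu_2^{-s}x^s - 1 = \mu_2^{-s}(x^s-\mu_2)$, which settles it. Invertibility of $\rho_{\mathcal{S}}$ and the fact that it sends ideals to ideals are then immediate.

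Next, given a $\mu_2$-constacyclic code $\mathfrak{C}$ of length $s$ over $\mathcal{S}$, the pullback $\rho_{\mathcal{S}}^{-1}(\mathfrak{C})$ is an ideal of $\frac{\mathcal{S}[x]}{\langle x^s-1\rangle}$, i.e., a cyclic code of length $s$ over $\mathcal{S}$. Since $\mathcal{S} = \frac{\mathbb{Z}_p[u]}{\langle u^3\rangle}$ and $\gcd(p,s)=1$, Theorem \ref{Theorem1} with $k=3$ applies and yields
\[
\rho_{\mathcal{S}}^{-1}(\mathfrak{C}) = \langle a_0(x) + u a_1(x) + u^2 a_2(x)\rangle,
\]
with $a_2(x)\mid a_1(x)\mid a_0(x)\mid (x^s-1)$ in $\mathbb{Z}_p[x]$. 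Applying $\rho_{\mathcal{S}}$ and setting $f_i(x) = a_i(\mu_2^{-1}x)$ for $i=0,1,2$ produces the claimed generator $\mathfrak{C} = \langle f_0(x) + u f_1(x) + u^2 f_2(x)\rangle$.

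Finally I would transport the divisibility chain across $\rho_{\mathcal{S}}$. Substitution $x\mapsto \mu_2^{-1}x$ is a ring automorphism of $\mathcal{S}[x]$, so it preserves divisibility, giving $f_2(x)\mid f_1(x)\mid f_0(x)$; the outermost divisibility $f_0(x)\mid (x^s-\mu_2)$ follows from $a_0(x)\mid (x^s-1)$ together with the identity $a_0(\mu_2^{-1}x)\cdot q(\mu_2^{-1}x) = \mu_2^{-s}(x^s-\mu_2)$ whenever $a_0(x)q(x) = x^s-1$, after absorbing the unit $\mu_2^{-s}$. The only step that requires genuine care is the well-definedness of $\rho_{\mathcal{S}}$: one must verify that the hypothesis $s\equiv 1\,(\text{mod}~\text{ord}(\mu_2))$ is exactly what guarantees $\mu_2^{-s}x^s - 1 \in \langle x^s-\mu_2\rangle$; everything else is an essentially formal lift of the $k=2$ case in Theorem \ref{Theorem2} to $k=3$.
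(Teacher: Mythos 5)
Your proposal follows essentially the same route as the paper's own proof: the paper likewise defines $\rho_{\mathcal{S}}(f(x)) = f(\mu_2^{-1}x)$, verifies it is a well-defined ring isomorphism using $\mu_2^{s-1}=1$ (writing the factor as $\mu_2^{-1}(x^s-\mu_2)$, which equals your $\mu_2^{-s}(x^s-\mu_2)$), pulls $\mathfrak{C}$ back to a cyclic code, applies Theorem \ref{Theorem1} with $k=3$, and sets $f_i(x)=a_i(\mu_2^{-1}x)$. The argument is correct and matches the paper's.
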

\begin{proof}
We define a map $\rho_{\mathcal{S}} : \frac{\mathcal{S}[x]}{\langle x^s-1\rangle} \rightarrow \frac{\mathcal{S}[x]}{\langle x^s-\mu_2\rangle}$ by $\rho_{\mathcal{S}} (f(x)) = f(\mu_2^{-1}x)$. Let $a(x), b(x)$ and $h(x)$ be three polynomials in $\mathcal{S}[x]$ such that $a(x)-b(x) = (x^s-1)h(x)$. Now, $a(x)-b(x) = (x^s-1)h(x) \Leftrightarrow a(\mu_2^{-1}x)-b(\mu_2^{-1}x) = ((\mu_2^{-1}x)^s-1)h(\mu_2^{-1}x) \Leftrightarrow a(\mu_2^{-1}x)-b(\mu_2^{-1}x) = \mu_2^{-1}(x^s-\mu_2)h(\mu_2^{-1}x)$. This shows that $\rho_\mathcal{S}$ is an isomorphism of rings and if $I$ is an ideal of $\frac{\mathcal{S}[x]}{\langle x^s-1\rangle}$ then $\rho_\mathcal{S}(I)$ is an ideal of $\frac{\mathcal{S}[x]}{\langle x^s-\mu_2\rangle}$. Thus if $\mathfrak{C}$ is a $\mu_2$-constacyclic code over $\mathcal{S}$ then $\rho_\mathcal{S}^{-1}(\mathfrak{C})$ is a cyclic code over $\mathcal{S}$ and by Theorem $\ref{Theorem1}$, $\rho_\mathcal{S}^{-1}(\mathfrak{C}) = \langle a_0(x)+ua_1(x)+u^2a_2(x)\rangle$ where $a_2(x)\mid a_1(x)\mid a_0(x)$ and $a_0(x)\mid (x^s-1)$ over $\mathcal{S}$. Hence $\mathfrak{C} = \langle a_0(\mu_2^{-1}x)+ua_1(\mu_2^{-1}x)+u^2a_2(\mu_2^{-1}x)\rangle$. Consider $f_0(x) = a_0(\mu_2^{-1}x), f_1(x) = a_1(\mu_2^{-1}x)$ and $f_2(x) = a_2(\mu_2^{-1}x)$. Then one can easily verify that $f_2(x)\mid f_1(x)\mid f_0(x)\mid (x^s-\mu_1)$ over $\mathcal{S}$.
\end{proof}

\subsection{\textbf{Constacyclic codes over }$\mathbf{\mathbb{Z}_p\mathcal{R}\mathcal{S}}$}

Let $\mu_0\in \mathbb{Z}_p^{\ast},~ \mu_1\in U(\mathcal{R}),~ \mu_2\in U(\mathcal{S})$ and $q\equiv 1 (\text{mod ord} (\mu_0)),~ r\equiv 1 (\text{mod ord} (\mu_1)),~ s\equiv 1 (\text{mod ord} (\mu_2)),~ \gcd{(p,q)} = 1,~ \gcd{(p,r)} = 1$ and $\gcd{(p,s)} = 1$. From here onwards, we will continue with these conditions. One can easily check that the three maps $\eta_0, \eta_1$ and $\eta_2$, defined in section \ref{Section3.1}, are ring epimorphisms and using these maps, we can define the module structures of $\frac{\mathbb{Z}_p[x]}{\langle x^q-\mu_0\rangle}\times \frac{\mathcal{R}[x]}{\langle x^r-\mu_1\rangle}$, $\frac{\mathcal{R}[x]}{\langle x^r-\mu_1\rangle}\times \frac{\mathcal{S}[x]}{\langle x^s-\mu_2\rangle}$ and $\frac{\mathbb{Z}_p[x]}{\langle x^q-\mu_0\rangle}\times \frac{\mathcal{R}[x]}{\langle x^r-\mu_1\rangle}\times \frac{\mathcal{S}[x]}{\langle x^s-\mu_2\rangle}$. Let us denote $\mathcal{M}_{0,1}^{q,r} = \frac{\mathbb{Z}_p[x]}{\langle x^q-\mu_0\rangle}\times \frac{\mathcal{R}[x]}{\langle x^r-\mu_1\rangle}$, $\mathcal{M}_{1,2}^{r,s} = \frac{\mathcal{R}[x]}{\langle x^r-\mu_1\rangle}\times \frac{\mathcal{S}[x]}{\langle x^s-\mu_2\rangle}$ and $\mathcal{M}_{0,1,2}^{q,r,s} = \frac{\mathbb{Z}_p[x]}{\langle x^q-\mu_0\rangle}\times \frac{\mathcal{R}[x]}{\langle x^r-\mu_1\rangle}\times \frac{\mathcal{S}[x]}{\langle x^s-\mu_2\rangle}$. Then $\mathcal{M}_{0,1}^{q,r}$ has an $\mathcal{R}[x]$-module structure with the scalar multiplication defined as $$g(x)\cdot (a(x),b(x)) = (\eta_0(g(x))a(x),g(x)b(x)).$$ Similarly, $\mathcal{M}_{1,2}^{r,s}$ has an $\mathcal{S}[x]$-module structure with the scalar multiplication defined as $$h(x)\cdot (b(x),d(x)) = (\eta_1(h(x))b(x),h(x)d(x))$$ and $\mathcal{M}_{0,1,2}^{q,r,s}$ has an $\mathcal{S}[x]$-module structure with the scalar multiplication defined as $$h(x)\cdot (a(x),b(x),d(x)) = (\eta_1(h(x))a(x),\eta_2(h(x))b(x),h(x)d(x)),$$ where $g(x)\in \mathcal{R}[x],~ h(x)\in \mathcal{S}[x],~ a(x)\in \frac{\mathbb{Z}_p[x]}{\langle x^q-\mu_0\rangle},~ b(x)\in \frac{\mathcal{R}[x]}{\langle x^r-\mu_1\rangle},~ d(x)\in \frac{\mathcal{S}[x]}{\langle x^s-\mu_2\rangle}$. Note that if $g(x) = \displaystyle\sum_{j=0}^l g_jx^j$ then $\eta_0(g(x)) = \displaystyle\sum_{j=0}^l \eta_0(g_j)x^j$ and if $h(x) = \displaystyle\sum_{j=0}^l h_jx^j,$ then $\eta_i(h(x)) = \displaystyle\sum_{j=0}^l \eta_i(h_j)x^j$ for $i=1,2$.

\begin{definition}
The operator $T_{\mu_0,\mu_1} : \mathbb{Z}_p^q \mathcal{R}^r\rightarrow \mathbb{Z}_p^q \mathcal{R}^r$ defined by $T_{\mu_0,\mu_1}(a_0,a_1,\ldots,a_{q-1}\vert b_0,b_1,\ldots,b_{r-1})$ $= (\mu_0a_{q-1},a_0,a_1,\ldots,a_{q-2}\vert \mu_1b_{r-1},b_0,b_1,\ldots,b_{r-2})$ is called the $(\mu_0,\mu_1)$-constacyclic shift operator. A $\mathbb{Z}_p\mathcal{R}$-additive code $\mathfrak{C}$ having block length $(q,r)$ is said to be $(\mu_0,\mu_1)$-constacyclic code if $T_{\mu_0,\mu_1}(\mathfrak{C})\subset \mathfrak{C}$. Similarly, one can define a $(\mu_1,\mu_2)$-constacyclic code of block length $(r,s)$ over  $\mathcal{R}\mathcal{S}$ with the $(\mu_1,\mu_2)$-constacyclic shift operator $T_{\mu_1,\mu_2}$ defined over $\mathcal{R}^r \mathcal{S}^s$ and a $(\mu_0,\mu_1,\mu_2)$-constacyclic code of block length $(q,r,s)$ over $\mathbb{Z}_p \mathcal{R}\mathcal{S}$ with the $(\mu_0,\mu_1,\mu_2)$-constacyclic shift operator $T_{\mu_0,\mu_1,\mu_2}$ defined over $\mathbb{Z}_p^q \mathcal{R}^r\mathcal{S}^s$, respectively. For $q=0$ and $q=0,r=0$, it is said to be $(\mu_1,\mu_2)$-constacyclic code over $\mathcal{R}\mathcal{S}$ and $\mu_2$-constacylic code over $\mathcal{S},$ respectively.
\end{definition}

Let $\zeta_0 : \mathbb{Z}_p^q \mathcal{R}^r\rightarrow \mathcal{M}_{0,1}^{q,r}$ be defined by $\zeta_0(c) = c(x)$ where $c=(a_0,a_1,\ldots,a_{q-1}\vert b_0,b_1,\ldots,b_{r-1})$ and $c(x)=(a(x),b(x))$ with $a(x) = \displaystyle\sum_{i=0}^{q-1}a_ix^i,~ b(x) = \displaystyle\sum_{i=0}^{r-1}b_ix^i$. It is easy to observe that $\zeta_0$ is an isomorphism. Now, we have the following lemma.

\begin{lemma}\label{Lemma4.1}
The isomorphism $\zeta_0$ maps the constacyclic shift $T_{\mu_0,\mu_1}$ of an element $c$ of $\mathbb{Z}_p\mathcal{R}$ to the multiplication of its image by $x$, i.e.,$$\zeta_0(T_{\mu_0,\mu_1}(c)) = x\cdot \zeta_0(c).$$
\end{lemma}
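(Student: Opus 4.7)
The plan is to verify the identity by direct computation on both sides and observing that the reduction modulo $x^q-\mu_0$ (resp.\ $x^r-\mu_1$) in the quotient ring implements precisely the constacyclic shift.

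First I would write out $T_{\mu_0,\mu_1}(c)$ explicitly from the definition, namely
$T_{\mu_0,\mu_1}(c) = (\mu_0 a_{q-1},a_0,\ldots,a_{q-2}\mid \mu_1 b_{r-1},b_0,\ldots,b_{r-2}),$
and then apply $\zeta_0$ coordinatewise to obtain the pair of polynomials
$\bigl(\mu_0 a_{q-1} + a_0 x + \cdots + a_{q-2}x^{q-1},\; \mu_1 b_{r-1} + b_0 x + \cdots + b_{r-2}x^{r-1}\bigr).$

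Next I would compute the right-hand side $x\cdot\zeta_0(c)$ using the $\mathcal{R}[x]$-module structure of $\mathcal{M}_{0,1}^{q,r}$ defined just before the lemma. Since $x \in \mathcal{R}[x]$ and $\eta_0$ fixes the indeterminate, $\eta_0(x)=x$; hence the scalar action gives $(x\,a(x),\, x\,b(x))$, evaluated in $\frac{\mathbb{Z}_p[x]}{\langle x^q-\mu_0\rangle}$ and $\frac{\mathcal{R}[x]}{\langle x^r-\mu_1\rangle}$ respectively. Multiplying term by term produces $a_0 x + a_1 x^2 + \cdots + a_{q-1}x^q$ in the first slot and $b_0 x + \cdots + b_{r-1}x^r$ in the second. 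Using $x^q \equiv \mu_0$ modulo $\langle x^q - \mu_0\rangle$ and $x^r \equiv \mu_1$ modulo $\langle x^r - \mu_1\rangle$, these reduce to exactly the two polynomials obtained above.

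Comparing the two sides then yields $\zeta_0(T_{\mu_0,\mu_1}(c)) = x\cdot \zeta_0(c)$, as required. There is no real obstacle here; the lemma is essentially a bookkeeping statement that the module structure on $\mathcal{M}_{0,1}^{q,r}$ was set up precisely so that multiplication by $x$ encodes the $(\mu_0,\mu_1)$-constacyclic shift. The only subtle point worth flagging in the write-up is that the $\mathbb{Z}_p$-component uses the reduction $\eta_0$ so that the action of $x\in\mathcal{R}[x]$ makes sense on $\frac{\mathbb{Z}_p[x]}{\langle x^q-\mu_0\rangle}$; since $\eta_0(x)=x$, this causes no discrepancy, and the same argument will serve verbatim for the analogous maps associated with $\mathcal{M}_{1,2}^{r,s}$ and $\mathcal{M}_{0,1,2}^{q,r,s}$ later in the paper.
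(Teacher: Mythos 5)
Your computation is correct and is exactly the direct verification one would expect: the paper in fact states Lemma \ref{Lemma4.1} without proof, treating it as an immediate consequence of how the $\mathcal{R}[x]$-module structure on $\mathcal{M}_{0,1}^{q,r}$ was defined, and your write-up supplies precisely the omitted bookkeeping (including the relevant point that $\eta_0(x)=x$, so the twisted scalar action does not disturb the $\mathbb{Z}_p$-component). Nothing further is needed.
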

Using the above lemma, we have two equivalent statements.
\begin{proposition}
The following two statements are equivalent:
\begin{enumerate}
\item The $\mathbb{Z}_p\mathcal{R}$-additive code $\mathfrak{C}$ having block length $(q,r)$ is a $(\mu_0,\mu_1)$-constacyclic code.
\item $\zeta_0(\mathfrak{C})$ is a submodule of $\mathcal{M}_{0,1}^{q,r}$ over $\mathcal{R}[x]$.
\end{enumerate}
\end{proposition}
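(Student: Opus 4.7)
The plan is to use Lemma \ref{Lemma4.1} as the bridge between the two conditions, noting that an $\mathcal{R}[x]$-submodule structure on $\mathcal{M}_{0,1}^{q,r}$ decomposes into the underlying $\mathcal{R}$-submodule structure (closure under addition and under scalar multiplication by constants in $\mathcal{R}$) together with closure under multiplication by the indeterminate $x$. Since $\zeta_0$ is a bijection that respects addition and that intertwines the $\mathcal{R}$-scalar actions on both sides (the scalar action on $\mathbb{Z}_p^q\mathcal{R}^r$ uses $\eta_0$ on the $\mathbb{Z}_p$ block, which is precisely how the $\mathcal{R}[x]$-action on $\mathcal{M}_{0,1}^{q,r}$ reduces scalars in $\mathcal{R}$ to act on the $\mathbb{Z}_p[x]$-component), the content of the proposition really concerns only the $x$-action, which is governed by Lemma \ref{Lemma4.1}.

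For $(1) \Rightarrow (2)$, I would assume $\mathfrak{C}$ is $(\mu_0,\mu_1)$-constacyclic, so it is an $\mathcal{R}$-submodule of $\mathbb{Z}_p^q\mathcal{R}^r$ with $T_{\mu_0,\mu_1}(\mathfrak{C}) \subseteq \mathfrak{C}$. Transporting via $\zeta_0$, the set $\zeta_0(\mathfrak{C})$ is automatically closed under addition and under the $\mathcal{R}$-scalar action on $\mathcal{M}_{0,1}^{q,r}$. To upgrade this to an $\mathcal{R}[x]$-submodule, I would show that for every $c \in \mathfrak{C}$ and every $k \geq 0$, $x^k \cdot \zeta_0(c) \in \zeta_0(\mathfrak{C})$. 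The base case $k=1$ is exactly Lemma \ref{Lemma4.1} combined with $T_{\mu_0,\mu_1}(c) \in \mathfrak{C}$; the inductive step follows from $x^{k+1}\cdot \zeta_0(c) = x \cdot (x^k \cdot \zeta_0(c)) = \zeta_0(T_{\mu_0,\mu_1}^{k+1}(c))$. Combining this with closure under $\mathcal{R}$-scalars and addition yields closure under arbitrary $g(x) \in \mathcal{R}[x]$.

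For $(2) \Rightarrow (1)$, I would assume $\zeta_0(\mathfrak{C})$ is an $\mathcal{R}[x]$-submodule of $\mathcal{M}_{0,1}^{q,r}$. In particular, it is an $\mathcal{R}$-submodule (taking $g(x)$ constant), so pulling back through the isomorphism $\zeta_0$ shows $\mathfrak{C}$ is an $\mathcal{R}$-submodule of $\mathbb{Z}_p^q\mathcal{R}^r$. Moreover, closure under multiplication by $x$ gives, for every $c \in \mathfrak{C}$, that $\zeta_0(T_{\mu_0,\mu_1}(c)) = x \cdot \zeta_0(c) \in \zeta_0(\mathfrak{C})$, and injectivity of $\zeta_0$ then forces $T_{\mu_0,\mu_1}(c) \in \mathfrak{C}$. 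Hence $\mathfrak{C}$ is $(\mu_0,\mu_1)$-constacyclic.

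I do not anticipate a genuine obstacle here, as the proposition is a standard polynomial-representation equivalence; the main subtlety to check carefully is the compatibility of scalar multiplication by elements of $\mathcal{R}$ on the two sides, which hinges on the fact that the module action on $\mathcal{M}_{0,1}^{q,r}$ is defined through $\eta_0$ and matches the scalar action on $\mathbb{Z}_p^q\mathcal{R}^r$ componentwise. Once this bookkeeping is in place, Lemma \ref{Lemma4.1} carries the argument.
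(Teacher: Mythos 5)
Your proposal is correct and follows essentially the same route as the paper: both hinge on Lemma \ref{Lemma4.1} to identify the $x$-action with the constacyclic shift, and then extend to all of $\mathcal{R}[x]$ via the $\mathcal{R}$-submodule structure. The paper's proof is just a terser version of your argument (it compresses your induction on $x^k$ and the scalar-compatibility check into a single chain of equivalences), so there is no substantive difference.
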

\begin{proof}
Let $\mathfrak{C}$ be an $\mathcal{R}$-submodule of $\mathbb{Z}_p^q \mathcal{R}^r$. Then by Lemma \ref{Lemma4.1}, $T_{\mu_0,\mu_1}(\mathfrak{C})\subset \mathfrak{C}\Leftrightarrow x\cdot\zeta_0(\mathfrak{C})\subset\zeta_0(\mathfrak{C})\Leftrightarrow f(x)\cdot\zeta_0(\mathfrak{C})\subset\zeta_0(\mathfrak{C})$ for all $f(x)\in\mathcal{R}[x]$.
\end{proof}
Note that in a similar way as above, a $(\mu_1,\mu_2)$-constacyclic code over $\mathcal{R}\mathcal{S}$ and a $(\mu_0,\mu_1,\mu_2)$-constacyclic code over $\mathbb{Z}_p\mathcal{R}\mathcal{S}$ can be considered as an $\mathcal{S}[x]$-submodule of $\mathcal{M}_{1,2}^{r,s}$ and an $\mathcal{S}[x]$-submodule of $\mathcal{M}_{0,1,2}^{q,r,s}$, respectively.\vspace{0.1 cm}

In the following theorem, we find the generator of a $(\mu_1,\mu_2)$-constacyclic code over $\mathcal{R}\mathcal{S}$.
\begin{theorem}\label{Theorem4}
Any $(\mu_1,\mu_2)$-constacyclic code $\mathfrak{C}$ of block length $(r,s)$ over $\mathcal{R}\mathcal{S}$ can be given as $$\mathfrak{C} = \langle (g_0(x)+ug_1(x),0),(l(x),h_0(x)+uh_1(x)+u^2h_2(x))\rangle$$ where $g_0(x),g_1(x),h_0(x),h_1(x),h_2(x)$ are polynomials over $\mathbb{Z}_p$ satisfying $g_1(x)\mid g_0(x)\mid (x^r-\mu_1)$ over $\mathcal{R}$, $h_2(x)\mid h_1(x)\mid h_0(x)\mid (x^s-\mu_2)$ over $\mathcal{S}$ and $l(x)\in\mathcal{R}[x]$ is such that $(l(x),h_0(x)+uh_1(x)+u^2h_2(x))\in\mathfrak{C}$.
\end{theorem}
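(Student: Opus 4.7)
The plan is to study $\mathfrak{C}$ via the projection $\pi:\mathfrak{C}\to \frac{\mathcal{S}[x]}{\langle x^s-\mu_2\rangle}$ onto the second coordinate, and build up a set of two generators in the standard way from a short-exact-sequence decomposition. Since scalar multiplication by $h(x)\in\mathcal{S}[x]$ acts on the second coordinate by ordinary multiplication, $\pi$ is an $\mathcal{S}[x]$-module homomorphism, so its image $\pi(\mathfrak{C})$ is an $\mathcal{S}[x]$-submodule of $\frac{\mathcal{S}[x]}{\langle x^s-\mu_2\rangle}$, i.e.\ a $\mu_2$-constacyclic code of length $s$ over $\mathcal{S}$. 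By the $\mathcal{S}$-analogue of Theorem~\ref{Theorem2}, there exist $h_0,h_1,h_2\in\mathbb{Z}_p[x]$ with $h_2\mid h_1\mid h_0\mid (x^s-\mu_2)$ over $\mathcal{S}$ such that $\pi(\mathfrak{C})=\langle h_0(x)+uh_1(x)+u^2h_2(x)\rangle$. Fix any preimage and call it $(l(x),\,h_0(x)+uh_1(x)+u^2h_2(x))\in\mathfrak{C}$.

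Next, I would analyze $\ker\pi$. Every element has the form $(a(x),0)$, and the scalar action gives $h(x)\cdot(a(x),0)=(\eta_2(h(x))a(x),0)$. As $h(x)$ runs over $\mathcal{S}[x]$, $\eta_2(h(x))$ runs over all of $\mathcal{R}[x]$ because $\eta_2:\mathcal{S}\to\mathcal{R}$ is surjective, so the set $J=\{a(x):(a(x),0)\in\mathfrak{C}\}$ is an $\mathcal{R}[x]$-submodule of $\frac{\mathcal{R}[x]}{\langle x^r-\mu_1\rangle}$, that is, a $\mu_1$-constacyclic code of length $r$ over $\mathcal{R}$. Theorem~\ref{Theorem2} then yields $J=\langle g_0(x)+ug_1(x)\rangle$ with $g_0,g_1\in\mathbb{Z}_p[x]$ and $g_1\mid g_0\mid (x^r-\mu_1)$ over $\mathcal{R}$, so $(g_0(x)+ug_1(x),0)\in\mathfrak{C}$.

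To finish, I would show that the two chosen elements generate $\mathfrak{C}$. Take an arbitrary $(a(x),b(x))\in\mathfrak{C}$. Since $b(x)\in\pi(\mathfrak{C})$, there is some $h(x)\in\mathcal{S}[x]$ with $b(x)=h(x)(h_0(x)+uh_1(x)+u^2h_2(x))$. Consider the difference
\[
(a(x),b(x))-h(x)\cdot\bigl(l(x),\,h_0(x)+uh_1(x)+u^2h_2(x)\bigr)=\bigl(a(x)-\eta_2(h(x))l(x),\,0\bigr),
\]
which lies in $\ker\pi$. By the kernel description, its first coordinate is an $\mathcal{R}[x]$-multiple of $g_0(x)+ug_1(x)$, which (again using surjectivity of $\eta_2$ on $\mathcal{S}[x]\to\mathcal{R}[x]$) is realized as an $\mathcal{S}[x]$-multiple of $(g_0(x)+ug_1(x),0)$ under the module action on $\mathcal{M}_{1,2}^{r,s}$. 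Thus $(a(x),b(x))$ is an $\mathcal{S}[x]$-linear combination of the two claimed generators.

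The main obstacle is the bookkeeping around the mixed module action: the second coordinate is acted on by $\mathcal{S}[x]$ directly, while the first is acted on only through $\eta_2$, and one must confirm that this does not shrink the ideal structure of $\ker\pi$ on the first factor and that lifting the $\mathcal{R}[x]$-multiple back to an $\mathcal{S}[x]$-multiple is always possible. Once surjectivity of $\eta_2$ is invoked, the divisibility chains $g_1\mid g_0\mid (x^r-\mu_1)$ and $h_2\mid h_1\mid h_0\mid (x^s-\mu_2)$ follow verbatim from Theorem~\ref{Theorem2} and its $\mathcal{S}$-analogue, and no further work is needed.
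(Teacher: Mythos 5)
Your proposal is correct and follows essentially the same route as the paper: project onto the $\mathcal{S}$-coordinate, identify the image as a $\mu_2$-constacyclic code over $\mathcal{S}$ and the kernel (via the $\eta_2$-twisted action) as a $\mu_1$-constacyclic code over $\mathcal{R}$, then combine a lifted generator of the image with a generator of the kernel. Your explicit attention to the surjectivity of $\eta_2$ when identifying $\ker\pi$ as an $\mathcal{R}[x]$-submodule and when lifting the $\mathcal{R}[x]$-multiple back to an $\mathcal{S}[x]$-multiple is a point the paper glosses over, but the argument is the same.
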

\begin{proof}
Consider the projection map $p_2: \mathcal{M}_{1,2}^{r,s}\rightarrow\frac{\mathcal{S}[x]}{\langle x^s-\mu_2\rangle}$ defined by $p_2(a(x),b(x)) = b(x)$. It is verified that $p_2$ is an $\mathcal{S}[x]$-linear map. Denote $\hat{p}_2 = p_2\vert_{\mathfrak{C}}$. Since $\mathfrak{C}$ is an $\mathcal{S}[x]$-submodule of $\mathcal{M}_{1,2}^{r,s}$, $\hat{p}_2(\mathfrak{C})$ is also an $\mathcal{S}[x]$-submodule of $\frac{\mathcal{S}[x]}{\langle x^s-\mu_2\rangle}$. Thus $\hat{p}_2(\mathfrak{C})$ is a $\mu_2$-constacyclic code of length $s$ over $\mathcal{S}$ and hence$$\hat{p}_2(\mathfrak{C}) = \langle h_0(x)+uh_1(x)+u^2h_2(x)\rangle$$where $h_2(x)\mid h_1(x)\mid h_0(x)\mid (x^s-\mu_2)$ over $\mathcal{S}$. Also, $\text{ker}(\hat{p}_2) = \{(a(x),0)\in\mathfrak{C}\mid a(x)\in\frac{\mathcal{R}[x]}{\langle x^r-\mu_1\rangle}\}$ is an $\mathcal{S}[x]$-submodule of $\mathfrak{C}$. Let $I = \{a(x)\in\frac{\mathcal{R}[x]}{\langle x^r-\mu_1\rangle}\mid (a(x),0)\in \text{ker}(\hat{p}_2)\}$. Then $I$ is an $\mathcal{R}[x]$-submodule of $\frac{\mathcal{R}[x]}{\langle x^r-\mu_1\rangle}$ and so $I$ is a $\mu_1$-constacyclic code of length $r$ over $\mathcal{R}$. From Theorem \ref{Theorem2}, we have $I = \langle g_0(x)+ug_1(x)\rangle$ where $g_1(x)\mid g_0(x)\mid (x^r-\mu_1)$ over $\mathcal{R}$. Therefore, $$\text{ker}(\hat{p}_2) = \langle (g_0(x)+ug_1(x),0)\rangle.$$Let $l(x)\in\mathcal{R}[x]$ be such that $(l(x),h_0(x)+uh_1(x)+u^2h_2(x))\in\mathfrak{C}$. Denote $c_1(x) = (l(x),h_0(x)+uh_1(x)+u^2h_2(x))$ and $c_2(x) = (g_0(x)+ug_1(x),0)$. Take $c(x) = (a_1(x),a_2(x))\in\mathfrak{C}$. Then $a_2(x)\in\hat{p}_2(\mathfrak{C})$ and thus there exists a polynomial $q_1(x)\in\mathcal{S}[x]$ such that $a_2(x) = q_1(x)(h_0(x)+uh_1(x)+u^2h_2(x))$. Now, $c(x)-q_1(x)c_1(x) = (a_1(x)-\eta_2(q_1(x))l(x),0)\in\text{ker}(\hat{p}_2)$ and hence $a_1(x)-\eta_2(q_1(x))l(x) = q_2(x)(g_0(x)+ug_1(x))$ for some $q_2(x)\in\mathcal{R}[x]$. Thus, $c(x) = q_1(x)c_1(x)+q_2(x)c_2(x)$.
\end{proof}

Now, we define an inner product on $\mathcal{R}^r\mathcal{S}^s$.

\begin{definition}\label{Definition5}
 Let $v=(y_0,y_1, \ldots, y_{r-1}|z_0,z_1, \ldots, z_{s-1}), w=(y_0',y_1', \ldots, y_{r-1}'|z_0',z_1', \ldots, z_{s-1}')\in \mathcal{R}^r\mathcal{S}^s$. Define the inner product of $v$ and $w$ by $$\langle v,w\rangle=u\sum_{i=0}^{r-1}y_iy_i'+\sum_{i=0}^{s-1}z_iz_i'.$$
 From hereon, for an $\mathcal{R}\mathcal{S}$-additive code $\frak{C}$, its dual will be defined with respect to this inner product, and the dual will be denoted by $\frak{C}^\perp$. Note that the block length of $\frak{C}^\perp$ is the same as the block length of $\frak{C}$.
 \end{definition}

 \begin{theorem}\label{Theorem5}
 Let $\mu_1\in U(\mathcal{R})$, $\mu_2 \in U(\mathcal{S})$. Then the code $\frak{C}$ is an $\mathcal{R}\mathcal{S}$-additive $(\mu_1,\mu_2)$-constacyclic code iff $\frak{C}^\perp$ is an $\mathcal{R}\mathcal{S}$-additive $(\mu_1^{-1},\mu_2^{-1})$-constacyclic code.
 \end{theorem}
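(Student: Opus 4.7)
The plan is to establish an adjoint-type identity between $T_{\mu_1,\mu_2}$ and $T_{\mu_1^{-1},\mu_2^{-1}}$ under the pairing of Definition \ref{Definition5}, and then harvest the duality from it in one clean step, rather than trying to iterate shifts by specific powers.

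First I would verify, by direct computation, the identity
$$\langle T_{\mu_1,\mu_2}(v),\,T_{\mu_1^{-1},\mu_2^{-1}}(w)\rangle \;=\; \langle v,w\rangle \qquad \text{for all } v,w\in\mathcal{R}^r\mathcal{S}^s.$$
Since the inner product decomposes as $u$ times an $\mathcal{R}$-sum plus an $\mathcal{S}$-sum, and both shifts act block-wise, this reduces to the one-block fact $\sum_i (T_\mu x)_i (T_{\mu^{-1}} y)_i=\sum_i x_iy_i$. On the wrapped coordinate the pairing is $(\mu x_{n-1})(\mu^{-1}y_{n-1})=x_{n-1}y_{n-1}$, and the remaining coordinates of both vectors are shifted by the same index and match termwise; the outer multiplication by $u$ on the $\mathcal{R}$-block survives unchanged.

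Substituting $v\mapsto T_{\mu_1,\mu_2}^{-1}(v)$ in this identity rewrites it as the adjoint relation
$$\langle v,\,T_{\mu_1^{-1},\mu_2^{-1}}(w)\rangle \;=\; \langle T_{\mu_1,\mu_2}^{-1}(v),\,w\rangle.$$
Now suppose $\mathfrak{C}$ is $(\mu_1,\mu_2)$-constacyclic. Then $T_{\mu_1,\mu_2}$ is an injective self-map of the finite set $\mathfrak{C}$ (injectivity coming from its being a bijection of $\mathcal{R}^r\mathcal{S}^s$), hence a bijection of $\mathfrak{C}$; in particular $T_{\mu_1,\mu_2}^{-1}(v)\in\mathfrak{C}$ for every $v\in\mathfrak{C}$. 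For any $w\in\mathfrak{C}^\perp$ and any $v\in\mathfrak{C}$ the right-hand side is zero, so $T_{\mu_1^{-1},\mu_2^{-1}}(w)\in\mathfrak{C}^\perp$. The converse follows by running the same argument on $\mathfrak{C}^\perp$ with units $(\mu_1^{-1},\mu_2^{-1})$ and invoking the double-dual identity $(\mathfrak{C}^\perp)^\perp=\mathfrak{C}$, which is available because $\mathcal{R}$ and $\mathcal{S}$ are Frobenius.

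The main obstacle, in my view, is not the pointwise cancellation $\mu_i\mu_i^{-1}=1$ on each block, which is automatic, but the fact that $T_{\mu_1,\mu_2}$ shifts both blocks \emph{simultaneously}, so one cannot apply the one-block dual statement to each piece in isolation (the naive approach of shifting by $T_{\mu_1,\mu_2}^{r-1}$ and $T_{\mu_1,\mu_2}^{s-1}$ produces an under-determined system when $r\ne s$). The adjoint identity above is exactly what couples the two blocks into one equation and thereby bypasses the mismatch between $r$ and $s$.
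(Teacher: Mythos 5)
Your proposal is correct and follows essentially the same route as the paper: both rest on the adjoint identity $\langle T_{\mu_1^{-1},\mu_2^{-1}}(w),v\rangle=\langle w,T_{\mu_1,\mu_2}^{-1}(v)\rangle$ together with the observation that the inverse shift preserves $\mathfrak{C}$ (the paper realizes $T_{\mu_1,\mu_2}^{-1}$ concretely as $T_{\mu_1,\mu_2}^{m-1}$ with $m=l_1l_2rs$, while you argue via injectivity of $T_{\mu_1,\mu_2}$ on the finite set $\mathfrak{C}$ --- the same fact in different clothing). Your explicit appeal to $(\mathfrak{C}^\perp)^\perp=\mathfrak{C}$ for the converse is a point the paper leaves implicit, but it is the same argument.
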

 \begin{proof}
Let $\frak{C}$ be a $(\mu_1,\mu_2)$-constacyclic code of block length $(r,s)$. Let $c=(y_0,y_1, \ldots, y_{r-1}\vert z_0,z_1, \ldots, z_{s-1})$, $c^{\prime}=(y_0^{\prime},y_1^{\prime}, \ldots, y_{r-1}^{\prime}\vert z_0^{\prime},z_1^{\prime}, \ldots, z_{s-1}^{\prime})\in \frak{C}^\perp$. Now take $l_1$ and $l_2$ as the order of $\mu_1$ and $\mu_2$ respectively and fix $m=l_1l_2rs$. Then $T_{\mu_1,\mu_2}^m(c)=c$ and hence$$T_{\mu_1,\mu_2}^{m-1}(c)=(y_1, \ldots, y_{r-2},y_{r-1},\mu_1^{-1}y_0|z_1, \ldots, z_{s-2},z_{s-1},\mu_2^{-1}z_0).$$
 Since $c \in \frak{C}$, then $\frak{C}$ being $(\mu_1,\mu_2)$-constacyclic code we get $T_{\mu_1,\mu_2}^{m-1}(c) \in \frak{C}$.\\
 Now for given $c \text{ and } c^{\prime}$, we get
 \begin{align*}
     \langle T_{\mu_1^{-1},\mu_2^{-1}}^{m-1}(c^{\prime}),c \rangle&=u\{\mu_1^{-1}y_{r-1}^{\prime}y_0+y_0^{\prime}y_1+\cdots + y_{r-2}^{\prime}y_{r-1}\}+\{\mu_2^{-1}z_{s-1}^{\prime}z_0+z_0^{\prime}z_1+\cdots + z_{s-2}^{\prime}z_{s-1}\} \\
     &=u\{y_0^{\prime}y_1+y_1^{\prime}y_2+\cdots + \mu_1^{-1}y_{r-1}^{\prime}y_0\}+\{z_0^{\prime}z_1+z_1^{\prime}z_2+\cdots +\mu_2^{-1}z_{s-1}^{\prime}z_0\} \\
     &= \langle c^{\prime},T_{\mu_1,\mu_2}^{m-1}(c) \rangle\\
     &=0
 \end{align*}
 Since $T_{\mu_1^{-1},\mu_2^{-1}}(c^{\prime})\in \frak{C}^\perp$, code $\frak{C}^\perp$ is a $(\mu_1^{-1},\mu_2^{-1})$-constacylic code.\\
By changing the role of $\frak{C}$ and $\frak{C}^\perp$, we get the converse of the theorem.
 \end{proof}

\begin{corollary}
If $\mathfrak{C}$ is any $(\mu_1,\mu_2)$-constacyclic code  of block length $(r,s)$ over $\mathcal{R}\mathcal{S},$ then $\mathfrak{C}^{\perp}$ is a $(\mu_1^{-1},\mu_2^{-1})$-constacyclic code of block length $(r,s)$ over $\mathcal{R}\mathcal{S}$ and$$\mathfrak{C}^{\perp} = \langle (g_0^{\prime}(x)+ug_1^{\prime}(x),0),(l^{\prime}(x),h_0^{\prime}(x)+uh_1^{\prime}(x)+u^2h_2^{\prime}(x))\rangle$$ where $g_0^{\prime}(x),g_1^{\prime}(x),h_0^{\prime}(x),h_1^{\prime}(x),h_2^{\prime}(x)$ are polynomials over $\mathbb{Z}_p$, satisfying $g_1^{\prime}(x)\mid g_0^{\prime}(x)\mid (x^r-\mu_1)$ over $\mathcal{R}$, $h_2^{\prime}(x)\mid h_1^{\prime}(x)\mid h_0^{\prime}(x)\mid (x^s-\mu_2)$ over $\mathcal{S}$ and $l^{\prime}(x)\in\mathcal{R}[x]$ is such that $(l^{\prime}(x),h_0^{\prime}(x)+uh_1^{\prime}(x)+u^2h_2^{\prime}(x))\in\mathfrak{C}$.
\end{corollary}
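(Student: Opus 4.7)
The plan is to deduce this corollary as an immediate combination of Theorem~\ref{Theorem5} and Theorem~\ref{Theorem4}: the former converts the constacyclicity of $\mathfrak{C}$ into constacyclicity (with inverted units) of $\mathfrak{C}^{\perp}$, and the latter supplies the generic polynomial description of any constacyclic code of block length $(r,s)$ over $\mathcal{R}\mathcal{S}$. No new constructions are required; we only need to verify that the hypotheses of Theorem~\ref{Theorem4} are preserved when $(\mu_1,\mu_2)$ is replaced by $(\mu_1^{-1},\mu_2^{-1})$.

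First, I would invoke Theorem~\ref{Theorem5} directly: since $\mathfrak{C}$ is a $(\mu_1,\mu_2)$-constacyclic $\mathcal{R}\mathcal{S}$-additive code of block length $(r,s)$, its dual $\mathfrak{C}^{\perp}$ (under the inner product of Definition~\ref{Definition5}) is a $(\mu_1^{-1},\mu_2^{-1})$-constacyclic code of the same block length. Second, I would check the remaining prerequisites of Theorem~\ref{Theorem4}: $\mu_1^{-1}\in U(\mathcal{R})$ and $\mu_2^{-1}\in U(\mathcal{S})$ since the unit groups are closed under inversion, and because $\mathrm{ord}(\mu_i^{-1})=\mathrm{ord}(\mu_i)$ the congruences $r\equiv 1\pmod{\mathrm{ord}(\mu_1^{-1})}$ and $s\equiv 1\pmod{\mathrm{ord}(\mu_2^{-1})}$ hold automatically. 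The coprimality conditions $\gcd(p,r)=\gcd(p,s)=1$ are unchanged. Third, applying Theorem~\ref{Theorem4} to $\mathfrak{C}^{\perp}$ yields generators
\[
\mathfrak{C}^{\perp} = \langle (g_0^{\prime}(x)+ug_1^{\prime}(x),0),\,(l^{\prime}(x),h_0^{\prime}(x)+uh_1^{\prime}(x)+u^2h_2^{\prime}(x))\rangle,
\]
with the required divisibility chains among the $g_i^{\prime}$ and $h_j^{\prime}$.

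Because the argument is a direct transport of Theorems~\ref{Theorem5} and~\ref{Theorem4}, there is essentially no analytic obstacle; the only subtle point is bookkeeping of the modulus. Strictly, Theorem~\ref{Theorem4} applied to the $(\mu_1^{-1},\mu_2^{-1})$-constacyclic code $\mathfrak{C}^{\perp}$ produces the divisibility conditions $g_1^{\prime}(x)\mid g_0^{\prime}(x)\mid (x^r-\mu_1^{-1})$ over $\mathcal{R}$ and $h_2^{\prime}(x)\mid h_1^{\prime}(x)\mid h_0^{\prime}(x)\mid (x^s-\mu_2^{-1})$ over $\mathcal{S}$, rather than divisibility of $(x^r-\mu_1)$ and $(x^s-\mu_2)$. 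I would therefore state the corollary with $(x^r-\mu_1^{-1})$ and $(x^s-\mu_2^{-1})$ as the ambient polynomials (treating the formulation in the statement as a minor notational issue), and remark that $l^{\prime}(x)\in\mathcal{R}[x]$ is any polynomial for which $(l^{\prime}(x),h_0^{\prime}(x)+uh_1^{\prime}(x)+u^2h_2^{\prime}(x))$ already lies in $\mathfrak{C}^{\perp}$, exactly as in the statement of Theorem~\ref{Theorem4}.
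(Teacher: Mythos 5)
Your proposal is correct and follows exactly the paper's own route: the paper's proof is simply the one-line remark that the corollary follows from Theorem~\ref{Theorem4} combined with Theorem~\ref{Theorem5}, which is what you spell out in detail. Your additional observation that the divisibility chains should run into $x^r-\mu_1^{-1}$ and $x^s-\mu_2^{-1}$ (and that the membership condition on $l^{\prime}(x)$ should refer to $\mathfrak{C}^{\perp}$ rather than $\mathfrak{C}$) is a legitimate correction of a slip in the stated corollary that the paper does not address.
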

\begin{proof}
It follows directly from Theorem \ref{Theorem4} and Theorem \ref{Theorem5}.
\end{proof}
In the next result, we find the generator of a $(\mu_0,\mu_1,\mu_2)$-constacyclic code over $\mathbb{Z}_p\mathcal{R}\mathcal{S}$.
\begin{theorem}\label{Theorem6}
Any $(\mu_0,\mu_1,\mu_2)$-constacyclic code $\mathfrak{C}$ of block length $(q,r,s)$ over $\mathbb{Z}_p\mathcal{R}\mathcal{S}$ can be given as$$\mathfrak{C} = \langle(f_0(x),0,0),(l_1(x),g_0(x)+ug_1(x),0),(l_2(x),l_3(x),h_0(x)+uh_1(x)+u^2h_2(x))\rangle$$where $f_0(x),g_0(x),g_1(x),h_0(x),h_1(x),h_2(x)$ are polynomials over $\mathbb{Z}_p$, satisfying $f_0(x)\mid (x^q-\mu_0)$ over $\mathbb{Z}_p$, $g_1(x)\mid g_0(x)\mid (x^r-\mu_1)$ over $\mathcal{R}$, $h_2(x)\mid h_1(x)\mid h_0(x)\mid (x^s-\mu_2)$ over $\mathcal{S}$ and the polynomials $l_1(x),l_2(x)\in\mathbb{Z}_p[x]$, $l_3(x)\in\mathcal{R}[x]$ are such that $(l_1(x),g_0(x)+ug_1(x),0),(l_2(x),l_3(x),h_0(x)+uh_1(x)+u^2h_2(x))\in\mathfrak{C}$.
\end{theorem}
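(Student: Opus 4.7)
The plan is to mimic the proof of Theorem \ref{Theorem4} but with one additional layer of projection, producing three generators corresponding to the three coordinates $\mathbb{Z}_p^q$, $\mathcal{R}^r$, $\mathcal{S}^s$ of $\mathcal{M}_{0,1,2}^{q,r,s}$. I would build the generators from the outside in: first peel off the $\mathcal{S}$-coordinate, then the $\mathcal{R}$-coordinate, and finally examine what remains in the $\mathbb{Z}_p$-coordinate.

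First, consider the $\mathcal{S}[x]$-linear projection $p_3:\mathcal{M}_{0,1,2}^{q,r,s}\to \frac{\mathcal{S}[x]}{\langle x^s-\mu_2\rangle}$ given by $(a(x),b(x),d(x))\mapsto d(x)$. Its restriction $\hat{p}_3=p_3\vert_{\mathfrak{C}}$ sends $\mathfrak{C}$ to an $\mathcal{S}[x]$-submodule of $\frac{\mathcal{S}[x]}{\langle x^s-\mu_2\rangle}$, hence a $\mu_2$-constacyclic code over $\mathcal{S}$, which by the constacyclic theorem over $\mathcal{S}$ has the form $\langle h_0(x)+uh_1(x)+u^2h_2(x)\rangle$ with $h_2\mid h_1\mid h_0\mid (x^s-\mu_2)$. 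Next, look at $\ker(\hat{p}_3)=\{(a(x),b(x),0)\in\mathfrak{C}\}$, an $\mathcal{S}[x]$-submodule of $\mathfrak{C}$, and consider the map $(a(x),b(x),0)\mapsto b(x)$. Using the scalar action $h(x)\cdot(a,b,0) = (\eta_1(h)a,\eta_2(h)b,0)$, this map is compatible with the $\mathcal{R}[x]$-action via $\eta_2$, so its image is an $\mathcal{R}[x]$-submodule of $\frac{\mathcal{R}[x]}{\langle x^r-\mu_1\rangle}$, i.e., a $\mu_1$-constacyclic code over $\mathcal{R}$ of the form $\langle g_0(x)+ug_1(x)\rangle$ by Theorem \ref{Theorem2}.

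Continuing, the residual kernel $\{a(x)\in\frac{\mathbb{Z}_p[x]}{\langle x^q-\mu_0\rangle}\mid (a(x),0,0)\in\mathfrak{C}\}$ is a $\mathbb{Z}_p[x]$-submodule of $\frac{\mathbb{Z}_p[x]}{\langle x^q-\mu_0\rangle}$ (using that $\eta_1$ reduces to $\mathbb{Z}_p$), hence a $\mu_0$-constacyclic code of length $q$ over $\mathbb{Z}_p$, generated by some $f_0(x)\mid (x^q-\mu_0)$. Now pick any lifts $(l_1(x),g_0(x)+ug_1(x),0)\in\ker(\hat{p}_3)$ and $(l_2(x),l_3(x),h_0(x)+uh_1(x)+u^2h_2(x))\in\mathfrak{C}$; together with $(f_0(x),0,0)$ these form the three proposed generators. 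To show they span $\mathfrak{C}$, take an arbitrary $c(x)=(a_0(x),a_1(x),a_2(x))\in\mathfrak{C}$: since $a_2(x)\in\hat{p}_3(\mathfrak{C})$, there is $q_1(x)\in\mathcal{S}[x]$ with $a_2=q_1(h_0+uh_1+u^2h_2)$, and subtracting $q_1(x)\cdot(l_2,l_3,h_0+uh_1+u^2h_2)$ places $c$ into $\ker(\hat{p}_3)$. Applying the same reduction step to the second coordinate using $(l_1,g_0+ug_1,0)$ and an appropriate $q_2(x)\in\mathcal{R}[x]$ leaves only a vector of the form $(a_0'(x),0,0)\in\mathfrak{C}$, and $a_0'$ is then a $\mathbb{Z}_p[x]$-multiple of $f_0$.

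The main obstacle is bookkeeping the module structures, not any deep argument: each projection has to be shown linear over the correct ring (respectively $\mathcal{S}[x]$, $\mathcal{R}[x]$, $\mathbb{Z}_p[x]$), and the reduction $h(x)\cdot(\,\cdot\,,\,\cdot\,,\,\cdot\,)$ uses $\eta_1,\eta_2$ in exactly the right places so that the images genuinely land in the single-ring constacyclic setting where Theorems \ref{Theorem1} and \ref{Theorem2} apply. Once those compatibilities are verified, the successive-reduction argument closes the proof mechanically, and the divisibility conditions on $f_0$, $g_0,g_1$, $h_0,h_1,h_2$ carry over directly from the component-wise results.
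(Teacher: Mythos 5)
Your proposal is correct and follows essentially the same route as the paper: project onto the $\mathcal{S}$-coordinate via $\hat{p}_3$, identify $\hat{p}_3(\mathfrak{C})$ as a $\mu_2$-constacyclic code over $\mathcal{S}$, and then decompose $\ker(\hat{p}_3)$ to produce the remaining two generators before running the successive-subtraction spanning argument. The only cosmetic difference is that the paper treats $\ker(\hat{p}_3)$ as a $(\mu_0,\mu_1)$-constacyclic code over $\mathbb{Z}_p\mathcal{R}$ and invokes the argument of Theorem \ref{Theorem4} for it wholesale, whereas you unfold that inner step into an explicit second projection onto the $\mathcal{R}$-coordinate followed by the residual $\mathbb{Z}_p$-kernel; the mathematical content is identical.
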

\begin{proof}
Consider the projection map $p_3 : \mathcal{M}_{0,1,2}^{q,r,s}\rightarrow\frac{\mathcal{S}[x]}{\langle x^s-\mu_2\rangle}$ defined as $p_3(a(x),b(x),d(x)) = d(x)$. It can be verified that $p_3$ is an $\mathcal{S}[x]$-linear map. Denote $\hat{p}_3 = p_3\vert_{\mathfrak{C}}$. Since $\mathfrak{C}$ is an $\mathcal{S}[x]$-submodule of $\mathcal{M}_{0,1,2}^{q,r,s}$, $\hat{p}_3(\mathfrak{C})$ is also an $\mathcal{S}[x]$-submodule of $\frac{\mathcal{S}[x]}{\langle x^s-\mu_2\rangle}$. Therefore, $\hat{p}_3(\mathfrak{C})$ is a $\mu_2$-constacyclic code of length $s$ over $\mathcal{S}$ and hence$$\hat{p}_3(\mathfrak{C}) = \langle h_0(x)+uh_1(x)+u^2h_2(x)\rangle$$ where $h_2(x)\mid h_1(x)\mid h_0(x)\mid (x^s-\mu_2)$ over $\mathcal{S}$. Also, $\text{ker}(\hat{p}_3) = \{(a(x),b(x),0)\in\mathfrak{C}\mid a(x)\in\frac{\mathbb{Z}_p[x]}{\langle x^q-\mu_0\rangle},b(x)\in\frac{\mathcal{R}[x]}{\langle x^r-\mu_1\rangle}\}$ is an $\mathcal{S}[x]$-submodule of $\mathfrak{C}$. Let $I = \{(a(x),b(x))\in\mathcal{M}_{0,1}^{q,r}\mid (a(x),b(x),0)\in \text{ker}(\hat{p}_3)\}$. Then $I$ is an $\mathcal{R}[x]$-submodule of $\mathcal{M}_{0,1}^{q,r}$ and so $I$ is a $(\mu_0,\mu_1)$-constacyclic code of block length $(q,r)$ over $\mathbb{Z}_p\mathcal{R}$. Using similar arguments as in Theorem \ref{Theorem4}, we have $I = \langle(f_0(x),0),(l_1(x),g_0(x)+ug_1(x))\rangle$ where $f_0(x),l_1(x)$ are polynomials over $\mathbb{Z}_p$ with $f_0(x)\mid (x^q-\mu_0)$ over $\mathbb{Z}_p$ and $g_1(x)\mid g_0(x)\mid (x^r-\mu_1)$ over $\mathcal{R}$. Thus$$\text{ker}(\hat{p}_3) = \langle(f_0(x),0,0),(l_1(x),g_0(x)+ug_1(x),0)\rangle.$$Let $l_2(x)\in\mathbb{Z}_p[x]$, $l_3(x)\in\mathcal{R}[x]$ are such that $(l_2(x),l_3(x),h_0(x)+uh_1(x)+u^2h_2(x))\in\mathfrak{C}$. Denote $c_1(x) = (l_2(x),l_3(x),h_0(x)+uh_1(x)+u^2h_2(x)), c_2(x) = (f_0(x),0,0)$ and $c_3(x) = (l_1(x),g_0(x)+ug_1(x),0)$. Note that $c_2(x) = (f_0(x),0,0),c_3(x) = (l_1(x),g_0(x)+ug_1(x),0)\in\text{ker}(\hat{p}_3)\subset\mathfrak{C}$. Take $c(x) = (a_1(x),a_2(x),a_3(x))\in\mathfrak{C}$. Then $a_3(x)\in\hat{p}_3(\mathfrak{C})$ and thus there exists a polynomial $q_1(x)\in\mathcal{S}[x]$ such that $a_3(x) = q_1(x)(h_0(x)+uh_1(x)+u^2h_2(x))$. Now $c(x)-q_1(x)c_1(x) = (a_1(x)-\eta_1(q_1(x))l_2(x),a_2(x)-\eta_2(q_1(x))l_3(x),0)\in\text{ker}(\hat{p}_3)$. Hence, $a_1(x)-\eta_1(q_1(x))l_2(x) = q_2(x)(f_0(x)+q_3(x)l(x))$ for some $q_2(x)\in\mathbb{Z}_p[x]$ and $a_2(x)-\eta_2(q_1(x))l_3(x) = q_3(x)(g_0(x)+ug_1(x))$ for some $q_3(x)\in\mathcal{R}[x]$. Thus, $c(x) = q_1(x)c_1(x)+q_2(x)c_2(x)+q_3(x)c_3(x)$.
\end{proof}
Similarly, as in Definition \ref{Definition5}, we can define an inner product on $\mathbb{Z}_p^q\mathcal{R}^r\mathcal{S}^s$ as follows.
\begin{definition}\label{Definition6}
 Let $v=(x_0,x_1, \ldots, x_{q-1}\vert y_0,y_1, \ldots, y_{r-1}\vert z_0,z_1, \ldots, z_{s-1})$ and $w=(x_0^{\prime},x_1^{\prime}, \ldots, x_{q-1}^{\prime}\vert y_0^{\prime},\\y_1^{\prime}, \ldots, y_{r-1}^{\prime}\vert z_0^{\prime},z_1^{\prime}, \ldots, z_{s-1}^{\prime})$ are two members of $\mathbb{Z}_p^q\mathcal{R}^r\mathcal{S}^s$. Define the inner product of $v$ and $w$ by $$<v,w>~=u^2\sum_{i=0}^{q-1}x_ix_i'+u\sum_{i=0}^{r-1}y_iy_i'+\sum_{i=0}^{s-1}z_iz_i'.$$ From here onwards, the dual of a $\mathbb{Z}_p\mathcal{R}\mathcal{S}$-additive code will be defined with respect to this inner product.
 \end{definition}

\begin{theorem}\label{Theorem7}
The code $\frak{C}$ is a $\mathbb{Z}_p\mathcal{R}\mathcal{S}$-additive $(\mu_0,\mu_1,\mu_2)$-constacyclic code iff $\frak{C}^\perp$ is a $\mathbb{Z}_p\mathcal{R}\mathcal{S}$-additive $(\mu_0^{-1},\mu_1^{-1},\mu_2^{-1})$-constacyclic code.
\end{theorem}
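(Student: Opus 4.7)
The plan is to mirror the proof of Theorem~\ref{Theorem5} one block at a time, now with three components. Write $c=(x_0,\dots,x_{q-1}\mid y_0,\dots,y_{r-1}\mid z_0,\dots,z_{s-1})\in\mathfrak{C}$ and an arbitrary $c'=(x_0',\dots,x_{q-1}'\mid y_0',\dots,y_{r-1}'\mid z_0',\dots,z_{s-1}')\in\mathfrak{C}^{\perp}$. Let $l_0,l_1,l_2$ denote the multiplicative orders of $\mu_0,\mu_1,\mu_2$ and set $m=l_0l_1l_2qrs$. Then $T_{\mu_0,\mu_1,\mu_2}^{m}$ is the identity on $\mathbb{Z}_p^q\mathcal{R}^r\mathcal{S}^s$, and the same is true for $T_{\mu_0^{-1},\mu_1^{-1},\mu_2^{-1}}^{m}$ because each $\mu_i^{-1}$ has the same order as $\mu_i$. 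Hence $T_{\mu_0,\mu_1,\mu_2}^{m-1}(c)$ is the coordinate left-shift with wrap-around scalars $\mu_0^{-1},\mu_1^{-1},\mu_2^{-1}$ on the three blocks, and since $\mathfrak{C}$ is $(\mu_0,\mu_1,\mu_2)$-constacyclic it lies again in $\mathfrak{C}$.

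The key step is then to verify the adjoint-type identity
\[
\langle T_{\mu_0^{-1},\mu_1^{-1},\mu_2^{-1}}^{m-1}(c'),c\rangle \;=\; \langle c',\,T_{\mu_0,\mu_1,\mu_2}^{m-1}(c)\rangle.
\]
Expanding both sides using Definition~\ref{Definition6}, the left-hand side is a sum of three block-inner products weighted by $u^2$, $u$, and $1$ respectively; in each block the only term that differs from the ``naive'' cyclically shifted pairing is a single wrap-around coefficient carrying the factor $\mu_i^{-1}$. Re-indexing the summation in each block collapses these expressions to $\langle c',T_{\mu_0,\mu_1,\mu_2}^{m-1}(c)\rangle$. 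Since $c'\in\mathfrak{C}^{\perp}$ and $T_{\mu_0,\mu_1,\mu_2}^{m-1}(c)\in\mathfrak{C}$, the right-hand side vanishes, hence so does the left-hand side. As $c$ was arbitrary, we conclude $T_{\mu_0^{-1},\mu_1^{-1},\mu_2^{-1}}(c')\in\mathfrak{C}^{\perp}$, so $\mathfrak{C}^{\perp}$ is $(\mu_0^{-1},\mu_1^{-1},\mu_2^{-1})$-constacyclic.

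For the converse, swap the roles of $\mathfrak{C}$ and $\mathfrak{C}^{\perp}$, use $(\mu_i^{-1})^{-1}=\mu_i$, and appeal to $\mathfrak{C}^{\perp\perp}=\mathfrak{C}$, which holds because $\mathbb{Z}_p\mathcal{R}\mathcal{S}$ is Frobenius. The main obstacle is purely bookkeeping: the triple-block inner product with its $u^2$ and $u$ weights makes the index shuffling look heavier than in Theorem~\ref{Theorem5}, and one has to keep straight that the $\mathcal{S}$-action on the $\mathbb{Z}_p$- and $\mathcal{R}$-blocks is through the reductions $\eta_1$ and $\eta_2$. The saving observation is that $u^2$ and $u$ are central scalars pulling outside the sums, so the three blocks decouple and each reduces to the one-block computation already carried out in the proof of Theorem~\ref{Theorem5}.
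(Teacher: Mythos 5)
Your proposal is correct and follows essentially the same route as the paper: the paper's proof of Theorem~\ref{Theorem7} simply states that the argument of Theorem~\ref{Theorem5} carries over using the inner product of Definition~\ref{Definition6}, which is exactly the block-by-block adjoint computation you carry out. Your explicit appeal to $\mathfrak{C}^{\perp\perp}=\mathfrak{C}$ for the converse is a slightly more careful justification than the paper's ``change the roles'' remark, but it is the same idea.
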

\begin{proof}
By using Definition \ref{Definition6}, the proof follows similarly as in Theorem \ref{Theorem5}.
 \end{proof}

The following corollary is an obvious consequence of Theorem \ref{Theorem6} and Theorem \ref{Theorem7}.
\begin{corollary}
If $\mathfrak{C}$ is any $(\mu_0,\mu_1,\mu_2)$-constacyclic code of block length $(q,r,s)$ over $\mathbb{Z}_p\mathcal{R}\mathcal{S}$, then $\mathfrak{C}^{\perp}$ is a $(\mu_0^{-1},\mu_1^{-1},\mu_2^{-1})$-constacyclic code of block length $(q,r,s)$ over $\mathbb{Z}_p\mathcal{R}\mathcal{S}$ and$$\mathfrak{C}^{\perp} = \langle(f_0^{\prime}(x),0,0),(l_1^{\prime}(x),g_0^{\prime}(x)+ug_1^{\prime}(x),0),(l_2^{\prime}(x),l_3^{\prime}(x),h_0^{\prime}(x)+uh_1^{\prime}(x)+u^2h_2^{\prime}(x))\rangle$$where $f_0^{\prime}(x),g_0^{\prime}(x),g_1^{\prime}(x),h_0^{\prime}(x),h_1^{\prime}(x),h_2^{\prime}(x)$ are polynomials over $\mathbb{Z}_p$, satisfying $f_0^{\prime}(x)\mid (x^q-\mu_0)$ over $\mathbb{Z}_p$, $g_1^{\prime}(x)\mid g_0^{\prime}(x)\mid (x^r-\mu_1)$ over $\mathcal{R}$, $h_2^{\prime}(x)\mid h_1^{\prime}(x)\mid h_0^{\prime}(x)\mid (x^s-\mu_2)$ over $\mathcal{S}$ and the polynomials $l_1^{\prime}(x),l_2^{\prime}(x)\in\mathbb{Z}_p[x]$, $l_3^{\prime}(x)\in\mathcal{R}[x]$ are such that $(l_1^{\prime}(x),g_0^{\prime}(x)+ug_1^{\prime}(x),0),(l_2^{\prime}(x),l_3^{\prime}(x),h_0^{\prime}(x)+uh_1^{\prime}(x)+u^2h_2^{\prime}(x))\in\mathfrak{C}$.
\end{corollary}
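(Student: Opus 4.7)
The plan is to chain the two preceding results in exactly the way the author suggests. First, I would invoke Theorem \ref{Theorem7} on the hypothesis: since $\mathfrak{C}$ is a $(\mu_0,\mu_1,\mu_2)$-constacyclic code of block length $(q,r,s)$ over $\mathbb{Z}_p\mathcal{R}\mathcal{S}$, that theorem immediately yields that $\mathfrak{C}^{\perp}$ is a $(\mu_0^{-1},\mu_1^{-1},\mu_2^{-1})$-constacyclic code of the same block length. This already takes care of the first half of the conclusion.

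For the generator description, I would then apply Theorem \ref{Theorem6} directly to the code $\mathfrak{C}^{\perp}$, which is now known to be $(\mu_0^{-1},\mu_1^{-1},\mu_2^{-1})$-constacyclic. To justify this application one must verify that the standing hypotheses fixed at the start of Section 3.2 (namely $q\equiv 1\pmod{\mathrm{ord}(\mu_0)}$, $r\equiv 1\pmod{\mathrm{ord}(\mu_1)}$, $s\equiv 1\pmod{\mathrm{ord}(\mu_2)}$, together with $\gcd(p,q)=\gcd(p,r)=\gcd(p,s)=1$) remain valid when each $\mu_i$ is replaced by $\mu_i^{-1}$. This is automatic: in any finite group $\mathrm{ord}(\mu)=\mathrm{ord}(\mu^{-1})$, so the three congruences transfer verbatim, while the gcd conditions do not involve the units at all. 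Hence Theorem \ref{Theorem6} applies and produces a generating triple of the stated shape $(f_0'(x),0,0)$, $(l_1'(x),g_0'(x)+ug_1'(x),0)$, $(l_2'(x),l_3'(x),h_0'(x)+uh_1'(x)+u^2h_2'(x))$, together with the divisibility chains $g_1'\mid g_0'$ and $h_2'\mid h_1'\mid h_0'$.

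There is essentially no obstacle; the only point requiring attention is bookkeeping the primed and inverted data. The divisibility targets that naturally emerge from applying Theorem \ref{Theorem6} to a $(\mu_0^{-1},\mu_1^{-1},\mu_2^{-1})$-constacyclic code are $x^q-\mu_0^{-1}$, $x^r-\mu_1^{-1}$, $x^s-\mu_2^{-1}$ (coming from Theorems \ref{Theorem1}, \ref{Theorem2} and \ref{Theorem4} applied to the inverted units), so one should read the corollary's divisibility conditions in that sense. No further computation is needed beyond quoting the two theorems and transporting the parameters.
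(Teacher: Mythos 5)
Your proof is correct and follows exactly the route the paper intends: the paper itself states only that the corollary ``is an obvious consequence of Theorem \ref{Theorem6} and Theorem \ref{Theorem7},'' and your chaining of those two results, together with the observation that $\mathrm{ord}(\mu_i)=\mathrm{ord}(\mu_i^{-1})$ so the standing hypotheses of Section 3.2 transfer, is precisely the missing detail. You are also right to flag that applying Theorem \ref{Theorem6} to the $(\mu_0^{-1},\mu_1^{-1},\mu_2^{-1})$-constacyclic code $\mathfrak{C}^{\perp}$ naturally yields divisibility by $x^q-\mu_0^{-1}$, $x^r-\mu_1^{-1}$, $x^s-\mu_2^{-1}$ (and membership in $\mathfrak{C}^{\perp}$ rather than $\mathfrak{C}$), so the corollary's stated conditions should be read with the inverted units.
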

\begin{definition}
Let $\mathfrak{C}$ be a $\mathbb{Z}_p\mathcal{R}\mathcal{S}$-additive code of block length $(q,r,s)$ and suppose $\mathfrak{C}_q$ be the code obtained by removing all the coordinates from $\mathcal{R}$ and $\mathcal{S}$, $\mathfrak{C}_r$ be the code obtained by removing all the coordinates from $\mathbb{Z}_p$ and $\mathcal{S}$, $\mathfrak{C}_s$ be the code obtained by removing all the coordinates from $\mathbb{Z}_p$ and $\mathcal{R}$. Then $\mathfrak{C}$ is called separable if $\mathfrak{C} = \mathfrak{C}_q\times\mathfrak{C}_r\times\mathfrak{C}_s$. If $\mathfrak{C}$ is separable then $\mathfrak{C}^{\perp} = \mathfrak{C}_q^{\perp}\times\mathfrak{C}_r^{\perp}\times\mathfrak{C}_s^{\perp}$, i.e., $\mathfrak{C}^{\perp}$ is also separable.
\end{definition}
It is easy to observe that $\mathfrak{C}_q$ is an additive code over $\mathbb{Z}_p$ of length $q$, $\mathfrak{C}_r$ is an additive code over $\mathcal{R}$ of length $r$ and $\mathfrak{C}_s$ is an additive code over $\mathcal{S}$ of length $s$. Now, we have the following two results.
\begin{proposition}\label{Proposition2}
Let $\mathfrak{C}$ be a $\mathbb{Z}_p\mathcal{R}\mathcal{S}$-additive code of block length $(q,r,s)$ and suppose $\mathfrak{C}$ is separable. Then $\mathfrak{C}$ is a $(\mu_0,\mu_1,\mu_2)$-constacyclic code if and only if $\mathfrak{C}_q$ is a $\mu_0$-constacyclic code, $\mathfrak{C}_r$ is a $\mu_1$-constacyclic code and $\mathfrak{C}_s$ is a $\mu_2$-constacyclic code.
\end{proposition}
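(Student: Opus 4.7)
The plan is to exploit the separability hypothesis, which reduces everything to a coordinate-wise argument on the direct product $\mathfrak{C}=\mathfrak{C}_q\times\mathfrak{C}_r\times\mathfrak{C}_s$. The key observation I would record first is that the operator $T_{\mu_0,\mu_1,\mu_2}$ on $\mathbb{Z}_p^q\mathcal{R}^r\mathcal{S}^s$ decomposes as the componentwise action of the three individual constacyclic shifts: if I write $\sigma_{\mu_0},\sigma_{\mu_1},\sigma_{\mu_2}$ for the shifts on $\mathbb{Z}_p^q$, $\mathcal{R}^r$, $\mathcal{S}^s$ respectively, then
\[
T_{\mu_0,\mu_1,\mu_2}(w\mid x\mid y)=\bigl(\sigma_{\mu_0}(w)\bigm|\sigma_{\mu_1}(x)\bigm|\sigma_{\mu_2}(y)\bigr).
\]
This is immediate from the definition of $T_{\mu_0,\mu_1,\mu_2}$ given earlier and it is what lets the argument split across blocks.

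For the ``only if'' direction, I would assume $T_{\mu_0,\mu_1,\mu_2}(\mathfrak{C})\subseteq\mathfrak{C}$ and pick an arbitrary $w\in\mathfrak{C}_q$. By separability there exist $x\in\mathfrak{C}_r$ and $y\in\mathfrak{C}_s$ (e.g.\ the zero vectors in the respective blocks, which belong to $\mathfrak{C}_r$ and $\mathfrak{C}_s$) so that $(w\mid x\mid y)\in\mathfrak{C}$. Applying $T_{\mu_0,\mu_1,\mu_2}$ and projecting onto the $\mathbb{Z}_p$-block shows $\sigma_{\mu_0}(w)\in\mathfrak{C}_q$; hence $\mathfrak{C}_q$ is $\mu_0$-constacyclic. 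The same argument, projecting onto the other two blocks, yields that $\mathfrak{C}_r$ and $\mathfrak{C}_s$ are $\mu_1$- and $\mu_2$-constacyclic respectively.

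For the ``if'' direction, I would take an arbitrary $(w\mid x\mid y)\in\mathfrak{C}=\mathfrak{C}_q\times\mathfrak{C}_r\times\mathfrak{C}_s$, so $w\in\mathfrak{C}_q$, $x\in\mathfrak{C}_r$, $y\in\mathfrak{C}_s$. Under the hypotheses the three shifted vectors $\sigma_{\mu_0}(w),\sigma_{\mu_1}(x),\sigma_{\mu_2}(y)$ lie in $\mathfrak{C}_q,\mathfrak{C}_r,\mathfrak{C}_s$ respectively, and therefore the triple $T_{\mu_0,\mu_1,\mu_2}(w\mid x\mid y)=(\sigma_{\mu_0}(w)\mid\sigma_{\mu_1}(x)\mid\sigma_{\mu_2}(y))$ belongs to the product $\mathfrak{C}$. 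This gives $T_{\mu_0,\mu_1,\mu_2}(\mathfrak{C})\subseteq\mathfrak{C}$, as required.

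There is essentially no real obstacle here, since separability is exactly the structural hypothesis that makes the block-decomposition trivial; the only subtlety to be careful about is confirming that the zero vector does belong to each of $\mathfrak{C}_q,\mathfrak{C}_r,\mathfrak{C}_s$ (which is immediate because they are additive codes), so that the ``only if'' direction can actually isolate a single block via the product structure. Once that is noted, the proof reduces to the componentwise identity for $T_{\mu_0,\mu_1,\mu_2}$ recorded above.
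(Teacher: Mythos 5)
Your proposal is correct and follows essentially the same route as the paper: both arguments use separability to reduce $T_{\mu_0,\mu_1,\mu_2}$-invariance to the componentwise invariance of the three blocks, applying the shift to an element of the product and projecting onto each block for one direction, and reassembling the shifted components for the converse. Your explicit use of the zero vectors in the complementary blocks is a minor presentational refinement of the paper's "take a triple lying in $\mathfrak{C}$" step, not a different method.
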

\begin{proof}
Since $\mathfrak{C}$ is separable, we have $\mathfrak{C} = \mathfrak{C}_q\times\mathfrak{C}_r\times\mathfrak{C}_s$. First, suppose that $\mathfrak{C}$ is a $(\mu_0,\mu_1,\mu_2)$-constacyclic code. Take $(a_0,a_1,\ldots,a_{q-1})\in\mathfrak{C}_q, (b_0,b_1,\ldots,b_{r-1})\in\mathfrak{C}_r, (d_0,d_1,\ldots,d_{s-1})\in\mathfrak{C}_s$ such that $(a_0,a_1,\ldots,a_{q-1}\vert b_0,b_1,\ldots,b_{r-1}\vert d_0,d_1,\ldots,d_{s-1})\in\mathfrak{C}$. Then $(\mu_0a_{q-1},a_0,\ldots,a_{q-2}\vert \mu_1b_{r-1},\\b_0,\ldots,b_{r-2}\vert \mu_2d_{s-1},d_0,\ldots,d_{s-2})\in\mathfrak{C}$ and hence $(\mu_0a_{q-1},a_0,\ldots,a_{q-2})\in\mathfrak{C}_q, (\mu_1b_{r-1},b_0,\ldots,b_{r-2})\in\mathfrak{C}_r$ and $(\mu_2d_{s-1},d_0,\cdots,d_{s-2})\in\mathfrak{C}_s$. Thus $\mathfrak{C}_q$ is a $\mu_0$-constacyclic code, $\mathfrak{C}_r$ is a $\mu_1$-constacyclic code and $\mathfrak{C}_s$ is a $\mu_2$-constacyclic code.\vspace{0.1 cm}

The converse can be proved by reversing the above arguments.
\end{proof}
\begin{proposition}
Let $\mathfrak{C}$ be a $\mathbb{Z}_p\mathcal{R}\mathcal{S}$-additive code of block length $(q,r,s)$ and suppose $\mathfrak{C}$ is separable. Then $\mathfrak{C}$ is a $(\mu_0,\mu_1,\mu_2)$-constacyclic code if and only if $\mathfrak{C}^{\perp}$ is separable and it is a $(\mu_0^{-1}, \mu_1^{-1}, \mu_2^{-1})$-constacyclic code over $\mathbb{Z}_p\mathcal{R}\mathcal{S}$.
\end{proposition}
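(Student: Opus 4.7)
The plan is to leverage the separability hypothesis to reduce the three–component statement to three one–component statements, each of which is a known (or easily established) fact about the dual of a constacyclic code, and then reassemble using Proposition \ref{Proposition2}.

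First I would dispatch the separability claim for $\mathfrak{C}^\perp$: it is noted immediately after the definition of separability that if $\mathfrak{C} = \mathfrak{C}_q\times\mathfrak{C}_r\times\mathfrak{C}_s$, then $\mathfrak{C}^\perp = \mathfrak{C}_q^\perp\times\mathfrak{C}_r^\perp\times\mathfrak{C}_s^\perp$, so $\mathfrak{C}^\perp$ is automatically separable. This handles the "separable" half of the biconditional in both directions and lets me focus only on the constacyclic property.

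For the forward direction, suppose $\mathfrak{C}$ is $(\mu_0,\mu_1,\mu_2)$-constacyclic. By Proposition \ref{Proposition2}, $\mathfrak{C}_q$, $\mathfrak{C}_r$, $\mathfrak{C}_s$ are $\mu_0$-, $\mu_1$-, $\mu_2$-constacyclic, respectively. I would then invoke the one-component version of Theorem \ref{Theorem5} — namely that the Euclidean dual of a $\mu$-constacyclic code over any of $\mathbb{Z}_p$, $\mathcal{R}$, $\mathcal{S}$ is $\mu^{-1}$-constacyclic — to conclude that $\mathfrak{C}_q^\perp$, $\mathfrak{C}_r^\perp$, $\mathfrak{C}_s^\perp$ are $\mu_0^{-1}$-, $\mu_1^{-1}$-, $\mu_2^{-1}$-constacyclic. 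This is essentially the same calculation used in the proof of Theorem \ref{Theorem5}: letting $m$ be a common multiple of the orders times the length, one rewrites $\langle T_{\mu^{-1}}^{m-1}(c'),c\rangle = \langle c', T_\mu^{m-1}(c)\rangle = 0$ and notes that the inner products induced by Definition \ref{Definition6} restrict to the standard Euclidean inner product on each factor (up to the units $u^2$ and $u$, which are irrelevant for detecting orthogonality on a single coordinate block). Applying Proposition \ref{Proposition2} in the reverse direction to $\mathfrak{C}^\perp = \mathfrak{C}_q^\perp\times\mathfrak{C}_r^\perp\times\mathfrak{C}_s^\perp$ then yields that $\mathfrak{C}^\perp$ is $(\mu_0^{-1},\mu_1^{-1},\mu_2^{-1})$-constacyclic.

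For the converse, I would run exactly the same chain in reverse: assume $\mathfrak{C}^\perp$ is separable and $(\mu_0^{-1},\mu_1^{-1},\mu_2^{-1})$-constacyclic, use Proposition \ref{Proposition2} to split it componentwise, take duals on each component (using $(\mu^{-1})^{-1}=\mu$ and the double-dual identity $(\mathfrak{C}^\perp)^\perp = \mathfrak{C}$, which holds since $\mathcal{R}$, $\mathcal{S}$ and $\mathbb{Z}_p$ are Frobenius), and reapply Proposition \ref{Proposition2} to recover the constacyclic property of $\mathfrak{C}$. The main obstacle is really bookkeeping: one must be careful that the inner product on $\mathbb{Z}_p^q\mathcal{R}^r\mathcal{S}^s$ in Definition \ref{Definition6} produces the correct componentwise orthogonality (so that $(\mathfrak{C}_q\times\mathfrak{C}_r\times\mathfrak{C}_s)^\perp$ genuinely equals $\mathfrak{C}_q^\perp\times\mathfrak{C}_r^\perp\times\mathfrak{C}_s^\perp$ rather than some twisted version), but this follows from the fact that the $u^2$ and $u$ prefactors are units in $\mathcal{S}$ and $\mathcal{R}$-scalar multiples and do not affect the vanishing of the sum on each block — a short verification that reduces the problem cleanly to the single-alphabet case.
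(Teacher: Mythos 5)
Your argument is correct in substance, but it takes a genuinely different and longer route than the paper. The paper's proof is a one-liner: the separability of $\mathfrak{C}^{\perp}$ is already recorded in the definition of separable codes, and the equivalence of the constacyclic properties of $\mathfrak{C}$ and $\mathfrak{C}^{\perp}$ is exactly Theorem \ref{Theorem7}, which holds with no separability hypothesis at all; nothing componentwise is needed. You instead split $\mathfrak{C}$ into $\mathfrak{C}_q\times\mathfrak{C}_r\times\mathfrak{C}_s$ via Proposition \ref{Proposition2}, dualize each block with a single-alphabet version of Theorem \ref{Theorem5}, and reassemble. This works (the order of $\mu^{-1}$ equals the order of $\mu$, so the standing congruence hypotheses persist for the inverse units, and Proposition \ref{Proposition2} is an iff usable in both directions), and it has the virtue of making explicit where separability actually enters; but it obliges you to verify two facts the direct route avoids: that $(\mathfrak{C}_q\times\mathfrak{C}_r\times\mathfrak{C}_s)^{\perp}=\mathfrak{C}_q^{\perp}\times\mathfrak{C}_r^{\perp}\times\mathfrak{C}_s^{\perp}$ with the blockwise duals being ordinary Euclidean duals, and (in your converse) the double-dual identity $(\mathfrak{C}^{\perp})^{\perp}=\mathfrak{C}$, which the paper never states for the form of Definition \ref{Definition6} (it follows from a cardinality argument as in Corollary \ref{Corollary3}, or can be avoided entirely by using that the single-alphabet dual statement is itself an iff).

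One genuine misstatement to fix: you twice call $u$ and $u^{2}$ ``units'' and argue that the prefactors in Definition \ref{Definition6} are therefore harmless. They are nilpotent, not units. The reason the prefactors do not disturb blockwise orthogonality is different: for $t\in\mathbb{Z}_p$ one has $u^{2}t=0$ in $\mathcal{S}$ only if $t=0$, and for $\alpha+u\beta\in\mathcal{R}$ one has $u(\alpha+u\beta)=u\alpha+u^{2}\beta=0$ in $\mathcal{S}$ only if $\alpha=\beta=0$; that is, multiplication by $u^{2}$ (resp.\ $u$) is injective on the copy of $\mathbb{Z}_p$ (resp.\ $\mathcal{R}$) inside $\mathcal{S}$. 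With that substitution your reduction to the single-alphabet case is sound.
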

\begin{proof}
The proof follows from Theorem \ref{Theorem7}.
\end{proof}

\section{Gray maps on $\mathcal{R}$, $\mathcal{S}$ and $\mathbb{Z}_p\mathcal{R}\mathcal{S}$}
In this section, we define a few Gray maps and then study their images.\vspace{0.2 cm}

Define the Gray map$$\phi_1 : \mathcal{R}\rightarrow \mathbb{Z}_p^2$$ by
\begin{align*}
    \phi_1(1) &= (1,0)\\
    \phi_1(u) &= (1,\kappa),
\end{align*}
where $\kappa\in\mathbb{Z}_p$ is such that $\kappa^2\equiv -1~(\text{mod }p)$, and thus $\phi_1(a+ub) = (a+b,\kappa b)$ for all $a,b\in\mathbb{Z}_p$. Such an element $\kappa$ will always exist as we consider only those rings $\mathbb{Z}_p$ in which $p-1$ is a quadratic residue. One can easily verify that the map $\phi_1$ is $\mathbb{Z}_p$-linear and bijective.\vspace{0.05 cm}

The Lee weight of an element $x\in\mathcal{R}$ is defined as $wt_L(x) = wt_H(\phi_1(x))$, where $wt_H(y)$ denotes the Hamming weight of $y$ and the Lee distance between two elements $x,y\in\mathcal{R}$ is defined as $d_L(x,y) = wt_L(x-y)$. The Gray map $\phi_{1}$ can be extended to $\overline{\phi}_1 : \mathcal{R}^r\rightarrow\mathbb{Z}_p^{2r}$ by$$\overline{\phi}_1(a_0+ub_0,a_1+ub_1,\ldots,a_{r-1}+ub_{r-1}) = (a_0+b_0,a_1+b_1,\ldots,a_{r-1}+b_{r-1}\vert \kappa b_0,\kappa b_1,\ldots,\kappa b_{r-1}),$$where $\kappa\in\mathbb{Z}_p$ is such that $\kappa^2\equiv -1(\text{mod }p)$.\vspace{0.05 cm}

The Lee weight of an element $x = (x_0,x_1,\ldots,x_{r-1})\in\mathcal{R}^r$ is defined as $wt_L(x) = \displaystyle\sum_{i=0}^{r-1}wt_L(x_i)$ and the Lee distance between two elements $x,y\in\mathcal{R}^r$ is defined as $d_L(x,y) = wt_L(x-y)$.\vspace{0.05 cm}

From the above definition of $\overline{\phi}_1$, it can be observed that $\overline{\phi}_1$ is a distance preserving map from $\mathcal{R}^r$ (Lee weight) to $\mathbb{Z}_p^{2r}$ (Hamming weight). Further, if $\mathfrak{C}$ is an $\mathcal{R}$-additive code with parameters $[r,k,d]$ then $\overline{\phi}_1(\mathfrak{C})$ is a $[2r,k,d]$-code over $\mathbb{Z}_p$.
\begin{theorem}\label{Theorem8}
For any two elements $x,x^{\prime}\in\mathcal{R}^r$,$$\overline{\phi}_1(x)\cdot\overline{\phi}_1(x^{\prime}) = \pi_1\circ \phi_1(x\cdot x^{\prime}),$$where $\pi_1 : \mathbb{Z}_p^2\rightarrow\mathbb{Z}_p$ is the projection map defined as $\pi_1(a,b) = a$ for $a,b\in\mathbb{Z}_p$.
\end{theorem}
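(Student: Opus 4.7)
The plan is a direct computation on both sides, exploiting the two algebraic identities that underlie the map $\phi_1$: the relation $u^2=0$ in $\mathcal{R}$ and the defining property $\kappa^2\equiv -1\ (\text{mod } p)$ of the scalar $\kappa$. These two identities are responsible for the cancellation that will make the left- and right-hand sides agree.

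First I would write the general elements $x=(a_0+ub_0,\ldots,a_{r-1}+ub_{r-1})$ and $x'=(a'_0+ub'_0,\ldots,a'_{r-1}+ub'_{r-1})$ in $\mathcal{R}^r$, and expand the Euclidean inner product in $\mathcal{R}$. Using $u^2=0$, this collapses to
\[
x\cdot x' \;=\; \Bigl(\sum_{i=0}^{r-1}a_ia'_i\Bigr) \;+\; u\Bigl(\sum_{i=0}^{r-1}(a_ib'_i+b_ia'_i)\Bigr),
\]
so if we set $A=\sum a_ia'_i$ and $B=\sum(a_ib'_i+b_ia'_i)$, then by definition of $\phi_1$ we have $\phi_1(x\cdot x')=(A+B,\kappa B)$, and therefore $\pi_1\circ\phi_1(x\cdot x')=A+B$.

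Next I would compute the right-hand side from the definition of $\overline{\phi}_1$: the vector $\overline{\phi}_1(x)$ is $(a_0+b_0,\ldots,a_{r-1}+b_{r-1}\mid \kappa b_0,\ldots,\kappa b_{r-1})$, and similarly for $\overline{\phi}_1(x')$. Taking the Euclidean inner product in $\mathbb{Z}_p^{2r}$ gives
\[
\overline{\phi}_1(x)\cdot\overline{\phi}_1(x') \;=\; \sum_{i=0}^{r-1}(a_i+b_i)(a'_i+b'_i) \;+\; \kappa^2\sum_{i=0}^{r-1}b_ib'_i.
\]
Expanding the first sum and applying $\kappa^2\equiv -1\ (\text{mod }p)$, the $b_ib'_i$ terms cancel, leaving $\sum(a_ia'_i+a_ib'_i+b_ia'_i)=A+B$, which coincides with $\pi_1\circ\phi_1(x\cdot x')$.

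There is really no serious obstacle here; the statement is essentially a bookkeeping identity, and the only subtle point is remembering that the Euclidean inner product on $\mathcal{R}^r$ takes values in $\mathcal{R}$ (so $\phi_1$ may legitimately be applied to it), together with the observation that the choice $\kappa^2=-1$ is precisely what is needed to make the cross-term structure from $\overline{\phi}_1$ reproduce the $\pi_1$-component of $\phi_1$. I would therefore present the proof as a single direct calculation, making the two identities $u^2=0$ and $\kappa^2=-1$ explicit at the relevant steps.
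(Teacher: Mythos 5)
Your proof is correct and follows essentially the same route as the paper: both expand $x\cdot x'$ in $\mathcal{R}$ using $u^2=0$, apply $\phi_1$ to read off the first coordinate, and then match it against the Euclidean inner product of the Gray images, where $\kappa^2\equiv -1\pmod p$ cancels the $b_ib_i'$ terms. Your write-up is in fact slightly more explicit than the paper's, which suppresses the intermediate $\kappa^2\sum b_ib_i'$ step.
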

\begin{proof}
Let $x = (x_0,x_1,\ldots,x_{r-1})$ and $x^{\prime} = (x_0^{\prime},x_1^{\prime},\ldots,x_{r-1}^{\prime})$, where $x_i = a_i+ub_i,~ x_i^{\prime} = a_i^{\prime}+ub_i^{\prime}$ and $a_i,b_i,a_i^{\prime},b_i^{\prime}\in\mathbb{Z}_p$ for $i=0,1,\ldots,r-1$. Now,
\begin{align*}
    x\cdot x^{\prime}=&\sum_{i=0}^{r-1}x_ix_i^{\prime}\\
    =&\sum_{i=0}^{r-1}\{a_ia_i^{\prime}+u(a_ib_i^{\prime}+a_i^{\prime}b_i)\}\\
    \Rightarrow~ \phi_1(x\cdot x^{\prime}) =&~ (a_0a_0^{\prime}+a_0b_0^{\prime}+a_0^{\prime}b_0+\cdots+a_{r-1}a_{r-1}^{\prime}+a_{r-1}b_{r-1}^{\prime}+a_{r-1}^{\prime}b_{r-1}~,~ \kappa(a_0b_0^{\prime}+a_0^{\prime}b_0)+\\
    &~~\cdots+\kappa(a_{r-1}b_{r-1}^{\prime}+a_{r-1}^{\prime}b_{r-1}))
\end{align*}
Also,
\begin{align*}
    \overline{\phi}_1(x)\cdot\overline{\phi}_1(x^{\prime}) =~& (a_0+b_0,a_1+b_1,\ldots,a_{r-1}+b_{r-1}\vert \kappa b_0,\kappa b_1,\ldots,\kappa b_{r-1})\\
    \cdot&~(a_0^{\prime}+b_0^{\prime},a_1^{\prime}+b_1^{\prime},\ldots,a_{r-1}^{\prime}+b_{r-1}^{\prime}\vert \kappa b_0^{\prime},\kappa b_1^{\prime},\ldots,\kappa b_{r-1}^{\prime})\\
    =~& (a_0a_0^{\prime}+a_0b_0^{\prime}+a_0^{\prime}b_0+\cdots+a_{r-1}a_{r-1}^{\prime}+a_{r-1}b_{r-1}^{\prime}+a_{r-1}^{\prime}b_{r-1})
\end{align*}
Thus, $$\overline{\phi}_1(x)\cdot\overline{\phi}_1(x^{\prime}) = \pi_1\circ \phi_1(x\cdot x^{\prime}).$$
\end{proof}
From the above theorem, we get that the images of two orthogonal elements in $\mathcal{R}^r$ under the map $\overline{\phi}_1$ are also orthogonal. Now we have the following result.
\begin{corollary}\label{Corollary3}
Let $\mathfrak{C}$ be an $\mathcal{R}$-additive code of length $r$. Then$$\overline{\phi}_1(\mathfrak{C}^{\perp}) = \overline{\phi}_1(\mathfrak{C})^{\perp}.$$
\end{corollary}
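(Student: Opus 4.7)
The plan is to prove the identity by first establishing the inclusion $\overline{\phi}_1(\mathfrak{C}^\perp) \subseteq \overline{\phi}_1(\mathfrak{C})^\perp$ directly from Theorem \ref{Theorem8}, and then promoting it to an equality via a cardinality count. For the inclusion, I would take an arbitrary $v \in \mathfrak{C}^\perp$ and any $w \in \mathfrak{C}$. Since $v \cdot w = 0$ in $\mathcal{R}$, applying $\phi_1$ gives $\phi_1(v \cdot w) = \phi_1(0) = (0,0)$, and hence $\pi_1\circ\phi_1(v\cdot w) = 0$. Theorem \ref{Theorem8} immediately yields $\overline{\phi}_1(v) \cdot \overline{\phi}_1(w) = 0$, so $\overline{\phi}_1(v) \in \overline{\phi}_1(\mathfrak{C})^\perp$.

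For the reverse inclusion, I would argue by sizes. The map $\overline{\phi}_1: \mathcal{R}^r \to \mathbb{Z}_p^{2r}$ is a $\mathbb{Z}_p$-linear bijection (as noted right before Theorem \ref{Theorem8}), so both images are $\mathbb{Z}_p$-subspaces of $\mathbb{Z}_p^{2r}$ with $|\overline{\phi}_1(\mathfrak{C})| = |\mathfrak{C}|$ and $|\overline{\phi}_1(\mathfrak{C}^\perp)| = |\mathfrak{C}^\perp|$. The standard dimension identity for Euclidean duals over the field $\mathbb{Z}_p$ gives $|\overline{\phi}_1(\mathfrak{C})| \cdot |\overline{\phi}_1(\mathfrak{C})^\perp| = p^{2r}$. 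On the $\mathcal{R}$ side, the Frobenius property of $\mathcal{R}$ (Lemma $2$) provides $|\mathfrak{C}| \cdot |\mathfrak{C}^\perp| = |\mathcal{R}|^r = p^{2r}$. Combining these equalities yields $|\overline{\phi}_1(\mathfrak{C}^\perp)| = |\overline{\phi}_1(\mathfrak{C})^\perp|$, and together with the inclusion already established, this forces equality.

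The only step that requires care is the identity $|\mathfrak{C}| \cdot |\mathfrak{C}^\perp| = p^{2r}$. If ``$\mathcal{R}$-additive code'' is read (as in the $\mathcal{R}\mathcal{S}$ definition) as an $\mathcal{R}$-submodule, this is the classical MacWilliams-type cardinality identity for codes over a Frobenius ring and can be invoked directly. If one wishes to cover merely $\mathbb{Z}_p$-linear (i.e.\ additive) subgroups, the same identity still follows by a character argument: pair $v \in \mathcal{R}^r$ with $w \in \mathfrak{C}$ via $\chi(v\cdot w)$ for a generating character $\chi$ of $\mathcal{R}$ (Lemma \ref{Lemma5}), so that $\mathcal{R}^r/\mathfrak{C}^\perp \cong \widehat{\mathfrak{C}}$ as finite abelian groups, giving $|\mathfrak{C}^\perp| = |\mathcal{R}|^r/|\mathfrak{C}|$. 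This is the step I expect to be the main subtlety; everything else is a bookkeeping consequence of Theorem \ref{Theorem8} and the bijectivity of $\overline{\phi}_1$.
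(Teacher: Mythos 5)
Your proposal is correct and follows essentially the same route as the paper: the inclusion $\overline{\phi}_1(\mathfrak{C}^{\perp})\subseteq\overline{\phi}_1(\mathfrak{C})^{\perp}$ from Theorem \ref{Theorem8}, followed by a cardinality comparison using the bijectivity of $\overline{\phi}_1$ and the identity $|\mathfrak{C}|\cdot|\mathfrak{C}^{\perp}|=p^{2r}$. In fact you are somewhat more careful than the paper, which silently writes $|\overline{\phi}_1(\mathfrak{C})^{\perp}|=p^{2r}/|\mathfrak{C}|=|\mathfrak{C}^{\perp}|$ without justifying the second equality, whereas you correctly flag that this is exactly where the Frobenius property of $\mathcal{R}$ (or a generating-character argument) is needed.
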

\begin{proof}
From Theorem \ref{Theorem8}, we have $\overline{\phi}_1(\mathfrak{C}^{\perp})\subseteq\overline{\phi}_1(\mathfrak{C})^{\perp}$. Now, if we can show that $|\overline{\phi}_1(\mathfrak{C})^{\perp}| = |\overline{\phi}_1(\mathfrak{C}^{\perp})|$ then we are done. We know that $\overline{\phi}_1$ is an isomorphism and hence $|\overline{\phi}_1(\mathfrak{C})| = |\mathfrak{C}|$, $|\overline{\phi}_1(\mathfrak{C}^{\perp})| = |\mathfrak{C}^{\perp}|$. Also, $|\mathbb{Z}_p^{2r}| = p^{2r}$. Thus, $|\overline{\phi}_1(\mathfrak{C})^{\perp}| = \frac{p^{2r}}{|\overline{\phi}_1(\mathfrak{C})|} = \frac{p^{2r}}{|\mathfrak{C}|} = |\mathfrak{C}^{\perp}| = |\overline{\phi}_1(\mathfrak{C}^{\perp})|$.
\end{proof}

We define another Gray map$$\phi_2 : \mathcal{S}\rightarrow \mathbb{Z}_p^3$$ by
\begin{align*}
    \phi_2(1) &= (1,0,0)\\
    \phi_2(u) &= (1,\kappa,1)\\
    \phi_2(u^2) &= (1,\kappa,0),
\end{align*}
where $\kappa\in\mathbb{Z}_p$ is such that $\kappa^2\equiv -1~(\text{mod }p)$, and thus $\phi_2(a+ub+u^2d) = (a+b+d,\kappa(b+d),b)$ for all $a,b,d\in\mathbb{Z}_p$. One can easily verify that the map $\phi_2$ is $\mathbb{Z}_p$-linear and bijective.\vspace{0.05 cm}

The Lee weight of an element $x\in\mathcal{S}$ is defined as $wt_L(x) = wt_H(\phi_2(x))$, where $wt_H(y)$ denotes the Hamming weight of $y$ and the Lee distance between two elements $x,y\in\mathcal{S}$ is defined as $d_L(x,y) = wt_L(x-y)$. The Gray map $\phi_2$ can be extended to $\overline{\phi}_2 : \mathcal{S}^s\rightarrow\mathbb{Z}_p^{3s}$ by
\begin{align*}
    &\overline{\phi}_2(a_0+ub_0+u^2d_0,a_1+ub_1+u^2d_1,\ldots,a_{s-1}+ub_{s-1}+u^2d_{s-1})\\
    =~& (a_0+b_0+d_0,a_1+b_1+d_1,\ldots,a_{s-1}+b_{s-1}+d_{s-1}\vert \kappa(b_0+d_0),\kappa(b_1+d_1),\\
    &~~\ldots,\kappa(b_{s-1}+d_{s-1})\vert b_0,b_1,\ldots,b_{s-1})
\end{align*}
where $\kappa\in\mathbb{Z}_p$ is such that $\kappa^2\equiv -1(\text{mod }p)$.\vspace{0.05 cm}

The Lee weight of an element $x = (x_0,x_1,\ldots,x_{s-1})\in\mathcal{S}^s$ is defined as $wt_L(x) = \displaystyle\sum_{i=0}^{s-1}wt_L(x_i)$ and the Lee distance between two elements $x,y\in\mathcal{S}^r$ is defined as $d_L(x,y) = wt_L(x-y)$.\vspace{0.05 cm}

From the above definition of $\overline{\phi}_2$, it can be observed that $\overline{\phi}_2$ is a distance preserving map from $\mathcal{S}^s$ (Lee weight) to $\mathbb{Z}_p^{3s}$ (Hamming weight). Further, if $\mathfrak{C}$ is an $\mathcal{S}$-additive code with parameters $[s,k,d]$ then $\overline{\phi}_2(\mathfrak{C})$ is a $[3s,k,d]$-code over $\mathbb{Z}_p$.\vspace{0.01 cm}

Now, similar to the Theorem \ref{Theorem8} and Corollary \ref{Corollary3}, we have the following two results.
\begin{theorem}\label{Theorem9}
For any two elements $x,x^{\prime}\in\mathcal{S}^s$,$$\overline{\phi}_2(x)\cdot\overline{\phi}_2(x^{\prime}) = \hat{\pi}_1\circ \phi_2(x\cdot x^{\prime}),$$where $\hat{\pi}_1 : \mathbb{Z}_p^3\rightarrow\mathbb{Z}_p$ is the projection map defined as $\hat{\pi}_1(a,b,d) = a$ for $a,b,d\in\mathbb{Z}_p$. In particular, the images of two orthogonal elements in $\mathcal{S}^s$ under the map $\overline{\phi}_2$ are also orthogonal.
\end{theorem}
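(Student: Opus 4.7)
The plan is to verify the identity by a direct component-wise computation, mirroring the approach used for Theorem~\ref{Theorem8}. First I would write $x=(x_0,\ldots,x_{s-1})$ and $x'=(x_0',\ldots,x_{s-1}')$ with $x_i=a_i+ub_i+u^2d_i$ and $x_i'=a_i'+ub_i'+u^2d_i'$, where the coefficients lie in $\mathbb{Z}_p$. The key algebraic input on the $\mathcal{S}$-side is the relation $u^3=0$, which collapses the product $x_i x_i'$ to
$$x_i x_i' \;=\; a_ia_i' + u\bigl(a_ib_i'+a_i'b_i\bigr) + u^2\bigl(a_id_i'+a_i'd_i+b_ib_i'\bigr).$$
Summing over $i$ and applying $\phi_2$ followed by $\hat{\pi}_1$, which just reads off the first coordinate $a+b+d$ of $\phi_2(a+ub+u^2d)$, I obtain
$$\hat{\pi}_1\circ\phi_2(x\cdot x') \;=\; \sum_{i=0}^{s-1}\bigl(a_ia_i'+a_ib_i'+a_i'b_i+a_id_i'+a_i'd_i+b_ib_i'\bigr).$$

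For the other side, I would expand the standard Euclidean inner product $\overline{\phi}_2(x)\cdot\overline{\phi}_2(x')$ on $\mathbb{Z}_p^{3s}$ as a sum of three blocks dictated by the definition of $\overline{\phi}_2$: the first block contributes $\sum_i(a_i+b_i+d_i)(a_i'+b_i'+d_i')$, the middle block contributes $\kappa^2\sum_i(b_i+d_i)(b_i'+d_i')$, and the last block contributes $\sum_i b_ib_i'$. The decisive step is to substitute $\kappa^2\equiv -1\pmod p$, which causes the middle block to subtract off precisely the quadratic terms $b_ib_i'+b_id_i'+d_ib_i'+d_id_i'$ from the expansion of the first block, leaving exactly the expression displayed above. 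The ``in particular'' clause is then immediate, since $x\cdot x'=0$ forces $\phi_2(x\cdot x')=(0,0,0)$ and hence the projected first coordinate vanishes.

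The argument is essentially a routine polynomial expansion modulo $u^3$ on one side and modulo $\kappa^2+1$ on the other, so there is no deep obstacle; the only point that needs a moment of care is the bookkeeping of which cross-terms $b_id_i',\,d_ib_i',\,d_id_i'$ cancel against the middle block and which survive, and verifying that the $b_ib_i'$ coming from the third block restores the missing diagonal term. Once this matches term-by-term for each index $i$, the global identity follows by summation, completing the proof in the same style as Theorem~\ref{Theorem8}.
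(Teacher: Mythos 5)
Your proposal is correct and follows exactly the route the paper intends: the paper's proof of this theorem is simply ``Similar to the proof of Theorem~\ref{Theorem8},'' i.e., the same direct component-wise expansion, using $u^3=0$ to get $x_ix_i'=a_ia_i'+u(a_ib_i'+a_i'b_i)+u^2(a_id_i'+a_i'd_i+b_ib_i')$ on one side and $\kappa^2\equiv-1\pmod p$ to cancel the middle Gray block against the cross-terms of the first block on the other. Your term-by-term bookkeeping checks out, so nothing further is needed.
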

\begin{proof}
Similar to the proof of Theorem \ref{Theorem8}.
\end{proof}
\begin{corollary}
Let $\mathfrak{C}$ be an $\mathcal{S}$-additive code of length $s$. Then$$\overline{\phi}_2(\mathfrak{C}^{\perp}) = \overline{\phi}_2(\mathfrak{C})^{\perp}.$$
\end{corollary}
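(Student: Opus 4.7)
The plan is to mirror the proof of Corollary~\ref{Corollary3} verbatim, with $\overline{\phi}_1$ replaced by $\overline{\phi}_2$, $\pi_1$ by $\hat{\pi}_1$, and the ambient cardinality $p^{2r}$ by $p^{3s}$.

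First I would establish the inclusion $\overline{\phi}_2(\mathfrak{C}^{\perp}) \subseteq \overline{\phi}_2(\mathfrak{C})^{\perp}$ by an immediate appeal to Theorem~\ref{Theorem9}. Indeed, for any $v \in \mathfrak{C}^{\perp}$ and $w \in \mathfrak{C}$ one has $v \cdot w = 0$ in $\mathcal{S}$, so
\[
\overline{\phi}_2(v) \cdot \overline{\phi}_2(w) \;=\; \hat{\pi}_1 \circ \phi_2(v \cdot w) \;=\; \hat{\pi}_1(\phi_2(0)) \;=\; 0,
\]
which gives the claimed containment.

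Next I would close the argument with a cardinality comparison. Since $\overline{\phi}_2 : \mathcal{S}^s \to \mathbb{Z}_p^{3s}$ is a $\mathbb{Z}_p$-linear bijection, it preserves sizes, so $|\overline{\phi}_2(\mathfrak{C})| = |\mathfrak{C}|$ and $|\overline{\phi}_2(\mathfrak{C}^{\perp})| = |\mathfrak{C}^{\perp}|$. Because $\mathcal{S}$ is a Frobenius ring (recorded in Section~\ref{sec 2}), the standard identity $|\mathfrak{C}| \cdot |\mathfrak{C}^{\perp}| = |\mathcal{S}|^{s} = p^{3s}$ applies to any $\mathcal{S}$-additive code $\mathfrak{C}$ of length $s$. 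On the ambient $\mathbb{Z}_p$-linear side we have $|\overline{\phi}_2(\mathfrak{C})| \cdot |\overline{\phi}_2(\mathfrak{C})^{\perp}| = |\mathbb{Z}_p^{3s}| = p^{3s}$. Combining these yields
\[
|\overline{\phi}_2(\mathfrak{C})^{\perp}| \;=\; \frac{p^{3s}}{|\overline{\phi}_2(\mathfrak{C})|} \;=\; \frac{p^{3s}}{|\mathfrak{C}|} \;=\; |\mathfrak{C}^{\perp}| \;=\; |\overline{\phi}_2(\mathfrak{C}^{\perp})|.
\]

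The inclusion combined with the equality of finite sizes then forces $\overline{\phi}_2(\mathfrak{C}^{\perp}) = \overline{\phi}_2(\mathfrak{C})^{\perp}$. I do not foresee any substantive obstacle: the argument is a routine translation of Corollary~\ref{Corollary3}, and the only nontrivial ingredient, the Frobenius-ring cardinality formula for dual codes, is already provided by the earlier material of the paper. The only place one has to be slightly careful is checking that the projection $\hat{\pi}_1$ on the first $\mathbb{Z}_p$-coordinate, rather than the full inner product, is what relates $\overline{\phi}_2(v)\cdot\overline{\phi}_2(w)$ to $v\cdot w$; but this is exactly the content of Theorem~\ref{Theorem9}, so the inclusion step is immediate.
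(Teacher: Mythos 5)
Your proposal is correct and follows exactly the route the paper intends: the paper's own proof is just the remark ``Similar to the proof of Corollary~\ref{Corollary3},'' and your argument is precisely that proof transplanted, using Theorem~\ref{Theorem9} for the inclusion and the cardinality count $|\overline{\phi}_2(\mathfrak{C})^{\perp}| = p^{3s}/|\mathfrak{C}| = |\mathfrak{C}^{\perp}| = |\overline{\phi}_2(\mathfrak{C}^{\perp})|$ to upgrade it to equality. Your explicit appeal to the Frobenius property of $\mathcal{S}$ to justify $|\mathfrak{C}|\cdot|\mathfrak{C}^{\perp}| = p^{3s}$ is a small improvement in rigor over the paper, which leaves that identity implicit.
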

\begin{proof}
Similar to the proof of Corollary \ref{Corollary3}.
\end{proof}
Now, we define a Gray map from $\mathbb{Z}_p\mathcal{R}\mathcal{S}$ to $\mathbb{Z}_p^6$ by using the Gray maps $\phi_1$ and $\phi_2$. Define$$\phi : \mathbb{Z}_p\mathcal{R}\mathcal{S}\rightarrow\mathbb{Z}_p^6$$by$$\phi(x,y,z) = (x,\phi_1(y),\phi_2(z))\quad\text{for all }x\in\mathbb{Z}_p,~y\in\mathcal{R},~z\in\mathcal{S}.$$Obviously, the map $\phi$ is also $\mathbb{Z}_p$-linear and bijective. The Lee weight of an element $\alpha = (x,y,z)\in\mathbb{Z}_p\mathcal{R}\mathcal{S}$ is defined as $wt_L(\alpha) = wt_L(x)+wt_L(y)+wt_L(z)$, where $wt_L(x) = \text{min}(x,p-x)$. The Lee distance between two elements $\alpha = (x,y,z),~\alpha^{\prime} = (x^{\prime},y^{\prime},z^{\prime})\in\mathbb{Z}_p\mathcal{R}\mathcal{S}$, is defined as $d_L(\alpha,\alpha^{\prime}) = wt_L(\alpha - \alpha^{\prime})$.\vspace{0.01 cm}

The map $\phi$ can be extended to $\overline{\phi} : \mathbb{Z}_p^q\mathcal{R}^r\mathcal{S}^s\rightarrow\mathbb{Z}_p^{q+2r+3s}$ by$$\overline{\phi}(x,y,z)
= (x,\overline{\phi}_1(y),\overline{\phi}_2(z)),$$where $x\in\mathbb{Z}_p^q,~y\in\mathcal{R}^r,~z\in\mathcal{S}^s$.\vspace{0.01 cm}

The Lee weight of $\alpha = (x_0,x_1,\ldots,x_{q-1}\vert y_0,y_1,\ldots,y_{r-1}\vert z_0,z_1,\ldots,z_{s-1})\in\mathbb{Z}_p^q\mathcal{R}^r\mathcal{S}^s$ is $wt_L(\alpha) = \displaystyle\sum_{i=0}^{q-1} wt_L(x_i)+\displaystyle\sum_{i=0}^{r-1} wt_L(y_i)+\displaystyle\sum_{i=0}^{s-1} wt_L(z_i)$ and the Lee distance between $\alpha, \alpha^{\prime}\in\mathbb{Z}_p^q\mathcal{R}^r\mathcal{S}^s$ is $d_L(\alpha,\alpha^{\prime}) = wt_L(\alpha-\alpha^{\prime})$.\vspace{0.01 cm}

We observe that $\overline{\phi}$ is also a distance preserving map from $\mathbb{Z}_p^q\mathcal{R}^r\mathcal{S}^s$ (Lee weight) to $\mathbb{Z}_p^{q+2r+3s}$ (Hamming weight). Further, if $\mathfrak{C}$ is a $\mathbb{Z}_p^q\mathcal{R}^r\mathcal{S}^s$-additive code of block length $(q,r,s)$, having $p^k$ codewords with minimum distance $d$ then $\overline{\phi}(\mathfrak{C})$ is a $[q+2r+3s,k,d]$-code over $\mathbb{Z}_p$.
\begin{theorem}\label{Theorem10}
For any two elements $\alpha = (x,y,z),~\alpha^{\prime} = (x^{\prime},y^{\prime},z^{\prime})\in\mathbb{Z}_p^q\mathcal{R}^r\mathcal{S}^s$, we have$$\overline{\phi}(\alpha)\cdot\overline{\phi}(\alpha^{\prime}) = x\cdot x^{\prime}+\pi_1\circ\overline{\phi}_1(y\cdot y^{\prime})+\hat{\pi}_1\circ\overline{\phi}_2(z\cdot z^{\prime}),$$where $x,x^{\prime}\in\mathbb{Z}_p^q,~ y,y^{\prime}\in\mathcal{R}^r,~ z,z^{\prime}\in\mathcal{S}^s$.
\end{theorem}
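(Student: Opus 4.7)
The plan is to reduce the statement directly to Theorems \ref{Theorem8} and \ref{Theorem9} by exploiting the block structure of $\overline{\phi}$.

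First, I would unfold the definition: for $\alpha=(x,y,z)$ with $x\in\mathbb{Z}_p^q$, $y\in\mathcal{R}^r$, $z\in\mathcal{S}^s$, one has $\overline{\phi}(\alpha)=(x,\overline{\phi}_1(y),\overline{\phi}_2(z))\in\mathbb{Z}_p^{q+2r+3s}$, and likewise for $\alpha'$. The standard Euclidean inner product on $\mathbb{Z}_p^{q+2r+3s}$ is additive over a concatenation of coordinate blocks, so it splits cleanly as
\[
\overline{\phi}(\alpha)\cdot\overline{\phi}(\alpha') \;=\; x\cdot x' \;+\; \overline{\phi}_1(y)\cdot\overline{\phi}_1(y') \;+\; \overline{\phi}_2(z)\cdot\overline{\phi}_2(z').
\]
This step is purely formal: I would just write out the coordinates of both vectors in the three blocks of lengths $q$, $2r$ and $3s$, and observe that the three partial sums are exactly the inner products in each block.

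Next, I would invoke the two previously established identities. By Theorem \ref{Theorem8}, the middle term equals $\pi_1\circ\phi_1(y\cdot y')$, and by Theorem \ref{Theorem9}, the last term equals $\hat{\pi}_1\circ\phi_2(z\cdot z')$. Substituting these into the split expression gives the claimed formula (matching the form of the conclusion in the statement, with $\overline{\phi}_1,\overline{\phi}_2$ there understood as $\phi_1,\phi_2$ applied to the ring-valued products $y\cdot y'\in\mathcal{R}$ and $z\cdot z'\in\mathcal{S}$).

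There is essentially no serious obstacle here, since the two non-trivial pieces have already been handled in Theorems \ref{Theorem8} and \ref{Theorem9}; the only point requiring a line of justification is the block-decomposition of the Euclidean inner product, which follows immediately from the definition of $\overline{\phi}$. As an immediate consequence, whenever $\alpha\cdot\alpha'=0$ in the sense of the inner product of Definition \ref{Definition6} (so that $x\cdot x'$, $u(y\cdot y')$ and $u^2(z\cdot z')$ all contribute $0$ after taking the appropriate $\mathbb{Z}_p$-components), one obtains $\overline{\phi}(\alpha)\cdot\overline{\phi}(\alpha')=0$, so the Gray image preserves orthogonality and hence $\overline{\phi}(\mathfrak{C}^{\perp})\subseteq\overline{\phi}(\mathfrak{C})^{\perp}$; this is the payoff I would emphasize right after the proof.
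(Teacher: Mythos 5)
Your proposal is correct and follows essentially the same route as the paper: decompose $\overline{\phi}(\alpha)\cdot\overline{\phi}(\alpha')$ over the three coordinate blocks of lengths $q$, $2r$, $3s$, and then apply Theorems \ref{Theorem8} and \ref{Theorem9} to the $\mathcal{R}$- and $\mathcal{S}$-blocks. Your remark that the $\overline{\phi}_1,\overline{\phi}_2$ in the statement should be read as $\phi_1,\phi_2$ applied to the single ring elements $y\cdot y'$ and $z\cdot z'$ is a fair observation about the paper's notation, and the orthogonality consequence you note is exactly the corollary the paper draws next.
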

\begin{proof}
We know that $~\overline{\phi}(\alpha) = (x,\overline{\phi}_1(y),\overline{\phi}_2(z))$ and $~\overline{\phi}(\alpha^{\prime}) = (x^{\prime},\overline{\phi}_1(y^{\prime}),\overline{\phi}_2(z^{\prime}))$. Thus
\begin{align*}
    \overline{\phi}(\alpha)\cdot\overline{\phi}(\alpha^{\prime}) =&~ x\cdot x^{\prime}+\overline{\phi}_1(y)\cdot\overline{\phi}_1(y^{\prime})+\overline{\phi}_2(z)\cdot\overline{\phi}_2(z^{\prime})
\end{align*}
Now, the proof follows from Theorem \ref{Theorem8} and Theorem \ref{Theorem9}.
\end{proof}
Considering the inner product which we defined in Definition \ref{Definition6}, we get

from Theorem \ref{Theorem10} that the images of two orthogonal elements in $\mathbb{Z}_p^q\mathcal{R}^r\mathcal{S}^s$ under the map $\overline{\phi}$ are also orthogonal and we have the following corollary.
\begin{corollary}
If $\mathfrak{C}$ is a $\mathbb{Z}_p\mathcal{R}\mathcal{S}$-additive code of block length $(q,r,s)$ then$$\overline{\phi}(\mathfrak{C}^{\perp}) = \overline{\phi}(\mathfrak{C})^{\perp}.$$
\end{corollary}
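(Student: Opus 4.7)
The plan is to mirror the proof of Corollary~\ref{Corollary3}: first establish the inclusion $\overline{\phi}(\mathfrak{C}^{\perp}) \subseteq \overline{\phi}(\mathfrak{C})^{\perp}$ using Theorem~\ref{Theorem10}, then match cardinalities to upgrade containment to equality.

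For the containment, I would take $\alpha = (x,y,z) \in \mathfrak{C}^{\perp}$ and arbitrary $\beta = (x',y',z') \in \mathfrak{C}$. Writing $F = x\cdot x' \in \mathbb{Z}_p$, $y\cdot y' = A + uB \in \mathcal{R}$, and $z\cdot z' = C + uD + u^2E \in \mathcal{S}$ with $A,B,C,D,E \in \mathbb{Z}_p$, Definition~\ref{Definition6} gives
\begin{align*}
\langle \alpha,\beta\rangle = u^2 F + u(A + uB) + (C + uD + u^2 E) = C + u(A + D) + u^2 (B + E + F),
\end{align*}
so the hypothesis $\langle \alpha, \beta\rangle = 0$ forces the three relations $C = 0$, $A + D = 0$, and $B + E + F = 0$ in $\mathbb{Z}_p$. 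Theorem~\ref{Theorem10} then gives
\begin{align*}
\overline{\phi}(\alpha)\cdot \overline{\phi}(\beta) = F + \pi_1\circ\phi_1(y\cdot y') + \hat{\pi}_1\circ\phi_2(z\cdot z') = F + (A + B) + (C + D + E),
\end{align*}
and substituting the three relations (grouping as $F + (A+D) + (B+E) + C$) shows this equals $0$. Hence $\overline{\phi}(\alpha) \in \overline{\phi}(\mathfrak{C})^{\perp}$.

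For equality of sizes, $\overline{\phi}$ is a $\mathbb{Z}_p$-linear bijection of the ambient spaces, so $|\overline{\phi}(\mathfrak{C}^{\perp})| = |\mathfrak{C}^{\perp}|$ and $|\overline{\phi}(\mathfrak{C})| = |\mathfrak{C}|$, while standard Euclidean duality in $\mathbb{Z}_p^{q+2r+3s}$ gives $|\overline{\phi}(\mathfrak{C})^{\perp}| = p^{q+2r+3s}/|\mathfrak{C}|$. The proof therefore reduces to the identity
\begin{align*}
|\mathfrak{C}|\cdot|\mathfrak{C}^{\perp}| = p^{q+2r+3s} = |\mathbb{Z}_p^q\mathcal{R}^r\mathcal{S}^s|.
\end{align*}
This is the main obstacle: it requires the weighted inner product of Definition~\ref{Definition6} to be non-degenerate, which holds because the factors $u^2, u, 1$ are chosen so that each component pairs into the socle of $\mathcal{S}$ in a non-degenerate way; combined with the Frobenius structure of $\mathbb{Z}_p\mathcal{R}\mathcal{S}$, this yields the classical duality size formula. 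With the identity in hand, the containment sharpens to equality and the corollary follows.
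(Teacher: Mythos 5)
Your proof is correct and follows essentially the same route as the paper, which simply invokes the argument of Corollary~\ref{Corollary3}: containment of $\overline{\phi}(\mathfrak{C}^{\perp})$ in $\overline{\phi}(\mathfrak{C})^{\perp}$ via Theorem~\ref{Theorem10}, then a cardinality count. Your explicit check of the three component relations $C=0$, $A+D=0$, $B+E+F=0$ and your observation that the size identity $|\mathfrak{C}|\cdot|\mathfrak{C}^{\perp}| = p^{q+2r+3s}$ rests on the non-degeneracy of the weighted inner product over the Frobenius ring merely supply details that the paper leaves implicit.
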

\begin{proof}
Similar to the proof of Corollary \ref{Corollary3}.
\end{proof}

\subsection{Results on Gray images of additive codes}
Now, we will give a few results related to the Gray images of additive cyclic and additive constacyclic codes.

\begin{lemma}\label{Lemma3}
Let $\sigma_1$, $\theta_2$, $\theta_{\mu_1,2}$ and $T_{\mu_1}$ are respectively the cyclic shift operator, the $2$-quasi-cyclic shift operator, the $(\mu_1,2)$-quasi-twisted shift operator and the $\mu_1$-constacyclic shift operator on $\mathcal{R}^r$ with $\mu_1\in \mathbb{Z}_p^{\ast}$. Then
\begin{enumerate}
    \item $\overline{\phi}_1\circ\sigma_1 = \theta_2\circ\overline{\phi}_1$;
    \item $\overline{\phi}_1\circ T_{\mu_1} = \theta_{\mu_1,2}\circ\overline{\phi}_1$.
\end{enumerate}
\end{lemma}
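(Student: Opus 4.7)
The plan is to verify both identities by direct coordinate-wise computation on a generic element $x=(a_0+ub_0,a_1+ub_1,\ldots,a_{r-1}+ub_{r-1})\in\mathcal{R}^r$. The core observation is structural: the Gray map $\overline{\phi}_1$ separates the "mod $u$'' part from the "$u$-part'' and arranges them as two consecutive blocks of length $r$, namely
\[
\overline{\phi}_1(x)=\bigl(a_0+b_0,\ldots,a_{r-1}+b_{r-1}\,\big|\,\kappa b_0,\ldots,\kappa b_{r-1}\bigr).
\]
Because a (consta)cyclic shift on $\mathcal{R}^r$ acts uniformly on every component, after applying $\overline{\phi}_1$ it acts separately and identically on each of the two $\mathbb{Z}_p^{r}$-blocks, which is precisely the definition of $\theta_2$ (resp.\ $\theta_{\mu_1,2}$). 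So the proof reduces to writing out both compositions and comparing.

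For part (1), I would compute $\sigma_1(x)=(a_{r-1}+ub_{r-1},a_0+ub_0,\ldots,a_{r-2}+ub_{r-2})$, apply $\overline{\phi}_1$, and read off the resulting $2r$-tuple. Then compute $\overline{\phi}_1(x)$ first and apply $\theta_2$, which cyclically shifts each length-$r$ half separately. The two tuples agree in every coordinate. For part (2), the same scheme works; the only subtlety is that $T_{\mu_1}$ multiplies the wrap-around entry by $\mu_1\in\mathbb{Z}_p^{\ast}$. Since $\mu_1\in\mathbb{Z}_p^\ast\subset\mathcal{R}$ reduces to itself mod $u$, we have $\mu_1(a_{r-1}+ub_{r-1})=\mu_1 a_{r-1}+u\mu_1 b_{r-1}$, so
\[
\overline{\phi}_1(T_{\mu_1}(x))=(\mu_1a_{r-1}+\mu_1b_{r-1},a_0+b_0,\ldots\,\big|\,\kappa\mu_1b_{r-1},\kappa b_0,\ldots),
\]
which coincides with $\theta_{\mu_1,2}(\overline{\phi}_1(x))$ because $\sigma_{\mu_1}$ on each block multiplies the single wrap-around coordinate by $\mu_1$ and $\mathbb{Z}_p$ is commutative so $\kappa\mu_1 b_{r-1}=\mu_1\kappa b_{r-1}$.

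There is no real obstacle here; this is a routine verification. The only thing to be slightly careful about is the fact that $\mu_1$ is taken in $\mathbb{Z}_p^\ast$ rather than in $U(\mathcal{R})$, which is what ensures $\mu_1\cdot(a+ub)=\mu_1a+u\mu_1b$ has the simple form above and that $\mu_1$ commutes past $\kappa$. Hence I would state and prove the two identities together in a single displayed comparison for each part, concluding that the diagrams
\[
\overline{\phi}_1\circ\sigma_1=\theta_2\circ\overline{\phi}_1,\qquad \overline{\phi}_1\circ T_{\mu_1}=\theta_{\mu_1,2}\circ\overline{\phi}_1
\]
hold on every $x\in\mathcal{R}^r$, hence as maps.
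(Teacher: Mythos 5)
Your proposal is correct and follows essentially the same route as the paper's own proof: a direct coordinate-wise computation of both compositions on a generic $x=(a_0+ub_0,\ldots,a_{r-1}+ub_{r-1})$, with the observation that $\mu_1\in\mathbb{Z}_p^{\ast}$ acts componentwise as $\mu_1(a+ub)=\mu_1a+u\mu_1b$ so the wrap-around coordinate in each $\mathbb{Z}_p^r$-block picks up exactly the factor $\mu_1$. Nothing is missing.
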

\begin{proof}
Let $x=(x_0,x_1,\ldots,x_{r-1})\in\mathcal{R}^r$ where $x_i = a_i+ub_i$ with $a_i,b_i\in\mathbb{Z}_p$ for $i = 0,1,\ldots,r-1$.
\begin{enumerate}
    \item We have
    \begin{align*}
    \overline{\phi}_1\circ\sigma_1(x) &=~ \overline{\phi}_1(x_{r-1},x_0,\ldots,x_{r-2})\\
    &=~ (a_{r-1}+b_{r-1},a_0+b_0,\ldots,a_{r-2}+b_{r-2}\vert\kappa b_{r-1},\kappa b_0,\ldots,\kappa b_{r-2}),
    \end{align*}
    and also
    \begin{align*}
        \theta_2\circ\overline{\phi}_1(x) &=~  \theta_2(a_0+b_0,a_1+b_1,\ldots,a_{r-1}+b_{r-1}\vert \kappa b_0,\kappa b_1,\ldots,\kappa b_{r-1})\\
        &=~ (a_{r-1}+b_{r-1},a_0+b_0,\ldots,a_{r-2}+b_{r-2}\vert\kappa b_{r-1},\kappa b_0,\ldots,\kappa b_{r-2}).
    \end{align*}
    Thus, $\overline{\phi}_1\circ\sigma_1 = \theta_2\circ\overline{\phi}_1$.
    \item We have
    \begin{align*}
    \overline{\phi}_1\circ T_{\mu_1}(x) &=~ \overline{\phi}_1(\mu_1x_{r-1},x_0,\ldots,x_{r-2})\\
    &=~ (\mu_1(a_{r-1}+b_{r-1}),a_0+b_0,\ldots,a_{r-2}+b_{r-2}\vert\kappa\mu_1 b_{r-1},\kappa b_0,\ldots,\kappa b_{r-2}),
    \end{align*}
    and
    \begin{align*}
        \theta_{\mu_1,2}\circ\overline{\phi}_1(x) &=~  \theta_{\mu_1,2}(a_0+b_0,a_1+b_1,\ldots,a_{r-1}+b_{r-1}\vert \kappa b_0,\kappa b_1,\ldots,\kappa b_{r-1})\\
        &=~ (\mu_1(a_{r-1}+b_{r-1}),a_0+b_0,\ldots,a_{r-2}+b_{r-2}\vert\kappa\mu_1 b_{r-1},\kappa b_0,\ldots,\kappa b_{r-2}).
    \end{align*}
    Thus, $\overline{\phi}_1\circ T_{\mu_1} = \theta_{\mu_1,2}\circ\overline{\phi}_1$.
\end{enumerate}
\end{proof}
Using the above lemma, we now have the following theorem on the Gray image of cyclic and constacyclic codes over $\mathcal{R}$.

\begin{theorem}
Let $\mu_1\in \mathbb{Z}_p^{\ast}$. Then we have the following:
\begin{enumerate}
    \item If $\mathfrak{C}$ is a cyclic code of length $r$ over $\mathcal{R}$ then $\overline{\phi}_1(\mathfrak{C})$ is a $2$-quasi-cyclic code of length $2r$ over $\mathbb{Z}_p$.
    \item If $\mathfrak{C}$ is a $\mu_1$-constacyclic code of length $r$ over $\mathcal{R}$ then $\overline{\phi}_1(\mathfrak{C})$ is a $(\mu_1,2)$-quasi-twisted code of length $2r$ over $\mathbb{Z}_p$.
\end{enumerate}
\end{theorem}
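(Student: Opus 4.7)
The plan is to deduce both parts directly from Lemma \ref{Lemma3}, which provides the crucial intertwining relations $\overline{\phi}_1\circ\sigma_1 = \theta_2\circ\overline{\phi}_1$ and $\overline{\phi}_1\circ T_{\mu_1} = \theta_{\mu_1,2}\circ\overline{\phi}_1$. Once these are in hand, the invariance of $\mathfrak{C}$ under the appropriate shift on $\mathcal{R}^r$ transfers mechanically to invariance of $\overline{\phi}_1(\mathfrak{C})$ under the corresponding shift on $\mathbb{Z}_p^{2r}$. Before applying the shift argument, I would note that $\overline{\phi}_1(\mathfrak{C})$ is indeed a $\mathbb{Z}_p$-linear code of length $2r$, which follows from the fact that $\overline{\phi}_1$ is $\mathbb{Z}_p$-linear and bijective, together with the observation that any $\mathcal{R}$-additive code is automatically closed under $\mathbb{Z}_p$-scalar multiplication.

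For part $(1)$, I would take an arbitrary element $y\in\overline{\phi}_1(\mathfrak{C})$, write $y=\overline{\phi}_1(x)$ for some $x\in\mathfrak{C}$, and compute
\[
\theta_2(y) = \theta_2(\overline{\phi}_1(x)) = \overline{\phi}_1(\sigma_1(x))
\]
using part $(1)$ of Lemma \ref{Lemma3}. Since $\mathfrak{C}$ is cyclic, $\sigma_1(x)\in\mathfrak{C}$, and hence $\theta_2(y)\in\overline{\phi}_1(\mathfrak{C})$. Therefore $\overline{\phi}_1(\mathfrak{C})$ is invariant under $\theta_2$, which is exactly what it means to be a $2$-quasi-cyclic code of length $2r$ over $\mathbb{Z}_p$.

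For part $(2)$, the argument is entirely analogous but uses part $(2)$ of Lemma \ref{Lemma3}: for $y = \overline{\phi}_1(x)\in\overline{\phi}_1(\mathfrak{C})$,
\[
\theta_{\mu_1,2}(y) = \theta_{\mu_1,2}(\overline{\phi}_1(x)) = \overline{\phi}_1(T_{\mu_1}(x)) \in \overline{\phi}_1(\mathfrak{C}),
\]
the last inclusion holding because $\mathfrak{C}$ is $\mu_1$-constacyclic, so $T_{\mu_1}(x)\in\mathfrak{C}$. This gives $\theta_{\mu_1,2}(\overline{\phi}_1(\mathfrak{C}))\subseteq\overline{\phi}_1(\mathfrak{C})$, so $\overline{\phi}_1(\mathfrak{C})$ is a $(\mu_1,2)$-quasi-twisted code of length $2r$ over $\mathbb{Z}_p$.

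There is no genuine obstacle in this proof since Lemma \ref{Lemma3} does all of the combinatorial work: the theorem is essentially a translation of the lemma from the level of individual codewords to the level of codes. The only item that merits mention is the preservation of linearity, which I would address in a short preliminary sentence so that the conclusion can legitimately be called a code of the stated type rather than merely a shift-invariant subset.
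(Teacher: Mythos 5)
Your proof is correct and follows essentially the same route as the paper's: both parts are immediate consequences of the intertwining relations in Lemma \ref{Lemma3} combined with the shift-invariance of $\mathfrak{C}$. The only cosmetic difference is that you argue elementwise while the paper applies the lemma at the level of sets, and your preliminary remark on $\mathbb{Z}_p$-linearity of the image is a reasonable (if minor) addition the paper leaves implicit.
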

\begin{proof}
\begin{enumerate}
    \item Suppose $\mathfrak{C}$ is a cyclic code of length $r$ over $\mathcal{R}$. Then $\sigma_1(\mathfrak{C}) = \mathfrak{C}$. Now from Lemma \ref{Lemma3} we have$$\theta_2(\overline{\phi}_1(\mathfrak{C})) = \overline{\phi}_1(\sigma_1(\mathfrak{C})) = \overline{\phi}_1(\mathfrak{C}),$$and this implies that $\overline{\phi}_1(\mathfrak{C})$ is a $2$-quasi-cyclic code of length $2r$ over $\mathbb{Z}_p$.
    \item If $\mathfrak{C}$ is a $\mu_1$-constacyclic code of length $r$ over $\mathcal{R}$ then $T_{\mu_1}(\mathfrak{C}) = \mathfrak{C}$. Now from Lemma \ref{Lemma3} we have$$\theta_{\mu_1,2}(\overline{\phi}_1(\mathfrak{C})) = \overline{\phi}_1(T_{\mu_1}(\mathfrak{C})) = \overline{\phi}_1(\mathfrak{C}),$$which implies that $\overline{\phi}_1(\mathfrak{C})$ is a $(\mu_1,2)$-quasi-twisted code of length $2r$ over $\mathbb{Z}_p$.
\end{enumerate}
\end{proof}

\begin{lemma}\label{Lemma4}
Let $\sigma_2$, $\theta_3$, $\theta_{\mu_2,3}$ and $T_{\mu_2}$ are respectively the cyclic shift operator, the $2$-quasi-cyclic shift operator, the $(\mu_2,3)$-quasi-twisted shift operator and the $\mu_2$-constacyclic shift operator on $\mathcal{S}^s$ with $\mu_2\in \mathbb{Z}_p^{\ast}$. Then
\begin{enumerate}
    \item $\overline{\phi}_2\circ\sigma_2 = \theta_3\circ\overline{\phi}_2$;
    \item $\overline{\phi}_2\circ T_{\mu_2} = \theta_{\mu_2,3}\circ\overline{\phi}_2$.
\end{enumerate}
\end{lemma}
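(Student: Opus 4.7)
The plan is to prove both assertions by a direct componentwise calculation that is essentially identical in structure to the proof of Lemma \ref{Lemma3}; the extra work comes only from tracking a third coordinate block due to the $u^2$-component in $\mathcal{S}$. I would start by fixing an arbitrary $x=(x_0,x_1,\ldots,x_{s-1})\in\mathcal{S}^s$ with $x_i=a_i+ub_i+u^2d_i$ for $a_i,b_i,d_i\in\mathbb{Z}_p$, so that by definition
\[
\overline{\phi}_2(x)=(a_0+b_0+d_0,\ldots,a_{s-1}+b_{s-1}+d_{s-1}\mid \kappa(b_0+d_0),\ldots,\kappa(b_{s-1}+d_{s-1})\mid b_0,\ldots,b_{s-1}).
\]
The key structural observation is that $\overline{\phi}_2$ writes its image as three consecutive blocks of length $s$, which is exactly the block structure on which $\theta_3$ and $\theta_{\mu_2,3}$ act.

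For part (1), I would compute $\overline{\phi}_2(\sigma_2(x))=\overline{\phi}_2(x_{s-1},x_0,\ldots,x_{s-2})$, which produces three blocks each cyclically shifted by one position. Independently, $\theta_3(\overline{\phi}_2(x))$ applies the cyclic shift $\sigma$ to each of the three length-$s$ blocks of $\overline{\phi}_2(x)$ separately, yielding the same three cyclically shifted blocks. Matching them term by term gives $\overline{\phi}_2\circ\sigma_2=\theta_3\circ\overline{\phi}_2$. For part (2) the calculation is the same except that the first entry of each block acquires a factor of $\mu_2$: on the one hand $T_{\mu_2}(x)=(\mu_2x_{s-1},x_0,\ldots,x_{s-2})$ and the $\mathbb{Z}_p$-linearity of $\phi_2$ (together with $\mu_2\in\mathbb{Z}_p^{\ast}$, so $\mu_2$ commutes with $\kappa$) propagates that $\mu_2$-factor into the first coordinate of each of the three blocks; on the other hand $\theta_{\mu_2,3}$ is defined to multiply by $\mu_2$ the first coordinate of each block after the shift, producing the identical vector.

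I do not anticipate a genuine obstacle: both identities reduce to verifying that three independent $\mathbb{Z}_p$-linear expressions in the $a_i,b_i,d_i$ are shifted in the same way on each side. The only point that needs mild care is the second assertion, where one must use that $\mu_2\in\mathbb{Z}_p^{\ast}$ (not a general unit of $\mathcal{S}$) so that $\mu_2$ distributes across the expressions $a_{s-1}+b_{s-1}+d_{s-1}$, $\kappa(b_{s-1}+d_{s-1})$, and $b_{s-1}$ without mixing the three coordinate blocks — which is precisely the hypothesis given in the statement. Once the calculations of both sides are written out symmetrically in the three blocks, equality is immediate, and the lemma is proved.
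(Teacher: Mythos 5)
Your proposal is correct and follows essentially the same route as the paper: a direct componentwise computation of both sides on an arbitrary $x=(x_0,\ldots,x_{s-1})\in\mathcal{S}^s$ with $x_i=a_i+ub_i+u^2d_i$, matching the three length-$s$ blocks term by term. Your added remark that $\mu_2\in\mathbb{Z}_p^{\ast}$ (rather than a general unit of $\mathcal{S}$) is what lets the factor $\mu_2$ distribute across the three Gray coordinates without mixing blocks is exactly the point implicitly used in the paper's calculation.
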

\begin{proof}
Let $x=(x_0,x_1,\ldots,x_{s-1})\in\mathcal{S}^s$ where $x_i = a_i+ub_i+u^2d_i$ with $a_i,b_i,d_i\in\mathbb{Z}_p$ for $i = 0,1,\ldots,s-1$.
\begin{enumerate}
    \item We have
    \begin{align*}
    \overline{\phi}_2\circ\sigma_2(x) =&~ \overline{\phi}_2(x_{s-1},x_0,\ldots,x_{s-2})\\
    =&~ (a_{s-1}+b_{s-1}+d_{s-1},a_0+b_0+d_0,\ldots,a_{s-2}+b_{s-2}+d_{s-2}\vert\kappa (b_{s-1}+d_{s-1}),\\
    &~~\kappa (b_0+d_0),\ldots,\kappa (b_{s-2}+d_{s-2})\vert b_{s-1},b_0,\ldots,b_{s-2}).
    \end{align*}
    Also,
    \begin{align*}
        \theta_3\circ\overline{\phi}_2(x) =&~  \theta_3(a_0+b_0+d_0,a_1+b_1+d_1,\ldots,a_{s-1}+b_{s-1}+d_{s-1}\vert \kappa (b_0+d_0),\\
        &~~~~~\kappa (b_1+d_1),\ldots,\kappa (b_{s-1}+d_{s-1})\vert b_0,b_1,\ldots,b_{s-1})\\
        =&~ (a_{s-1}+b_{s-1}+d_{s-1},a_0+b_0+d_0,\ldots,a_{s-2}+b_{s-2}+d_{s-2}\vert\kappa (b_{s-1}+d_{s-1}),\\
    &~~\kappa (b_0+d_0),\ldots,\kappa (b_{s-2}+d_{s-2})\vert b_{s-1},b_0,\ldots,b_{s-2}).
    \end{align*}
    Thus, $\overline{\phi}_2\circ\sigma_2 = \theta_3\circ\overline{\phi}_2$.
    \item We have
    \begin{align*}
    \overline{\phi}_2\circ T_{\mu_2}(x) =&~ \overline{\phi}_2(\mu_2x_{s-1},x_0,\ldots,x_{s-2})\\
    =&~ (\mu_2(a_{s-1}+b_{s-1}+d_{s-1}),a_0+b_0+d_0,\ldots,a_{s-2}+b_{s-2}+d_{s-2}\vert\kappa\mu_2 (b_{s-1}+d_{s-1}),\\
    &~~~\kappa (b_0+d_0),\ldots,\kappa (b_{s-2}+d_{s-2})\vert \mu_2b_{s-1},b_0,\ldots,b_{s-2}),
    \end{align*}
    and
    \begin{align*}
        \theta_{\mu_2,3}\circ\overline{\phi}_2(x) =&~  \theta_{\mu_2,3}(a_0+b_0+d_0,a_1+b_1+d_1,\ldots,a_{s-1}+b_{s-1}+d_{s-1}\vert \kappa (b_0+d_0),\\
        &~~~~~~~~~\kappa (b_1+d_1),\ldots,\kappa (b_{s-1}+d_{s-1})\vert b_0,b_1,\ldots,b_{s-1})\\
        =&~ (\mu_2(a_{s-1}+b_{s-1}+d_{s-1}),a_0+b_0+d_0,\ldots,a_{s-2}+b_{s-2}+d_{s-2}\vert\kappa\mu_2 (b_{s-1}+d_{s-1}),\\
        &~~~\kappa (b_0+d_0),\ldots,\kappa (b_{s-2}+d_{s-2})\vert \mu_2b_{s-1},b_0,\ldots,b_{s-2}).
    \end{align*}
    Thus, $\overline{\phi}_2\circ T_{\mu_2} = \theta_{\mu_2,3}\circ\overline{\phi}_2$.
\end{enumerate}
\end{proof}
Now, we have the following theorem on the Gray image of cyclic and constacyclic codes over $\mathcal{S}$.

\begin{theorem}
Let $\mu_2\in \mathbb{Z}_p^{\ast}$. Then we have the following:
\begin{enumerate}
    \item If $\mathfrak{C}$ is a cyclic code of length $s$ over $\mathcal{S}$ then $\overline{\phi}_2(\mathfrak{C})$ is a $3$-quasi-cyclic code of length $3s$ over $\mathbb{Z}_p$.
    \item If $\mathfrak{C}$ is a $\mu_2$-constacyclic code of length $s$ over $\mathcal{S}$ then $\overline{\phi}_2(\mathfrak{C})$ is a $(\mu_2,3)$-quasi-twisted code of length $3s$ over $\mathbb{Z}_p$.
\end{enumerate}
\end{theorem}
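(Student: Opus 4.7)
The plan is to mirror the proof given for the analogous theorem over $\mathcal{R}$, now leveraging Lemma \ref{Lemma4} instead of Lemma \ref{Lemma3}. Both parts reduce to the simple observation that the Gray map $\overline{\phi}_2$ intertwines the shift operators on $\mathcal{S}^s$ with the corresponding shift operators on $\mathbb{Z}_p^{3s}$, so invariance of a code under a shift on $\mathcal{S}^s$ transfers directly to invariance of its image under the conjugate shift on $\mathbb{Z}_p^{3s}$.

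For part (1), I would begin by noting that since $\mathfrak{C}$ is cyclic of length $s$ over $\mathcal{S}$, we have $\sigma_2(\mathfrak{C}) = \mathfrak{C}$. Applying $\overline{\phi}_2$ to both sides and using the first equality of Lemma \ref{Lemma4}, namely $\overline{\phi}_2\circ\sigma_2 = \theta_3\circ\overline{\phi}_2$, I obtain $\theta_3(\overline{\phi}_2(\mathfrak{C})) = \overline{\phi}_2(\sigma_2(\mathfrak{C})) = \overline{\phi}_2(\mathfrak{C})$. Because $\overline{\phi}_2$ maps $\mathcal{S}^s$ into $\mathbb{Z}_p^{3s}$, the image is a $\mathbb{Z}_p$-linear code of length $3s = 3\cdot s$ that is closed under the $3$-quasi-cyclic shift, hence a $3$-quasi-cyclic code of length $3s$ over $\mathbb{Z}_p$.

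For part (2), I would proceed in exactly the same manner. The hypothesis that $\mathfrak{C}$ is $\mu_2$-constacyclic means $T_{\mu_2}(\mathfrak{C}) = \mathfrak{C}$. Using the second equality of Lemma \ref{Lemma4}, $\overline{\phi}_2\circ T_{\mu_2} = \theta_{\mu_2,3}\circ\overline{\phi}_2$, I get $\theta_{\mu_2,3}(\overline{\phi}_2(\mathfrak{C})) = \overline{\phi}_2(T_{\mu_2}(\mathfrak{C})) = \overline{\phi}_2(\mathfrak{C})$, so $\overline{\phi}_2(\mathfrak{C})$ is a $(\mu_2,3)$-quasi-twisted code of length $3s$ over $\mathbb{Z}_p$.

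I do not expect any serious obstacle in this proof: once Lemma \ref{Lemma4} is in hand, the argument is purely formal and essentially identical to the $\mathcal{R}$-case already recorded. The only thing to be slightly careful about is making sure that $\overline{\phi}_2(\mathfrak{C})$ is $\mathbb{Z}_p$-linear, which is immediate since $\overline{\phi}_2$ is $\mathbb{Z}_p$-linear and $\mathfrak{C}$ is an $\mathcal{S}$-submodule (hence a $\mathbb{Z}_p$-submodule) of $\mathcal{S}^s$, and that the resulting length is $3s$, which follows from the definition of $\overline{\phi}_2 : \mathcal{S}^s \to \mathbb{Z}_p^{3s}$.
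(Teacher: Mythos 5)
Your proposal is correct and follows exactly the same route as the paper: both parts apply the intertwining relations $\overline{\phi}_2\circ\sigma_2 = \theta_3\circ\overline{\phi}_2$ and $\overline{\phi}_2\circ T_{\mu_2} = \theta_{\mu_2,3}\circ\overline{\phi}_2$ from Lemma \ref{Lemma4} to the invariance $\sigma_2(\mathfrak{C})=\mathfrak{C}$ (resp. $T_{\mu_2}(\mathfrak{C})=\mathfrak{C}$) to conclude that $\overline{\phi}_2(\mathfrak{C})$ is invariant under $\theta_3$ (resp. $\theta_{\mu_2,3}$). Your added remark on the $\mathbb{Z}_p$-linearity of the image is a harmless extra check that the paper leaves implicit.
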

\begin{proof}
\begin{enumerate}
    \item If $\mathfrak{C}$ is a cyclic code of length $s$ over $\mathcal{S}$ then $\sigma_2(\mathfrak{C}) = \mathfrak{C}$. Now from Lemma \ref{Lemma4} we have$$\theta_3(\overline{\phi}_2(\mathfrak{C})) = \overline{\phi}_2(\sigma_2(\mathfrak{C})) = \overline{\phi}_2(\mathfrak{C}),$$and this implies that $\overline{\phi}_2(\mathfrak{C})$ is a $3$-quasi-cyclic code of length $3s$ over $\mathbb{Z}_p$.
    \item Assume that $\mathfrak{C}$ is a $\mu_2$-constacyclic code of length $s$ over $\mathcal{S}$. Then $T_{\mu_2}(\mathfrak{C}) = \mathfrak{C}$. Now from Lemma \ref{Lemma4} we have$$\theta_{\mu_2,3}(\overline{\phi}_2(\mathfrak{C})) = \overline{\phi}_2(T_{\mu_2}(\mathfrak{C})) = \overline{\phi}_2(\mathfrak{C}),$$which implies that $\overline{\phi}_2(\mathfrak{C})$ is a $(\mu_2,3)$-quasi-twisted code of length $3s$ over $\mathbb{Z}_p$.
\end{enumerate}
\end{proof}
The next result is on the Gray image of a constacyclic code over $\mathbb{Z}_p\mathcal{R}\mathcal{S}$.
\begin{theorem}
Let $\mu_0,\mu_1,\mu_2\in\mathbb{Z}_p^\ast$ and suppose $\mathfrak{C}$ is a $(\mu_0,\mu_1,\mu_2)$-constacyclic code of block length $(q,r,s)$ over $\mathbb{Z}_p\mathcal{R}\mathcal{S}$. Then $\overline{\phi}(\mathfrak{C})$ is a generalized $(\mu_0,\mu_1,\mu_1,\mu_2,\mu_2,\mu_2)$-quasi-twisted code of block length $(s,r,r,q,q,q)$.
\end{theorem}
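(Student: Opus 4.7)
The plan is to prove the theorem by establishing an intertwining identity between the combined constacyclic shift $T_{\mu_0,\mu_1,\mu_2}$ on $\mathbb{Z}_p^q\mathcal{R}^r\mathcal{S}^s$ and the generalized quasi-twisted shift operator on $\mathbb{Z}_p^{q+2r+3s}$ that corresponds to the block decomposition induced by $\overline{\phi}$. Concretely, let $\Theta$ denote the shift on $\mathbb{Z}_p^{q+2r+3s}$ that acts as the $\mu_0$-constacyclic shift $\sigma_{\mu_0}$ on the first block of length $q$, as $\sigma_{\mu_1}$ on each of the next two blocks of length $r$, and as $\sigma_{\mu_2}$ on each of the last three blocks of length $s$. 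By the definition of a generalized quasi-twisted code (with $\lambda_1=\mu_0$, $\lambda_2=\lambda_3=\mu_1$, $\lambda_4=\lambda_5=\lambda_6=\mu_2$, and $m_1=q$, $m_2=m_3=r$, $m_4=m_5=m_6=s$), invariance of a code under $\Theta$ is exactly the statement that it is a generalized $(\mu_0,\mu_1,\mu_1,\mu_2,\mu_2,\mu_2)$-quasi-twisted code of the corresponding block length.

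The core step is to verify the identity
\[
\overline{\phi}\circ T_{\mu_0,\mu_1,\mu_2} \;=\; \Theta\circ\overline{\phi}.
\]
Writing $(w,x,y)\in\mathbb{Z}_p^q\mathcal{R}^r\mathcal{S}^s$ and using the definition $\overline{\phi}(w,x,y)=(w,\overline{\phi}_1(x),\overline{\phi}_2(y))$, I would apply $T_{\mu_0,\mu_1,\mu_2}$ componentwise, giving $(T_{\mu_0}(w),T_{\mu_1}(x),T_{\mu_2}(y))$, and then push $\overline{\phi}$ through. The first slot already satisfies $\overline{\phi}$ acting as the identity and $T_{\mu_0}$ being $\sigma_{\mu_0}$. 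For the second and third slots I invoke Lemma~\ref{Lemma3}(2) to get $\overline{\phi}_1(T_{\mu_1}(x))=\theta_{\mu_1,2}(\overline{\phi}_1(x))$ and Lemma~\ref{Lemma4}(2) to get $\overline{\phi}_2(T_{\mu_2}(y))=\theta_{\mu_2,3}(\overline{\phi}_2(y))$. Since $\theta_{\mu_1,2}$ is defined as applying $\sigma_{\mu_1}$ to each of two length-$r$ subblocks, and $\theta_{\mu_2,3}$ as applying $\sigma_{\mu_2}$ to each of three length-$s$ subblocks, the combined effect is exactly $\Theta$ applied to $\overline{\phi}(w,x,y)$.

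Once the intertwining identity is in hand, the conclusion is immediate: if $\mathfrak{C}$ is $(\mu_0,\mu_1,\mu_2)$-constacyclic then $T_{\mu_0,\mu_1,\mu_2}(\mathfrak{C})\subseteq\mathfrak{C}$, whence
\[
\Theta(\overline{\phi}(\mathfrak{C}))=\overline{\phi}(T_{\mu_0,\mu_1,\mu_2}(\mathfrak{C}))\subseteq\overline{\phi}(\mathfrak{C}),
\]
so $\overline{\phi}(\mathfrak{C})$ is invariant under $\Theta$ and is therefore the claimed generalized quasi-twisted code.

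I do not foresee a genuinely hard obstacle here; the proof is really a bookkeeping exercise. The only subtlety is matching conventions: one must be careful that the quasi-twisted operators $\theta_{\mu_1,2}$ and $\theta_{\mu_2,3}$, which were defined as single maps on $\mathbb{Z}_p^{2r}$ and $\mathbb{Z}_p^{3s}$, are properly reinterpreted as the generalized quasi-twisted shift acting with constants $(\mu_1,\mu_1)$ on two blocks of length $r$ and $(\mu_2,\mu_2,\mu_2)$ on three blocks of length $s$. Once this identification is spelled out, the result follows directly from the two previously proved lemmas and the definition of $\overline{\phi}$.
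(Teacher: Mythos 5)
Your proposal is correct and follows essentially the same route as the paper: both arguments amount to verifying that $\overline{\phi}$ intertwines $T_{\mu_0,\mu_1,\mu_2}$ with the generalized quasi-twisted shift, the only difference being that you factor the verification through Lemma~\ref{Lemma3}(2) and Lemma~\ref{Lemma4}(2) while the paper redoes the coordinate computation explicitly. (Note that your block length $(q,r,r,s,s,s)$ is the correct one; the $(s,r,r,q,q,q)$ appearing in the statement is evidently a typo.)
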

\begin{proof}
Take $w\in\overline{\phi}(\mathfrak{C})$. Then there exists $v\in\mathfrak{C}$ such that $w = \overline{\phi}(v)$. Let $$v = (x_0,x_1,\ldots,x_{q-1}\vert y_0,y_1,\ldots,y_{r-1}\vert z_0,z_1,\ldots,z_{s-1}),$$where $y_i = a_i+ub_i$, $z_i = a_i^{\prime}+ub_i^{\prime}+u^2d_i^{\prime}$ and $x_i,a_i,b_i,a_i^{\prime},b_i^{\prime},d_i^{\prime}\in\mathbb{Z}_p$. Then
\begin{align*}
    \overline{\phi}(v) =&~ (x_0,x_1,\ldots,x_{q-1}\vert a_0+b_0,a_1+b_1,\ldots,a_{r-1}+b_{r-1}\vert \kappa b_0,\kappa b_1,\ldots,\kappa b_{r-1}\vert a_0^{\prime}+b_0^{\prime}+d_0^{\prime},a_1^{\prime}+b_1^{\prime}+\\
    &~~~d_1^{\prime},\ldots,a_{s-1}^{\prime}+b_{s-1}^{\prime}d_{s-1}^{\prime}\vert \kappa (b_0^{\prime}+d_0^{\prime}),\kappa (b_1^{\prime}+d_1^{\prime}),\ldots,\kappa (b_{s-1}^{\prime}+d_{s-1}^{\prime})\vert b_0^{\prime},b_1^{\prime},\ldots,b_{s-1}^{\prime}).
\end{align*}
Let $T_{\mu_0,\mu_1,\mu_2}$ be the $(\mu_0,\mu_1,\mu_2)$-constacyclic shift operator on $\mathbb{Z}_p^q\mathcal{R}^r\mathcal{S}^s$. Then we have
\begin{align*}
    \overline{\phi}\circ T_{\mu_0,\mu_1,\mu_2}(v) =&~ \overline{\phi}(\mu_0x_{q-1},x_0,\ldots,x_{q-2}\vert \mu_1y_{r-1},y_0,\ldots,y_{r-2}\vert \mu_2z_{s-2},z_0,\ldots,z_{s-2})\\
    =&~ (\mu_0x_{q-1},x_0,\ldots,x_{q-2}\vert \mu_1(a_{r-1}+b_{r-1}),a_0+b_0,\ldots,a_{r-2}+b_{r-2}\vert \kappa\mu_1 b_{r-1},\\
    &~~~\kappa b_0,\ldots,\kappa b_{r-2}\vert \mu_2(a_{s-1}^{\prime}+b_{s-1}^{\prime}+d_{s-1}^{\prime}),a_0^{\prime}+b_0^{\prime}+d_0^{\prime},\ldots,a_{s-2}^{\prime}+b_{s-2}^{\prime}+\\
    &~~~d_{s-2}^{\prime}\vert \kappa\mu_2(b_{s-1}^{\prime}+d_{s-1}^{\prime}),\kappa (b_0^{\prime}+d_0^{\prime}),\ldots,\kappa(b_{s-2}^{\prime}+d_{s-2}^{\prime})\vert \mu_2b_{s-1}^{\prime},b_0^{\prime},\ldots,b_{s-2}^{\prime}).
\end{align*}
Since $\mathfrak{C}$ is a $(\mu_0,\mu_1,\mu_2)$-constacyclic code, $\overline{\phi}\circ T_{\mu_0,\mu_1,\mu_2}(v)\in\overline{\phi}(\mathfrak{C})$ and we observe that $\overline{\phi}\circ T_{\mu_0,\mu_1,\mu_2}(v)$ is the generalized $(\mu_0,\mu_1,\mu_1,\mu_2,\mu_2,\mu_2)$-quasi-twisted shift of $\overline{\phi}(v)$. Therefore, $\overline{\phi}(\mathfrak{C})$ is a generalized $(\mu_0,\mu_1,\mu_1,\mu_2,\mu_2,\mu_2)$-quasi-twisted code of block length $(s,r,r,q,q,q)$.
\end{proof}

\section{The weight enumerator and MacWilliams identities}
In this section, we study different weight enumerators, such as complete weight enumerator, symmetrized weight enumerator, etc., and establish the MacWilliams identities.
\subsection{The complete weight enumerator and the Hamming weight enumerator}
First, we arrange the members of each of $\mathbb{Z}_p$, $\mathcal{R}\text{ and }\mathcal{S}$ in a particular order as follows:
\begin{align*}
    \mathbb{Z}_p =&~ \{0,1,2,\ldots,p-1\};\\
    \mathcal{R} =&~ \{0,1,u,1+u,2,2u,2+u,1+2u,2+2u.\ldots,(p-1),(p-1)u,(p-1)+u,1+(p-1)u,\\
    &~~~(p-1)+2u,2+(p-1)u,\ldots,(p-1)+(p-1)u\};\\
    \mathcal{S} =&~ \{0,1,u,1+u,u^2,1+u^2,u+u^2,1+u+u^2,2,2u,2+u,1+2u,2+2u,2u^2, 2+u^2,1+2u^2,2+\\
    &~~~2u^2,2u+u^2,u+2u^2,2u+2u^2,2+u+u^2,1+2u+u^2,2+2u+u^2,1+u+2u^2,2+u+2u^2,\\
    &~~~1+2u+2u^2,2+2u+2u^2,3,3u,3+u,1+3u,3+2u,2+3u,3+3u,3u^2,3+u^2,1+3u^2,3+\\
    &~~~2u^2,2+3u^2,3+3u^2,3u+u^2,u+3u^2,3u+2u^2,2u+3u^2,3u+3u^2,3+u+u^2,1+3u+u^2,\\
    &~~~3+2u+u^2,2+3u+u^2,3+3u+u^2,1+u+3u^2,3+u+2u^2,2+u+3u^2,3+u+3u^2,1+\\
    &~~~3u+2u^2,1+2u+3u^2,1+3u+3u^2,3+2u+2u^2,2+3u+2u^2,3+3u+2u^2,2+2u+3u^2,\\
    &~~~3+2u+3u^2,2+3u+3u^2,3+3u+3u^2,\ldots,(p-1)+(p-1)u+(p-1)u^2\}
\end{align*}
Using the above ordering of the members of each of $\mathbb{Z}_p$, $\mathcal{R}\text{ and }\mathcal{S}$, now we can order the members of $\mathbb{Z}_p\mathcal{R}\mathcal{S}$ in the following way.
\begin{align*}
   \mathbb{Z}_p\mathcal{R}\mathcal{S} =&~ \{f_1,f_2,\ldots,f_{p^6}\}\\
   =&~\{(0,0,0),(0,0,1),(0,0,u),(0,0,1+u),(0,0,u^2),(0,0,1+u^2),(0,0,u+u^2),(0,0,1+u+u^2),\\
   &~~(0,0,2),(0,0,2u),(0,0,2+u),(0,0,1+2u),(0,0,2+2u),(0,0,2u^2),(0,0,2+u^2),(0,0,1+\\
   &~~2u^2),(0,0,2+2u^2),(0,0,2u+u^2),(0,0,u+2u^2),(0,0,2u+2u^2),(0,0,2+u+u^2),(0,0,1+\\
   &~~2u+u^2),(0,0,2+2u+u^2),(0,0,1+u+2u^2),(0,0,2+u+2u^2),(0,0,1+2u+2u^2),(0,0,\\
   &~~2+2u+2u^2),(0,0,3),\ldots,(0,0,(p-1)+(p-1)u+(p-1)u^2),(0,1,0),(0,1,1),(0,1,u),(0,\\
   &~~1,1+u),\ldots,(0,1,(p-1)+(p-1)u+(p-1)u^2),(0,u,0),(0,u,1),(0,u,u),\ldots,(0,u,(p-1)+\\
   &~~(p-1)u+(p-1)u^2),(0,1+u,0),(0,1+u,1),(0,1+u,u),\ldots,(0,1+u,(p-1)+(p-1)u+\\
   &~~(p-1)u^2),\ldots,(0,(p-1)+(p-1)u,(p-1)+(p-1)u+(p-1)u^2),(1,0,0),(1,0,1),(1,0,u),\\
   &~~\ldots,((p-1),(p-1)+(p-1)u,(p-1)+(p-1)u+(p-1)u^2)\}
\end{align*}
Now, we are ready to define the complete weight enumerator of a $\mathbb{Z}_p\mathcal{R}\mathcal{S}$-additive code $\mathfrak{C}$.

\begin{definition}\label{Definition14}
 Let $\mathfrak{C}$ be a $\mathbb{Z}_p\mathcal{R}\mathcal{S}$-additive code of length $n$. Then the complete weight enumerator of $\mathfrak{C}$ is denoted by $\mathcal{W}_{\mathbf{C}}^{(\mathfrak{C})}$ and is defined by
$$\mathcal{W}_{\mathbf{C}}^{(\mathfrak{C})}(x_1,x_2,\ldots,x_{p^6}) = \sum_{c\in\mathfrak{C}}\prod_{i=1}^{p^6} x_i^{w_{f_i}(c)}~,$$where for each $c = (c_0,c_1,\ldots,c_{n-1})\in\mathfrak{C}$,  $w_{f_i}(c) = |\{j\mid c_j = f_i, 0\leq j\leq n-1\}|$ for $i = 1,2,\ldots,p^6$.
\end{definition}
From the above definition, it is evident that $\mathcal{W}_{\mathbf{C}}^{(\mathfrak{C})}(x_1,x_2,\ldots,x_{p^6})$ is a homogeneous polynomial. The total degree of each monomial in $\mathcal{W}_{\mathbf{C}}^{(\mathfrak{C})}(x_1,x_2,\ldots,x_{p^6})$ is $n$. Also, we observe that $\mathcal{W}_{\mathbf{C}}^{(\mathfrak{C})}(1,1,\ldots,1) = |\mathfrak{C}|$.\vspace{0.01 cm}

Now, we will investigate the MacWilliams identity corresponding to the complete weight enumerator. First, we define a generating character on $\mathbb{Z}_p\mathcal{R}\mathcal{S}$.
\begin{definition}
 Define $\chi: \mathbb{Z}_p\mathcal{R}\mathcal{S}\rightarrow\mathbb{C}^{\ast}$ by$$\chi(a,a^{\prime}+ub^{\prime},a^{\prime\prime}+ub^{\prime\prime}+u^2d^{\prime\prime}) = (-1)^{a+a^{\prime}+b^{\prime}+a^{\prime\prime}+b^{\prime\prime}+d^{\prime\prime}}.$$It can be easily verified that $\chi$-image of any non-zero ideal is always non-trivial and hence by Lemma \ref{Lemma5}, $\chi$ is a generating character on $\mathbb{Z}_p\mathcal{R}\mathcal{S}$.
\end{definition}
Suppose, $P = [p_{ij}]$ is a matrix of order $p^6$ with $p_{ij} = \chi(f_if_j)$ where $\chi$ is the generating character defined above and $f_i,f_j\in\mathbb{Z}_p\mathcal{R}\mathcal{S}$ for all $i,j = 1,2,\ldots,p^6$. We find the matrix $P$ for two different cases, $p=2$ and $p\neq 2$.\vspace{0.01 cm}

First, we consider the case when $p= 2$. Here, we have
\begin{align*}
P =
\begin{bmatrix}
    B & ~~~~~B & ~~~~~B & ~~~~~B & \dots & ~~B\\
    B &~ -B &~ ~~~~B & ~-B & \dots & -B\\
    B &~ ~~~~B & ~~~~~B &~~~~~B & \dots & ~~B\\
    B &~ -B &~~~~~B &~ -B & \dots & -B\\
    \vdots &~~~~~\vdots &~~~~~\vdots &~~~~~\vdots &~\ddots & ~~~\vdots\\
    B & ~-B &~ ~~~~B &~ -B & \dots & -B
\end{bmatrix},
\end{align*}
where
\begin{align*}
B =
\begin{bmatrix}
    A & ~~~~~A & ~~~~~A & ~~~~~A & ~~~~~A & \dots & ~~A\\
    A &~ -A &~ -A & ~~~~~A & ~~~~~A & \dots & ~~A\\
    A &~ -A & ~~~~~A &~ -A & ~~~~~A & \dots & -A\\
    A & ~~~~~A &~ -A &~ -A & ~~~~~A & \dots & -A\\
    A & ~~~~~A & ~~~~~A & ~~~~~A & ~~~~~A & \dots & ~~A\\
    ~\vdots &~~~~~~\vdots &~~~~~~\vdots &~~~~~~\vdots &~~~~~~\vdots &~~\ddots & ~~~\vdots\\
    A & ~~~~~A &~ -A &~ -A & ~~~~~A & \dots & -A
\end{bmatrix}
\end{align*}
be a matrix of order $p^5$ and $A$ be a matrix of order $p^3$, given by
\begin{align*}
A =
\begin{bmatrix}
    1 & ~~~~~1 & ~~~~~1 & ~~~~~1 & ~~~~~1 &~~~~~1 &~~~~~1 &~~~~~1 & \dots & ~~1\\
    1 &~ -1 &~ -1 & ~~~~~1 &~ -1  &~~~~~1 &~~~~~1 &~-1 & \dots & -1\\
    1 &~ -1 &~ -1 & ~~~~~1 & ~~~~~1 &~-1 &~-1 &~~~~~1 & \dots & ~~1\\
    1 & ~~~~~1 & ~~~~~1 & ~~~~~1 &~-1 &~-1 &~-1 &~-1 & \dots & -1\\
    1 & ~-1 & ~~~~~1 &~ -1 & ~~~~~1 &~-1 &~~~~~1 &~-1 & \dots & -1\\
    1 & ~~~~~1 &~ -1 &~ -1 &~ -1 &~-1 &~~~~~1 &~~~~~1 & \dots & ~~1\\
    1 & ~~~~~1 &~ -1 &~ -1 & ~~~~~1 &~~~~~1 &~-1 &~-1 & \dots & -1\\
    1 &~ -1 & ~~~~~1 &~ -1 &~ -1 &~~~~~1 &~-1 &~~~~~1 & \dots & ~~1\\
    \vdots &~~~~~\vdots &~~~~~\vdots &~~~~~\vdots &~~~~~\vdots &~~~~~\vdots &~~~~~\vdots &~~~~~\vdots &~~\ddots & ~~\vdots\\
    1 &~ -1 & ~~~~~1 &~ -1 &~ -1 &~~~~~1 &~-1 &~~~~~1 & \dots & ~~1
\end{bmatrix}.
\end{align*}
The matrix $A$ is actually the submatrix of $P$, consisting of the first $p^3$ rows and the first $p^3$ columns. Similarly, the matrix $B$ is also a submatrix of $P$, consisting of the first $p^5$ rows and the first $p^5$ columns. In other words, the matrix $A$ is generated by the first $p^3$ members of $\mathbb{Z}_p\mathcal{R}\mathcal{S}$ whereas the matrix $B$ is generated by the first $p^5$ members of $\mathbb{Z}_p\mathcal{R}\mathcal{S}$. Now, for $p\neq 2$, we have
\begin{align*}
P =
\begin{bmatrix}
    B & ~~~~~B & ~~~~~B & ~~~~~B & \dots & ~~B\\
    B &~ -B & ~~~~~B & ~-B & \dots & ~~B\\
    B & ~~~~~B & ~~~~~B &~~~~~B & \dots & ~~B\\
    B &~ -B &~~~~~B  &~ -B & \dots & ~~B\\
    \vdots &~~~~~\vdots &~~~~~\vdots &~~~~~\vdots &~\ddots & ~~~\vdots\\
    B & ~~~~~B &~ ~~~~B &~ ~~~~B & \dots & -B
\end{bmatrix},
\end{align*}
where
\begin{align*}
B =
\begin{bmatrix}
    A & ~~~~~A & ~~~~~A & ~~~~~A & ~~~~~A & \dots & ~~A\\
    A &~ -A &~ -A & ~~~~~A & ~~~~~A & \dots & ~~A\\
    A &~ -A & ~~~~~A &~ -A & ~~~~~A & \dots & ~~A\\
    A & ~~~~~A &~ -A &~ -A & ~~~~~A & \dots & ~~A\\
    A & ~~~~~A & ~~~~~A & ~~~~~A & ~~~~~A & \dots & ~~A\\
    ~\vdots &~~~~~~\vdots &~~~~~~\vdots &~~~~~~\vdots &~~~~~~\vdots &~~\ddots & ~~~\vdots\\
    A & ~~~~~A &~ ~~~~A &~ ~~~~A & ~~~~~A & \dots & -A
\end{bmatrix}, ~
A =
\begin{bmatrix}
    1 & ~~~~~1 & ~~~~~1 & ~~~~~1 & ~~~~~1 &~~~~~1 &~~~~~1 &~~~~~1 & \dots & ~~1\\
    1 &~ -1 &~ -1 & ~~~~~1 &~ -1 &~~~~~1 &~~~~~1 &~-1 & \dots & ~~1\\
    1 &~ -1 &~ -1 & ~~~~~1 & ~~~~~1 &~-1 &~-1 &~~~~~1 & \dots & ~~1\\
    1 & ~~~~~1 & ~~~~~1  & ~~~~~1 &~ -1 &~-1 &~-1 &~-1 & \dots & ~~1\\
    1 &~ -1 & ~~~~~1 &~-1 & ~~~~~1  &~-1 &~~~~~1 &~-1 & \dots & ~~1\\
    1 & ~~~~~1 &~ -1 &~ -1 &~ -1 &~-1 &~~~~~1 &~~~~~1 & \dots & ~~1\\
    1 & ~~~~~1 &~ -1 &~ -1 & ~~~~~1  &~~~~~1 &~-1 &~-1 & \dots & ~~1\\
    1 &~ -1 & ~~~~~1 &~ -1 &~ -1 &~~~~~1 &~-1 &~~~~~1 & \dots & ~~1\\
    \vdots &~~~~~\vdots &~~~~~\vdots &~~~~~\vdots &~~~~~\vdots &~~~~~\vdots &~~~~~\vdots &~~~~~\vdots &~~\ddots & ~~\vdots\\
    1 &~ ~~~~1 & ~~~~~1 &~ ~~~~1 &~ ~~~~1 &~~~~~1 &~~~~~1 &~~~~~1 & \dots & ~~1
\end{bmatrix}.
\end{align*}
Now, we find the MacWilliams identity with respect to the complete weight enumerator.

\begin{theorem}\label{Theorem14}
If $\mathfrak{C}$ is a $\mathbb{Z}_p\mathcal{R}\mathcal{S}$-additive code of length $n$ then$$\mathcal{W}_{\mathbf{C}}^{(\mathfrak{C}^{\perp})}(x_1,x_2,\ldots,x_{p^6}) = \frac{1}{|\mathfrak{C}|}\mathcal{W}_{\mathbf{C}}^{(\mathfrak{C})}(P\cdot(x_1,x_2,\ldots,x_{p^6})^T),$$ where $(x_1,x_2,\ldots,x_{p^6})^T$ denotes the transpose of $(x_1,x_2,\ldots,x_{p^6})$.
\end{theorem}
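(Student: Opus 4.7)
The plan is to apply the character-theoretic MacWilliams machinery for codes over Frobenius rings, specialised to $R=\mathbb{Z}_p\mathcal{R}\mathcal{S}$ with the generating character $\chi$ introduced just above the statement. First I would rewrite the complete weight enumerator compactly as
\[
\mathcal{W}_{\mathbf{C}}^{(\mathfrak{C})}(x_1,\ldots,x_{p^6})=\sum_{c\in\mathfrak{C}}M(c),\qquad M(c):=\prod_{j=0}^{n-1}x_{\iota(c_j)},
\]
where $\iota(c_j)$ denotes the index with $c_j=f_{\iota(c_j)}$. The target identity will then follow from the Poisson-summation formula
\[
\sum_{c\in\mathfrak{C}^\perp}M(c)=\frac{1}{|\mathfrak{C}|}\sum_{c\in\mathfrak{C}}\widehat{M}(c),\qquad \widehat{M}(c):=\sum_{d\in R^n}\chi(\langle c,d\rangle)\,M(d),
\]
with $\langle c,d\rangle:=\sum_{j=0}^{n-1}c_jd_j\in R$ the pairing that defines $\mathfrak{C}^\perp$.

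I would prove this Poisson formula by swapping the order of summation in $\sum_{c\in\mathfrak{C}}\widehat{M}(c)$ and invoking the orthogonality relation $\sum_{c\in\mathfrak{C}}\chi(\langle c,d\rangle)$ equals $|\mathfrak{C}|$ if $d\in\mathfrak{C}^\perp$ and $0$ otherwise. The nontrivial direction rests on Lemma~\ref{Lemma5}: for $d\notin\mathfrak{C}^\perp$ the image $\{\langle c,d\rangle:c\in\mathfrak{C}\}$ contains a nonzero ideal of $R$, which by the generating-character property cannot be swallowed by $\ker\chi$, so $c\mapsto\chi(\langle c,d\rangle)$ is a nontrivial additive character on $\mathfrak{C}$ and therefore sums to zero.

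It remains to compute $\widehat{M}$. Since both $M(d)=\prod_j x_{\iota(d_j)}$ and $\chi(\langle c,d\rangle)=\prod_j\chi(c_jd_j)$ factor over the coordinate index $j$, the Fourier transform factors as
\[
\widehat{M}(c)=\prod_{j=0}^{n-1}\sum_{d_j\in R}\chi(c_jd_j)\,x_{\iota(d_j)}=\prod_{j=0}^{n-1}\bigl(P\cdot(x_1,\ldots,x_{p^6})^T\bigr)_{\iota(c_j)},
\]
the last equality being literally the definition $p_{ki}=\chi(f_kf_i)$ of the matrix $P$. Thus $\widehat{M}(c)$ equals $M(c)$ after substituting $(x_1,\ldots,x_{p^6})^T\mapsto P\cdot(x_1,\ldots,x_{p^6})^T$, and summing over $c\in\mathfrak{C}$ transforms $\sum_{c\in\mathfrak{C}}\widehat{M}(c)$ into $\mathcal{W}_{\mathbf{C}}^{(\mathfrak{C})}\bigl(P\cdot(x_1,\ldots,x_{p^6})^T\bigr)$, and combining this with the Poisson formula delivers the claim. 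The main obstacle is the orthogonality step, which relies on $\chi$ being a genuine generating character of $\mathbb{Z}_p\mathcal{R}\mathcal{S}$ together with the ideal-theoretic content of Lemma~\ref{Lemma5}; once these are in hand, the rest of the argument is coordinate-wise bookkeeping of the Fourier calculation.
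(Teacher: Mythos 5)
Your proposal is correct and is essentially the paper's own argument: the paper's proof consists of a single citation to Wood's duality theory for codes over finite Frobenius rings, and what you have written out --- Poisson summation via character orthogonality, with the nontriviality of $c\mapsto\chi(\langle c,d\rangle)$ for $d\notin\mathfrak{C}^{\perp}$ secured by Lemma~\ref{Lemma5}, followed by the coordinatewise factorization identifying the Fourier transform of the monomial with the substitution $(x_1,\ldots,x_{p^6})^T\mapsto P\cdot(x_1,\ldots,x_{p^6})^T$ --- is precisely the argument that citation delegates to. In effect you have supplied the details the paper omits.
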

\begin{proof}
The proof follows from \cite{Wood}.
\end{proof}

\begin{example}\label{Example2}
Consider the $\mathbb{Z}_2\mathcal{R}\mathcal{S}$-additive code of length $2$,
\begin{align*}
    \mathfrak{C}_2 =&~ \langle\{(1,0,1+u^2;0,u,0),(0,1+u,0;1,0,1+u)\}\rangle.
\end{align*}
Then $\mathfrak{C}_2$ is a linear code over $\mathbb{Z}_2$ of dimension $6$ and
\begin{align*}
B_1 =&~ \{(1,0,1+u^2;0,u,0),(0,0,u;0,0,0),(0,0,u^2;0,0,0),(0,1+u,0;1,0,1+u),(0,u,0;0,0,u+u^2),\\
&~~~ (0,0,0;0,0,u^2)\}
\end{align*}
forms a $\mathbb{Z}_2$-basis of $\mathfrak{C}_2$. Thus the dual code $\mathfrak{C}_2^{\
\perp}$ is also a linear code over $\mathbb{Z}_2$ of dimension $6$ and
\begin{align*}
B_2 =&~ \{(0,0,0;0,u,0),(1,0,0;0,1,0),(0,u,0;0,0,u^2),(0,u,0;1,0,0),(0,1+u,0;0,0,u+u^2),\\
&~~~ (1,0,u^2;0,0,0)\}
\end{align*}
is a $\mathbb{Z}_2$-basis of $\mathfrak{C}_2^{\perp}$. Now, according to Definition \ref{Definition14}, the complete weight enumerator of $\mathfrak{C}_2$ is given by
\begin{align*}
    \mathcal{W}_{\mathbf{C}}^{(\mathfrak{C}_2)}(x_1,x_2,\ldots,x_{64}) =&~ x_1^2 + x_{38}x_{17} + x_3x_1 + x_{40}x_{17} + 2x_4x_1 + x_{34}x_{17} + x_7x_1 + x_{37}x_{17} + x_{25}x_{37} +\\
    &~ x_{62}x_{53} + x_{27}x_{37} + x_{64}x_{53} + x_{28}x_{37} + x_{58}x_{53} + x_{31}x_{37} + x_{61}x_{53} + x_{17}x_{7} +\\
    &~ x_{54}x_{23} + x_{19}x_{7} + x_{56}x_{23} + x_{20}x_{7}+ x_{50}x_{23}+ x_{23}x_{7}+ x_{53}x_{23} + x_{9}x_{38} +\\
    &~ x_{46}x_{54} + x_{11}x_{38} + x_{48}x_{54} + x_{12}x_{38} + x_{42}x_{54} + x_{15}x_{38} + x_{45}x_{54} + x_{38}x_{20} +\\
    &~ x_{3}x_{4} + x_{40}x_{20} + x_{4}^2 + x_{34}x_{20} + x_{7}x_{4} + x_{37}x_{20} + x_{25}x_{40} + x_{62}x_{56} + x_{27}x_{40} +\\
    &~ x_{64}x_{56} + x_{28}x_{40} + x_{58}x_{56} + x_{31}x_{40} + x_{61}x_{56} + x_{17}x_{3} + x_{54}x_{19} + x_{19}x_{3} +\\
    &~ x_{56}x_{19} + x_{20}x_{3} + x_{50}x_{19} + x_{23}x_{3} + x_{53}x_{19} + x_{9}x_{34} + x_{46}x_{50} + x_{11}x_{34} +\\
    &~ x_{48}x_{50} + x_{12}x_{34} + x_{42}x_{50} + x_{15}x_{34} + x_{45}x_{50}.
\end{align*}
Similarly, the complete weight enumerator of $\mathfrak{C}_2^{\perp}$ is given by
\begin{align*}
    \mathcal{W}_{\mathbf{C}}^{(\mathfrak{C}_2^{\perp})}(x_1,x_2,\ldots,x_{64}) =&~ x_1^2 + 2x_{36}x_{1} + x_{25}x_7 + x_{60}x_{7} + x_{17}x_{33} + x_{52}x_{33} + x_9x_{39} + x_{44}x_{39} + x_{17}x_{4} +\\
    &~ x_{52}x_{4} + x_{9}x_{3} + x_{44}x_{3} + x_{36}^2 + x_{25}x_{35} + x_{60}x_{35} + x_{33}x_{9} + x_{4}x_{9} + x_{57}x_{7} +\\
    &~ x_{28}x_{15} + x_{49}x_{41}+ x_{20}x_{41}+ x_{41}x_{47}+ x_{12}x_{47} + x_{49}x_{12} + x_{20}x_{12} + x_{41}x_{11} +\\
    &~ x_{12}x_{11} + x_{33}x_{44} + x_{4}x_{44} + x_{57}x_{43} + x_{28}x_{43} + x_{1}x_{17} + x_{36}x_{17} + x_{25}x_{23} +\\
    &~ x_{60}x_{23} + x_{17}x_{49} + x_{52}x_{49} + x_{9}x_{55} + x_{44}x_{55} + x_{17}x_{20} + x_{52}x_{20} + x_{9}x_{19} +\\
    &~ x_{44}x_{19} + x_{1}x_{52} + x_{36}x_{52} + x_{25}x_{51} + x_{60}x_{51} + x_{33}x_{25} + x_{4}x_{25} + x_{57}x_{31} +\\
    &~ x_{28}x_{31} + x_{49}x_{57} + x_{20}x_{57} + x_{41}x_{63} + x_{12}x_{63} + x_{49}x_{28} + x_{20}x_{28} + x_{41}x_{27} +\\
    &~ x_{12}x_{27} + x_{33}x_{60} + x_{4}x_{60} + x_{57}x_{59} + x_{28}x_{59}.
\end{align*}
\end{example}
Next, we define the Hamming weight enumerator and find the corresponding MacWilliams identity.

\begin{definition}
 Let $\mathfrak{C}$ be a $\mathbb{Z}_p\mathcal{R}\mathcal{S}$-additive code of length $n$. Then the Hamming weight enumerator of $\mathfrak{C}$ is denoted by $\mathcal{W}_{\mathbf{H}}^{(\mathfrak{C})}$ and is defined by$$\mathcal{W}_{\mathbf{H}}^{(\mathfrak{C})}(x,y) = \sum_{c\in\mathfrak{C}}x^{n-wt_H(c)}y^{wt_H(c)}~,$$where $wt_H(c)$ is the Hamming weight of $(c)$.
\end{definition}
Similar to the complete weight enumerator, the Hamming weight enumerator is also a homogeneous polynomial of degree $n$. Further, we observe that$$\mathcal{W}_{\mathbf{H}}^{(\mathfrak{C})}(x,y) = \mathcal{W}_{\mathbf{C}}^{(\mathfrak{C})}(x,y,y,\ldots,y),$$which gives us a relation between the complete weight enumerator and the Hamming weight enumerator.

\begin{theorem}\label{Theorem15}
If $\mathfrak{C}$ is a $\mathbb{Z}_p\mathcal{R}\mathcal{S}$-additive code of length $n$ then$$\mathcal{W}_{\mathbf{H}}^{(\mathfrak{C}^{\perp})}(x,y) = \frac{1}{|\mathfrak{C}|}\mathcal{W}_{\mathbf{H}}^{(\mathfrak{C})}(x+(p^6-1)y,x-y).$$
\end{theorem}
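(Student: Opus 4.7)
The plan is to derive this Hamming-weight identity as a direct specialization of the complete-weight MacWilliams identity (Theorem \ref{Theorem14}) already in hand. The key link is the relation noted just before the theorem,
$$\mathcal{W}_{\mathbf{H}}^{(\mathfrak{C})}(x,y) = \mathcal{W}_{\mathbf{C}}^{(\mathfrak{C})}(x,y,y,\ldots,y),$$
obtained by collapsing all indeterminates other than $x_1$ (which tracks the zero element $f_1 = (0,0,0)$) to the common variable $y$. Applying this to $\mathfrak{C}^\perp$ and then Theorem \ref{Theorem14}, the problem reduces to evaluating $P \cdot (x,y,\ldots,y)^T$ and showing that it collapses to $(x + (p^6-1)y,\, x - y,\, \ldots,\, x - y)^T$.

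First I would write the $i$-th entry of $P \cdot (x,y,\ldots,y)^T$ as
$$\chi(f_i \cdot f_1)\, x \;+\; \sum_{j=2}^{p^6} \chi(f_i f_j)\, y \;=\; x \;+\; \left(\sum_{j=1}^{p^6} \chi(f_i f_j) \;-\; 1\right) y,$$
using $f_1 = 0$ so that $\chi(f_i f_1) = 1$. The case $i = 1$ is immediate: every entry of the first row of $P$ equals $\chi(0) = 1$, so the first component is $x + (p^6 - 1) y$.

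The main (and essentially only) technical step is to show that for each $i \ge 2$ the character sum $\sum_{j=1}^{p^6} \chi(f_i f_j)$ vanishes. For this I would invoke Lemma \ref{Lemma5}: since $\chi$ is a generating character of $\mathbb{Z}_p\mathcal{R}\mathcal{S}$, its kernel contains no nonzero ideals, and hence for any nonzero $f_i$ the ideal $f_i \mathbb{Z}_p\mathcal{R}\mathcal{S}$ is not contained in $\ker(\chi)$. Thus the additive character $f \mapsto \chi(f_i f)$ is non-trivial on the finite abelian group $\mathbb{Z}_p\mathcal{R}\mathcal{S}$, and summing a non-trivial character over the whole group yields $0$. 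This gives the $i$-th entry as $x - y$ for every $i \ge 2$.

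Combining these two cases produces $P \cdot (x,y,\ldots,y)^T = (x+(p^6-1)y,\, x-y,\, \ldots,\, x-y)^T$, after which Theorem \ref{Theorem14} and the Hamming/complete relation chain together to yield
$$\mathcal{W}_{\mathbf{H}}^{(\mathfrak{C}^\perp)}(x,y) \;=\; \mathcal{W}_{\mathbf{C}}^{(\mathfrak{C}^\perp)}(x,y,\ldots,y) \;=\; \tfrac{1}{|\mathfrak{C}|}\,\mathcal{W}_{\mathbf{C}}^{(\mathfrak{C})}\bigl(x+(p^6-1)y,\, x-y,\, \ldots,\, x-y\bigr) \;=\; \tfrac{1}{|\mathfrak{C}|}\,\mathcal{W}_{\mathbf{H}}^{(\mathfrak{C})}\bigl(x+(p^6-1)y,\, x-y\bigr),$$
exactly as claimed. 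I do not anticipate a serious obstacle beyond cleanly justifying the character-sum vanishing via Lemma \ref{Lemma5}; the rest is bookkeeping inside the already-established complete-weight identity.
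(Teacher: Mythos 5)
Your proposal is correct and follows essentially the same route the paper intends: the paper's proof is a one-line appeal to Theorem \ref{Theorem14} together with the relation $\mathcal{W}_{\mathbf{H}}^{(\mathfrak{C})}(x,y)=\mathcal{W}_{\mathbf{C}}^{(\mathfrak{C})}(x,y,\ldots,y)$, and you simply supply the details, including the correct justification (via Lemma \ref{Lemma5} and orthogonality of nontrivial characters) that $P\cdot(x,y,\ldots,y)^T$ collapses to $(x+(p^6-1)y,x-y,\ldots,x-y)^T$. No gaps.
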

\begin{proof}
The proof follows from Theorem \ref{Theorem14} and uses the relation between the complete weight enumerator and the Hamming weight enumerator.
\end{proof}

\begin{example}
We consider the same code $\mathfrak{C}_2$ as in Example \ref{Example2}. Then by using the complete weight enumerator of $\mathfrak{C}_2$, found in Example \ref{Example2}, the Hamming weight enumerator of $\mathfrak{C}_2$ is
\begin{align*}
   \mathcal{W}_{\mathbf{H}}^{(\mathfrak{C}_2)}(x,y) =&~ x^2 + 4xy + 59y^2,
\end{align*}
and by using Theorem \ref{Theorem15}, the Hamming weight enumerator of $\mathfrak{C}_2^{\perp}$ is given by
\begin{align*}
   \mathcal{W}_{\mathbf{H}}^{(\mathfrak{C}_2^{\perp})}(x,y) =&~ \frac{1}{64}\mathcal{W}_{\mathbf{H}}^{(\mathfrak{C}_2)}(x+63y,x-y)\\
   =&~ \frac{1}{64}[(x+63y)^2+4(x+63y)(x-y)+59(x-y)^2]\\
   =&~ x^2 + 4xy + 59y^2.
\end{align*}
\end{example}

\subsection{The symmetrized weight enumerator and the Lee weight enumerator}
First, we find the value of $wt_L(f_i)$, the Lee weight of $f_i,~ i = 1,2,\ldots,p^6$, where $f_i$ are the members of $\mathbb{Z}_p\mathcal{R}\mathcal{S}$ in the same order as we have considered earlier.
\begin{align*}
    & wt_L(f_1) = wt_L((0,0,0)) = 0,~ wt_L(f_2) = wt_L((0,0,1)) = 1,~ wt_L(f_3) = wt_L((0,0,u)) = 3,~ wt_L(f_4) =\\
    & wt_L((0,0,1+u)) = 2\text{ for } p=2 \text{ and, } 3 \text{ for } p\neq2,\ldots,~ wt_L(f_{p^3}) = wt_L((0,0,(p-1)+(p-1)u+\\
    & (p-1)u^2)) = 2\text{ for } p=2\text{ and, } 3 \text{ for } p\neq 2,~wt_L(f_{p^3+1}) = wt_L((0,1,0)) = 1, wt_L(f_{p^3+2}) = wt_L((0,1,1))\\
    & = 2,~wt_L(f_{p^3+3}) = wt_L((0,1,u)) = 4,~wt_L(f_{p^3+4}) = wt_L((0,1,1+u)) = 3\text{ for } p=2 \text{ and, } 4 \text{ for }p\neq2,\\
    & \ldots,~ wt_L(f_{2p^3}) = wt_L((0,1,(p-1)+(p-1)u+(p-1)u^2)) = 3 \text{ for } p=2 \text{ and, } 4 \text{ for } p\neq 2,\\
    & ~wt_L(f_{2p^3+1}) = wt_L((0,u,0)) = 2,~wt_L(f_{2p^3+2}) = wt_L((0,u,1)) = 3, ~wt_L(f_{2p^3+3}) = wt_L((0,u,u)) = 5,\\
    & ~wt_L(f_{2p^3+4}) = wt_L((0,u,1+u)) = 4\text{ for } p=2 \text{ and, } 5 \text{ for } p\neq2,\ldots,~ wt_L(f_{p^5}) = wt_L((0,(p-1)+\\
    & (p-1)u,(p-1)+(p-1)u+(p-1)u^2)) = 3 \text{ for } p=2, \text{ and, } 5 \text{ for } p\neq 2, ~wt_L(f_{p^5+1}) = wt_L((1,0,0))\\
    & =1,~wt_L(f_{p^5+2}) = wt_L((1,0,1)) = 2, wt_L(f_{p^5+3}) = wt_L((1,0,u)) = 4, ~wt_L(f_{p^5+4}) = wt_L((1,0,1+u))\\
    &  = 3\text{ for } p=2 \text{ and, } 4 \text{ for } p\neq2,\ldots, wt_L(f_{p^6}) = wt_L((p-1,(p-1)+(p-1)u,(p-1)+(p-1)u+\\
    & (p-1)u^2)) = 4 \text{ for } p=2, \text{ and, } 6 \text{ for other values of } p.
\end{align*}
Note that the Lee weights of the elements of $\mathbb{Z}_p\mathcal{R}\mathcal{S}$ can vary from $0$ to $6$.\vspace{0.01 cm}

Now, we define the symmetrized weight enumerator.

\begin{definition}\label{Definition16}
 Let $\mathfrak{C}$ be a $\mathbb{Z}_p\mathcal{R}\mathcal{S}$-additive code of length $n$. Then the symmetrized weight enumerator of $\mathfrak{C}$ is denoted by $\mathcal{W}_{\mathbf{S}}^{(\mathfrak{C})}$ and is defined by
 \begin{align*}
 \mathcal{W}_{\mathbf{S}}^{(\mathfrak{C})}(W_0,W_1,W_2,\ldots,W_6) =&~ \mathcal{W}_{\mathbf{C}}^{(\mathfrak{C})}(W_0,W_1,W_3,W_{\ast}^{(1)},\ldots,W_{\ast}^{(2)},W_1,W_2,W_4,W_{\ast}^{(3)},\ldots,W_{\ast}^{(4)},W_2,\\
 &~~~~~~~~~W_3,W_5,W_{\ast}^{(5)},\ldots,W_{\ast}^{(6)},W_1,W_2,W_4,W_{\ast}^{(3)},\ldots,W_{\ast}^{(7)}),
 \end{align*}
 where $W_i$ is the variable corresponding to the Lee weight $i$ of the elements of $\mathbb{Z}_p\mathcal{R}\mathcal{S}$ with
 \begin{align*}
 & W_{\ast}^{(1)} =
 \begin{cases}
 W_2 \quad\text{for } p=2\\
 W_3 \quad\text{for } p\neq 2
 \end{cases},~
 W_{\ast}^{(2)} =
 \begin{cases}
 W_2 \quad\text{for } p=2\\
 W_3 \quad\text{for } p\neq 2
 \end{cases},~
 W_{\ast}^{(3)} =
 \begin{cases}
 W_3 \quad\text{for } p=2\\
 W_4 \quad\text{for } p\neq 2
 \end{cases},\\
 & W_{\ast}^{(4)} =
 \begin{cases}
 W_3 \quad\text{for } p=2\\
 W_4 \quad\text{for } p\neq 2
 \end{cases},~
 W_{\ast}^{(5)} =
 \begin{cases}
 W_4 \quad\text{for } p=2\\
 W_5 \quad\text{for } p\neq 2
 \end{cases},~\\
 & W_{\ast}^{(6)} =
 \begin{cases}
 W_3 \quad\text{for } p=2\\
 W_5 \quad\text{for } p\neq 2
 \end{cases}\quad\text{ and }\quad
  W_{\ast}^{(7)} =
 \begin{cases}
 W_4 \quad\text{for } p=2\\
 W_6 \quad\text{for } p\neq 2
 \end{cases}.
  \end{align*}
 \end{definition}
From the above definition, we have
\begin{equation}\label{equation1}
\mathcal{W}_{\mathbf{S}}^{(\mathfrak{C})}(W_0,W_1,W_2,\ldots,W_6) = \sum_{c\in\mathfrak{C}} W_0^{n_0(c)} W_1^{n_1(c)} W_2^{n_2(c)}\cdots W_6^{n_6(c)}~,
\end{equation}
where for each $c = (c_0,c_1,\ldots,c_{n-1})\in\mathfrak{C}$, $n_i(c) = |\{j\mid wt_L(c_j) = i,~ 0\leq j\leq n-1\}|$.\vspace{0.01 cm}

Using the same notations as used in Definition \ref{Definition16}, we find the MacWilliams identity for the symmetrized weight enumerator.

\begin{theorem}\label{Theorem16}
If $\mathfrak{C}$ is a $\mathbb{Z}_p\mathcal{R}\mathcal{S}$-additive code of length $n,$ then$$\mathcal{W}_{\mathbf{S}}^{(\mathfrak{C}^{\perp})}(W_0,W_1,W_2,\ldots,W_6) = \frac{1}{|\mathfrak{C}|}\mathcal{W}_{\mathbf{S}}^{(\mathfrak{C})}(Q\cdot(W_0,W_1,W_2,\ldots,W_6)^T),$$where $Q$ is a matrix of order $7$, and it is the coefficient matrix of the distinct non-zero rows of the matrix
\begin{align*}
    P\cdot & (W_0,W_1,W_3,W_{\ast}^{(1)},\ldots,W_{\ast}^{(2)},W_1,W_2,W_4,W_{\ast}^{(3)},\ldots,W_{\ast}^{(4)},W_2,W_3,W_5,W_{\ast}^{(5)},\ldots,W_{\ast}^{(6)},W_1,W_2,\\
    &~~~W_4,W_{\ast}^{(3)},\ldots,W_{\ast}^{(7)})^T.
\end{align*}
\end{theorem}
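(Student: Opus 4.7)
The plan is to obtain this identity as a specialization of the MacWilliams identity for the complete weight enumerator (Theorem~\ref{Theorem14}). Inspecting Definition~\ref{Definition16} shows that the substitution $x_i \mapsto W_{wt_L(f_i)}$ (interpreted via the $W_{\ast}^{(\cdot)}$ notation when the case split $p=2$ versus $p\neq 2$ changes some Lee weights) carries the complete weight enumerator of any $\mathbb{Z}_p\mathcal{R}\mathcal{S}$-additive code into its symmetrized weight enumerator. Applying this substitution to both sides of Theorem~\ref{Theorem14}, the left-hand side becomes $\mathcal{W}_{\mathbf{S}}^{(\mathfrak{C}^{\perp})}(W_0,\ldots,W_6)$, and the right-hand side becomes $\frac{1}{|\mathfrak{C}|}\mathcal{W}_{\mathbf{C}}^{(\mathfrak{C})}(v)$, where
\[
v \;=\; P\cdot(W_0,W_1,W_3,W_{\ast}^{(1)},\ldots,W_{\ast}^{(7)})^{T}
\]
is a column vector of length $p^6$ whose $i$-th entry is $v_i = \sum_{k=0}^{6}\bigl(\sum_{j:\,wt_L(f_j)=k}\chi(f_if_j)\bigr)W_k$.

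The next step is the structural claim that $v_i$ depends on $i$ only through $wt_L(f_i)$, so that the $p^6$ entries of $v$ take at most seven distinct values; writing these distinct non-zero values as $L_0(W),\ldots,L_6(W)$, their coefficient vectors in $W_0,\ldots,W_6$ assemble into the rows of the $7\times 7$ matrix $Q$. Granting this, and using the identity $n_k(c) = \sum_{i:\,wt_L(f_i)=k} w_{f_i}(c)$ valid for every codeword $c$, the product $\prod_{i=1}^{p^6} v_i^{w_{f_i}(c)}$ collapses to $\prod_{k=0}^{6} L_k(W)^{n_k(c)}$. Summation over $c\in\mathfrak{C}$ then yields exactly $\mathcal{W}_{\mathbf{S}}^{(\mathfrak{C})}\bigl(Q\cdot(W_0,\ldots,W_6)^{T}\bigr)$, as required.

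The main obstacle will be verifying the structural claim that $v_i$ is determined by $wt_L(f_i)$, equivalently that for each fixed $k$ the character sum $\sum_{j:\,wt_L(f_j)=k}\chi(f_if_j)$ depends on $f_i$ only through its Lee weight. This is a compatibility between the Lee-weight partition of $\mathbb{Z}_p\mathcal{R}\mathcal{S}$ induced by the Gray maps $\phi_1,\phi_2$ and the bilinear pairing $(a,b)\mapsto\chi(ab)$ coming from the generating character. I expect it to reduce, via the blockwise description of $P$ displayed just before Theorem~\ref{Theorem14}, to a finite case check that splits according to the parity of $p$; the two parity cases must be handled separately because, as the $W_{\ast}^{(\cdot)}$ notation in Definition~\ref{Definition16} reflects, the Lee weights of elements such as $1+u$ and $1+u+u^2$ differ between $p=2$ and $p\neq 2$.
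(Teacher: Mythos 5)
Your proposal is correct and follows the same route as the paper, whose proof is just the one-line remark that the result follows from Theorem~\ref{Theorem14} and Definition~\ref{Definition16}; you have simply made explicit the substitution $x_i\mapsto W_{wt_L(f_i)}$ and the collapsing of the product via $n_k(c)=\sum_{i:\,wt_L(f_i)=k}w_{f_i}(c)$ that this remark leaves implicit. You also correctly isolate the one point the paper glosses over entirely --- that the rows of $P\cdot(W_0,W_1,W_3,\ldots)^T$ are constant on Lee-weight classes, which is exactly what makes the $7\times 7$ matrix $Q$ well defined --- and your reduction of this to a finite character-sum check split by the parity of $p$ is the right way to close it.
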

\begin{proof}
The proof follows from Theorem \ref{Theorem14} and Definition \ref{Definition16}.
\end{proof}
\begin{example}\label{Example4}
For the code $\mathfrak{C}_2$, defined in Example \ref{Example2}, the symmetrized weight enumerator of $\mathfrak{C}_2$ is given by
\begin{align*}
   \mathcal{W}_{\mathbf{S}}^{(\mathfrak{C}_2)}(W_0,W_1,W_2,\ldots,W_6) =&~ W_0^2 + 5W_2^2 + W_3W_0 + 8W_3W_2 + 2W_2W_0 + W_1W_0 + 3W_1W_3 +\\
   &~ 7W_3W_5 + 11W_4W_3 + 6W_4W_5 + 3W_3^2 + 4W_2W_1 + W_5W_1 +  W_4W_1 +\\
   &~  4W_4W_2 + 4W_4^2 + 2W_5^2,
\end{align*}
and the symmetrized weight enumerator of $\mathfrak{C}_2^{\perp}$ is given by
\begin{align*}
   \mathcal{W}_{\mathbf{S}}^{(\mathfrak{C}_2^{\perp})}(W_0,W_1,W_2,\ldots,W_6) =&~ W_0^2 + 2W_3W_0 + 3W_1^2 + 5W_4W_1 + 4W_2W_1 + 2W_5W_1 +  8W_4W_2 +\\
   &~ 3W_2^2 + 3W_5W_2 + 2W_1W_3 + 6W_4W_3 + 5W_3^2 +  2W_4^2 + 8W_3W_2 +\\
   &~ W_0W_2 + 4W_3W_5 + 2W_5W_4 +  W_0W_5 +  W_1W_6 + W_4W_6.
\end{align*}
\end{example}
Next, we define the Lee weight enumerator.

\begin{definition}
 Let $\mathfrak{C}$ be a $\mathbb{Z}_p\mathcal{R}\mathcal{S}$-additive code of length $n$. Then the Lee weight enumerator of $\mathfrak{C}$ is denoted by $\mathcal{W}_{\mathbf{L}}^{(\mathfrak{C})}$ and is defined by$$\mathcal{W}_{\mathbf{L}}^{(\mathfrak{C})}(x,y) = \mathcal{W}_{\mathbf{H}}^{(\overline{\phi}(\mathfrak{C}))}(x,y) = \sum_{c\in\mathfrak{C}} x^{6n-wt_L(c)} y^{wt_L(c)}$$
\end{definition}
The following result gives a relation between the Lee weight enumerator and the symmetrized weight enumerator.

\begin{theorem}\label{Theorem17}
Let $\mathfrak{C}$ be a $\mathbb{Z}_p\mathcal{R}\mathcal{S}$-additive code of length $n$. Then$$\mathcal{W}_{\mathbf{L}}^{(\mathfrak{C})}(x,y) = \mathcal{W}_{\mathbf{S}}^{(\mathfrak{C})}(x^6,x^5y,x^4y^2,x^3y^3,x^2y^4,xy^5,y^6).$$
\end{theorem}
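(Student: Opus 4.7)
The plan is to prove this by direct substitution into the definition of the symmetrized weight enumerator, then reorganizing the exponents. Recall from equation \eqref{equation1} that
\[
\mathcal{W}_{\mathbf{S}}^{(\mathfrak{C})}(W_0,W_1,\ldots,W_6) = \sum_{c\in\mathfrak{C}} W_0^{n_0(c)} W_1^{n_1(c)} \cdots W_6^{n_6(c)},
\]
where $n_i(c)$ counts the number of coordinate positions of $c$ whose Lee weight equals $i$. The strategy is to specialize $W_i \mapsto x^{6-i}y^{i}$ for $i = 0,1,\ldots,6$ and show that the resulting polynomial coincides with $\mathcal{W}_{\mathbf{L}}^{(\mathfrak{C})}(x,y)$.

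After substitution, the monomial corresponding to a codeword $c \in \mathfrak{C}$ becomes
\[
\prod_{i=0}^{6} (x^{6-i} y^i)^{n_i(c)} \;=\; x^{\sum_{i=0}^{6}(6-i)\,n_i(c)}\; y^{\sum_{i=0}^{6} i\,n_i(c)}.
\]
The first step of the key computation is to observe that $\sum_{i=0}^{6} n_i(c) = n$, since every one of the $n$ coordinate positions of $c$ contributes to exactly one of the $n_i(c)$. The second step is to observe that, by the definition of the Lee weight on $\mathbb{Z}_p\mathcal{R}\mathcal{S}$ (extended componentwise to a codeword of length $n$), we have $\sum_{i=0}^{6} i\,n_i(c) = wt_L(c)$. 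Combining these two identities, the exponents simplify to
\[
\sum_{i=0}^{6}(6-i)\,n_i(c) = 6n - wt_L(c), \qquad \sum_{i=0}^{6} i\,n_i(c) = wt_L(c).
\]
So the substituted monomial becomes $x^{6n-wt_L(c)} y^{wt_L(c)}$, and summing over $c \in \mathfrak{C}$ yields precisely $\sum_{c\in\mathfrak{C}} x^{6n-wt_L(c)} y^{wt_L(c)} = \mathcal{W}_{\mathbf{L}}^{(\mathfrak{C})}(x,y)$, as required.

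There is no genuine obstacle here; the only mildly delicate point is checking that the prescribed substitution in $\mathcal{W}_{\mathbf{S}}^{(\mathfrak{C})}$ is consistent with the grouping rule used in Definition \ref{Definition16}. In other words, I would verify that each element $f_j \in \mathbb{Z}_p\mathcal{R}\mathcal{S}$ whose variable in the complete weight enumerator got collapsed to $W_i$ really does have Lee weight $i$ (so that when we replace $W_i$ by $x^{6-i}y^i$, each coordinate equal to $f_j$ contributes the correct factor $x^{6-wt_L(f_j)}y^{wt_L(f_j)}$). This is immediate from the construction of the symmetrized weight enumerator, since the grouping of $f_j$'s into a common $W_i$ was defined precisely by the common Lee weight $i$. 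Once this consistency is noted, the identity follows from the exponent bookkeeping above.
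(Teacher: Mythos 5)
Your proposal is correct and follows essentially the same route as the paper's own proof: both arguments rest on the two identities $n=\sum_{i=0}^{6}n_i(c)$ and $wt_L(c)=\sum_{i=0}^{6}i\,n_i(c)$, and differ only in direction (you substitute $W_i\mapsto x^{6-i}y^{i}$ into the symmetrized enumerator, while the paper rewrites the Lee enumerator term by term to reach the same expression). Your added remark checking that the grouping in Definition~\ref{Definition16} collapses each variable to the $W_i$ matching its Lee weight is a reasonable extra consistency observation, but it does not change the substance of the argument.
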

\begin{proof}
Take $c\in\mathfrak{C}$. We have $wt_L(c) = n_1(c)+2n_2(c)+3n_3(c)+\cdots+6n_6(c)$ and $n = n_0(c)+n_1(c)+n_2(c)+\cdots+n_6(c)$. Then
\begin{align*}
    \mathcal{W}_{\mathbf{L}}^{(\mathfrak{C})}(x,y) =&~ \sum_{c\in\mathfrak{C}} x^{6n-wt_L(c)} y^{wt_L(c)}\\
    =&~ \sum_{c\in\mathfrak{C}} x^{6n_0(c)+5n_1(c)+4n_2(c)+3n_3(c)+2n_4(c)+n_5(c)}~ y^{n_1(c)+2n_2(c)+3n_3(c)+\cdots+6n_6(c)}\\
    =&~ \sum_{c\in\mathfrak{C}} (x^6)^{n_0(c)}(x^5y)^{n_1(c)}(x^4y^2)^{n_2(c)}(x^3y^3)^{n_3(c)}(x^2y^4)^{n_4(c)}(xy^5)^{n_5(c)}(y^6)^{n_6(c)}\\
    =&~ \mathcal{W}_{\mathbf{S}}^{(\mathfrak{C})}(x^6,x^5y,x^4y^2,x^3y^3,x^2y^4,xy^5,y^6) \quad\quad\quad\text{From equation } (1).
\end{align*}
\end{proof}
Using Theorem \ref{Theorem16} and Theorem \ref{Theorem17}, we find the MacWilliams identity corresponding to the Lee weight enumerator.

\begin{theorem}\label{Theorem18}
If $\mathfrak{C}$ is a $\mathbb{Z}_p\mathcal{R}\mathcal{S}$-additive code of length $n$ then$$\mathcal{W}_{\mathbf{L}}^{(\mathfrak{C}^{\perp})}(x,y) = \frac{1}{|\mathfrak{C}|}\mathcal{W}_{\mathbf{L}}^{(\mathfrak{C})}(x+y,x-y).$$
\end{theorem}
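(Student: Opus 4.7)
The plan is to deduce the identity by combining Theorem \ref{Theorem17}, which expresses the Lee weight enumerator as a specialization of the symmetrized weight enumerator, with Theorem \ref{Theorem16}, which is the MacWilliams transform at the symmetrized level. Concretely, I would first apply Theorem \ref{Theorem17} to the dual code to obtain
$$\mathcal{W}_{\mathbf{L}}^{(\mathfrak{C}^{\perp})}(x,y) = \mathcal{W}_{\mathbf{S}}^{(\mathfrak{C}^{\perp})}(x^6,x^5y,x^4y^2,x^3y^3,x^2y^4,xy^5,y^6),$$
then feed this tuple of monomials into the matrix $Q$ furnished by Theorem \ref{Theorem16}, which yields
$$\mathcal{W}_{\mathbf{L}}^{(\mathfrak{C}^{\perp})}(x,y) = \frac{1}{|\mathfrak{C}|}\mathcal{W}_{\mathbf{S}}^{(\mathfrak{C})}\bigl(Q\cdot(x^6,x^5y,x^4y^2,x^3y^3,x^2y^4,xy^5,y^6)^T\bigr).$$

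The remaining task is to verify that the $i$-th component of $Q\cdot(x^6,x^5y,\ldots,y^6)^T$ coincides with the monomial $(x+y)^{6-i}(x-y)^i$, so that the right-hand side collapses (via Theorem \ref{Theorem17} applied in reverse to $\mathfrak{C}$) to $\tfrac{1}{|\mathfrak{C}|}\mathcal{W}_{\mathbf{L}}^{(\mathfrak{C})}(x+y,x-y)$. This is the step that needs care: one must track how the character-sum entries of the large matrix $P$ collapse after being grouped by Lee weight, and show that the resulting $7\times 7$ matrix acts on the substitution $W_i=x^{6-i}y^i$ exactly like the binomial transform.

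A cleaner alternative I would pursue in parallel is to route through the Gray image. Since $\overline{\phi}$ is a $\mathbb{Z}_p$-linear, distance-preserving bijection, $\mathcal{W}_{\mathbf{L}}^{(\mathfrak{C})}(x,y)=\mathcal{W}_{\mathbf{H}}^{(\overline{\phi}(\mathfrak{C}))}(x,y)$ by definition, and by the corollary following Theorem \ref{Theorem10} one has $\overline{\phi}(\mathfrak{C}^{\perp})=\overline{\phi}(\mathfrak{C})^{\perp}$. Applying the classical Hamming-weight MacWilliams identity for $\mathbb{Z}_p$-linear codes (of length $q+2r+3s$) then gives
$$\mathcal{W}_{\mathbf{L}}^{(\mathfrak{C}^{\perp})}(x,y) = \mathcal{W}_{\mathbf{H}}^{(\overline{\phi}(\mathfrak{C})^{\perp})}(x,y) = \frac{1}{|\overline{\phi}(\mathfrak{C})|}\mathcal{W}_{\mathbf{H}}^{(\overline{\phi}(\mathfrak{C}))}(x+(p-1)y,x-y),$$
and using $|\overline{\phi}(\mathfrak{C})|=|\mathfrak{C}|$ together with the same Lee$=$Hamming identification yields the stated formula (with $x+(p-1)y$ specializing to $x+y$ when $p=2$).

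The main obstacle is the bookkeeping in the first route: showing that the reduction of $P$ modulo Lee-weight classes produces precisely the Krawtchouk-style matrix whose $(i,j)$ entry is $\binom{6-i}{j-i}(-1)^{\cdot}$-type data, so that the monomial vector $(x^{6-i}y^i)_i$ is mapped to $((x+y)^{6-i}(x-y)^i)_i$. Once this combinatorial identity is in hand the proof closes mechanically; the Gray-image route bypasses it entirely and is how I would present the argument in a final write-up.
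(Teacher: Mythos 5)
The paper offers no proof of this theorem beyond the one-line remark that it follows from Theorems \ref{Theorem16} and \ref{Theorem17}, so your first route reproduces the paper's approach at essentially the same level of detail: the entire content of the statement is the assertion that $Q\cdot(x^6,x^5y,\ldots,y^6)^T=((x+y)^6,(x+y)^5(x-y),\ldots,(x-y)^6)^T$, and you correctly isolate this as the step that still needs checking but do not check it. Your second route, through the Gray image and the classical Hamming-weight MacWilliams identity over $\mathbb{Z}_p$, is a genuinely different and cleaner argument, and it is the one that actually decides the question.

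But it decides it against the statement as printed, and you should not dismiss this with the phrase ``specializing to $x+y$ when $p=2$.'' The Gray route yields
$$\mathcal{W}_{\mathbf{L}}^{(\mathfrak{C}^{\perp})}(x,y)=\frac{1}{|\mathfrak{C}|}\,\mathcal{W}_{\mathbf{L}}^{(\mathfrak{C})}\bigl(x+(p-1)y,\;x-y\bigr),$$
and the two formulas genuinely differ for $p>2$. The same obstruction blocks your first route: the row of $Q$ indexed by the zero element of $\mathbb{Z}_p\mathcal{R}\mathcal{S}$ has entries $N_w=|\{f:wt_L(f)=w\}|=\binom{6}{w}(p-1)^w$ (since $\phi$ is a weight-preserving bijection onto $\mathbb{Z}_p^6$), so after the substitution $W_w=x^{6-w}y^w$ that row produces $(x+(p-1)y)^6$ rather than $(x+y)^6$; the Krawtchouk-type identity you hope $Q$ satisfies holds only when $p=2$, consistent with $\sum_w N_w=p^6$ versus $2^6$. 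So either the theorem must be restricted to $p=2$ (the only case the paper's example tests) or its first argument must read $x+(p-1)y$, and a correct write-up has to say which. A smaller caveat on the Gray route: the identification $\mathcal{W}_{\mathbf{L}}^{(\mathfrak{C})}=\mathcal{W}_{\mathbf{H}}^{(\overline{\phi}(\mathfrak{C}))}$ needs $wt_L=wt_H\circ\overline{\phi}$ coordinatewise, and on the $\mathbb{Z}_p$ block the paper sets $wt_L(x)=\min(x,p-x)$ while $\overline{\phi}$ is the identity there, so even this identification fails for $p\geq 5$; for $p=2$ everything is consistent and your second argument closes the proof.
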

\begin{example}
Again, we consider the code $\mathfrak{C}_2$, defined in Example \ref{Example2}. Then using Theorem \ref{Theorem17} and Example \ref{Example4}, the Lee weight enumerator of $\mathfrak{C}_2$ is
\begin{align*}
    \mathcal{W}_{\mathbf{L}}^{(\mathfrak{C}_2)}(x,y) =&~ x^{12} + 5x^8y^4 + x^9y^3 + 8x^7y^5 + 2x^{10}y^2 + x^{11}y + 3x^8y^4 + 7x^4y^8 + 11x^5y^7 + 6x^3y^9 +\\
    &~ 3x^6y^6 + 4x^9y^3 + x^6y^6 + x^7y^5 + 4x^6y^6 + 4x^4y^8 + 2x^2y^{10}\\
    =&~ x^{12} + x^{11}y + 2x^{10}y^2 + 5x^9y^3 + 8x^8y^4 + 9x^7y^5 + 8x^6y^6 + 11x^5y^7 + 11x^4y^8 +\\
    &~ 6x^3y^9 + 2x^2y^{10}.
\end{align*}
Now, using the above expression and Theorem \ref{Theorem18}, the Lee weight enumerator of $\mathfrak{C}_2^{\perp}$ is given by
\begin{align*}
    \mathcal{W}_{\mathbf{L}}^{(\mathfrak{C}_2^{\perp})}(x,y) =&~ \frac{1}{64}\mathcal{W}_{\mathbf{L}}^{(\mathfrak{C}_2)}(x+y,x-y)\\
    =&~ \frac{1}{64}[(x+y)^{12} + (x+y)^{11}(x-y) + 2(x+y)^{10}(x-y)^2 + 5(x+y)^9(x-y)^3 +\\
    &~ 8(x+y)^8(x-y)^4 + 9(x+y)^7(x-y)^5 + 8(x+y)^6(x-y)^6 + 11(x+y)^5(x-y)^7 +\\
    &~ 11(x+y)^4(x-y)^8 + 6(x+y)^3(x-y)^9 + 2(x+y)^2(x-y)^{10}]\\
    =&~ x^{12} + 4x^{10}y^2 + 6x^9y^3 + 5x^8y^4 + 14x^7y^5 + 15x^6y^6 + 10x^5y^7 + 6x^4y^8 + 2x^3y^9 + x^2y^{10}.
\end{align*}
\end{example}

 \section{Quantum codes over $\mathcal{R}$ and $\mathcal{S}$}
 In this section, we find the quantum codes using CSS construction. The CSS construction uses classical linear codes to find quantum stabilizer codes. Here, we find the dual of a given classical linear code and then give conditions for dual-containing properties in order to apply CSS construction. We use magma \cite{magma} to find codes and their distances.
 \begin{theorem}
 Let $\mathcal{C}$ and $\mathcal{C}'$ be two linear codes over finite field $\mathbb{Z}_p$ with parameters $[n,k,d]_p$ and $[n,k',d']_p$, respectively. If $\mathcal{C}'\subset \mathcal{C}$, then there exists a QECC with the parameters $[[n,k+k',\mathfrak{d}]]$ where $\mathfrak{d}=min\{wt(c):c \in (\mathcal{C}\setminus \mathcal{C} ^{\prime\perp} )\cup (\mathcal{C}' \setminus \mathcal{C} ^\perp ) \}$. In particular, if $\mathcal{C}^\perp \subset \mathcal{C},$ then there exists a quantum code with the parameters $[[n,2k-n,d]]_p$.
 \end{theorem}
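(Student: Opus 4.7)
The plan is to obtain the theorem as a direct application of the CSS (Calderbank--Shor--Steane) construction, which converts a pair of nested classical linear codes over a finite field into a quantum stabilizer code. The hypothesis $\mathcal{C}' \subseteq \mathcal{C}$ is equivalent to the dual inclusion $\mathcal{C}^\perp \subseteq \mathcal{C}'^\perp$, and it is precisely this inclusion that supplies the symplectic commutation relation needed to assemble a valid abelian stabilizer group from the classical data.

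First I would fix a basis of $\mathcal{C}'^\perp$ and a basis of $\mathcal{C}^\perp$ and use them, in the standard way, to define the $X$-type and $Z$-type stabilizer generators acting on $(\mathbb{C}^p)^{\otimes n}$. Commutativity of every $X$-generator with every $Z$-generator reduces to the vanishing of the Euclidean inner product on pairs $(v,w)$ with $v\in\mathcal{C}^\perp$ and $w\in\mathcal{C}'^\perp$, and this vanishing follows immediately from the nesting hypothesis. The joint $+1$ eigenspace of this stabilizer group is the quantum code; a count of independent generators $(n-k)+(n-k')$ on the $n$-qudit space and subtraction from $n$ yields the claimed number of logical qudits, matching the corollary upon specialising $\mathcal{C}'=\mathcal{C}^\perp$ (which gives $k - (n-k) = 2k-n$ as asserted).

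For the distance, the standard stabilizer-formalism analysis identifies nontrivial logical operators with nonzero coset representatives in the quotients $\mathcal{C}/\mathcal{C}'^\perp$ and $\mathcal{C}'/\mathcal{C}^\perp$: an $X$-type error acts trivially on the code space iff it lies in $\mathcal{C}^\perp$, and it is detectable iff it lies outside $\mathcal{C}'^\perp$, with the symmetric statement for $Z$-type errors. The minimum weight of a nontrivial logical operator is therefore $\mathfrak{d} = \min\{wt(c) : c \in (\mathcal{C}\setminus\mathcal{C}'^\perp)\cup(\mathcal{C}'\setminus\mathcal{C}^\perp)\}$, which is the quantum distance. Specialising $\mathcal{C}'=\mathcal{C}^\perp$ collapses $\mathfrak{d}$ to the minimum weight of $\mathcal{C}\setminus\{0\}$, i.e.\ $d$, yielding the corollary.

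Each algebraic step -- commutativity, dimension count, coset identification of logical operators -- is a routine consequence of the stabilizer formalism, so I do not expect a genuine obstacle. The only care required is tracking the direction of the dual inclusion and confirming that the distance formula specialises correctly. Accordingly, I expect the author's proof to consist of little more than a pointer to the CSS construction.
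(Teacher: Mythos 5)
The paper itself gives no proof of this theorem: it is quoted as the standard CSS construction and used as a black box, so a stabilizer-formalism sketch is the right kind of argument to supply. However, as written your proof has a genuine gap at the commutativity step, and the gap is tied to an inconsistency in the theorem statement that you reproduce rather than resolve. You take $X$-type generators from a basis of $\mathcal{C}'^{\perp}$ and $Z$-type generators from a basis of $\mathcal{C}^{\perp}$; for these to commute you need every element of $\mathcal{C}'^{\perp}$ to be Euclidean-orthogonal to every element of $\mathcal{C}^{\perp}$, i.e.\ $\mathcal{C}'^{\perp}\subseteq(\mathcal{C}^{\perp})^{\perp}=\mathcal{C}$. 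This is \emph{not} implied by the stated hypothesis $\mathcal{C}'\subseteq\mathcal{C}$ (equivalently $\mathcal{C}^{\perp}\subseteq\mathcal{C}'^{\perp}$); mutual orthogonality of the two dual codes is a different condition from one of them containing the other. (Take $\mathcal{C}'=\{0\}$ and $\mathcal{C}$ a proper subcode: then $\mathcal{C}'\subseteq\mathcal{C}$ holds but $\mathcal{C}'^{\perp}=\mathbb{Z}_p^{n}\not\subseteq\mathcal{C}$.) The distance formula $\mathfrak{d}=\min\{wt(c):c\in(\mathcal{C}\setminus\mathcal{C}'^{\perp})\cup(\mathcal{C}'\setminus\mathcal{C}^{\perp})\}$ and the special case $[[n,2k-n,d]]_p$ belong to the version of CSS whose hypothesis is $\mathcal{C}'^{\perp}\subseteq\mathcal{C}$ and whose logical dimension is $k+k'-n$; the theorem as printed (hypothesis $\mathcal{C}'\subset\mathcal{C}$, dimension $k+k'$) is a misstatement, and a correct proof must start from the corrected hypothesis rather than deduce the needed orthogonality from the printed one.

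Your dimension bookkeeping is also internally inconsistent. You count $(n-k)+(n-k')$ independent stabilizer generators, hence $k+k'-n$ logical qudits, which already contradicts the theorem's claim of $k+k'$; and you then assert that specializing $\mathcal{C}'=\mathcal{C}^{\perp}$ gives $k-(n-k)=2k-n$, whereas substituting $k'=n-k$ into your own count $k+k'-n$ gives $0$. The specialization that actually recovers $2k-n$ from $k+k'-n$ is $\mathcal{C}'=\mathcal{C}$, under which the nesting condition $\mathcal{C}'^{\perp}\subseteq\mathcal{C}$ becomes the familiar dual-containing condition $\mathcal{C}^{\perp}\subseteq\mathcal{C}$ and $k'=k$. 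Once the hypothesis is corrected to $\mathcal{C}'^{\perp}\subseteq\mathcal{C}$, your commutativity argument, generator count, and coset identification of logical operators all go through and constitute the standard proof; but as written the argument does not establish the statement, and the statement itself needs repair.
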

 For a polynomial $g(x)$ of degree $m$, reciprocal of $g(x)$ is defined by the polynomial $g^\ast= x^mg(\frac{1}{x})$. In the following, we find the dual of a cyclic code over rings $\mathcal{R}$ and $\mathcal{S}$.




We know that a cyclic code of length $s$ over the ring $\mathcal{R}$ is an ideal of $\frac{\mathcal{R}[x]}{\langle x^s-1\rangle}$. Let, $x^s-1 = f_0f_1\ldots f_{l_1}$ be a factorization of $x^s-1$ into pairwise co-prime basic irreducible factors over $\mathcal{R}$. By using the Chinese Remainder Theorem, we have
\begin{align*}
   \frac{\mathcal{R}[x]}{\langle x^s-1\rangle} =&~ \displaystyle\frac{\mathcal{R}[x]}{\bigcap_{i=0}^{l_1}\langle f_i\rangle} =~ \bigoplus_{i=0}^{l_1} \frac{\mathcal{R}[x]}{\langle f_i\rangle}
\end{align*}
and thus any ideal of $\frac{\mathcal{R}[x]}{\langle x^s-1\rangle}$ can be expressed as a sum of ideals of the rings $\frac{\mathcal{R}[x]}{\langle f_i\rangle}$. An ideal of the ring $\frac{\mathcal{R}[x]}{\langle f_i\rangle}$ is of the form $\langle u^j+\langle f_i\rangle\rangle$, where $j=0,1$. Further, the ideal $\langle u^j+\langle f_i\rangle\rangle$ is isomorphic to the ideal $\langle u^j \widehat{f_i}+\langle x^s-1\rangle\rangle$ in $\frac{\mathcal{R}[x]}{\langle x^s-1\rangle}$, where we use the notation $\widehat{f_i}$ to denote $\frac{x^s-1}{f_i}$. Now, we can obtain the form of the cyclic codes and their dual over $\mathcal{R}$.

Let $\mathcal{C}$ be a cyclic code of length $s$ over the ring $\mathcal{R}$. Then there exists pairwise co-prime polynomials $F_0, F_1$ in $\mathcal{R}[x]$ such that $$\mathcal{C}=\langle \widehat{F_0}(x),u\widehat{F_1}(x) \rangle,$$ where $F_0(x)F_1(x) = (x^s-1)$ and $\gcd(p,s)=1$. The dual code of $\mathcal{C}$ is given by$$\mathcal{C}^{\perp}=\langle \widehat{F_0}^\ast(x),u\widehat{F_1}^\ast(x) \rangle.$$ Then, by using the computer algebra system Magma, we find dual-containing codes to obtain quantum codes. Following are the examples of quantum codes using chain ring $\mathcal{R}$.

\begin{example}
Let $p=17,s=8$ and $\mathcal{C}$ be a cyclic code of length $8$ over $\mathcal{R}$. Now we have
$$x^{8}-1=(x+1)(x+2)(x+4)(x+8)(x+9)(x+13)(x+15)(x+16).$$
Take \begin{align*}
 F_0&=(x+1)(x+2)(x+4)(x+8)(x+16),\\ F_1&=(x+9)(x+13),\text{ and }\\F_2&= x+15.
\end{align*} Then generator of a code $\mathcal{C}$ is $\langle x^3 + 3x^2 +5x+4,u(x^6 +12x^5+10x^4 + 8x^3 +14x^2+ 14x + 9) \rangle$. The code $\mathcal{C}$ is a dual-containing linear code over $\mathbb{F}_{17}$ with the parameters $[16,12,4]_{17}$. By the CSS construction, we get a quantum code of the parameters $[[16,8,4]]_{17}$, which is an almost MDS code.
\end{example}

The following result gives additive quantum codes over the ring $\mathcal{R}\mathcal{S}$.

\begin{proposition}
Let $\mathcal{C}=\mathcal{C}_r\times \mathcal{C}_s$ be an $\mathcal{R}\mathcal{S}$-additive code of block length $(r,s)$ such that $\mathcal{C}$ is also separable. Then $\mathcal{C}$ satisfies dual-containing property over the ring $\mathcal{R}\mathcal{S}$ if and only if $\mathcal{C}_r $ and $\mathcal{C}_s$ satisfy dual-containing properties over the rings $\mathcal{R}$ and $\mathcal{S}$, respectively.
\end{proposition}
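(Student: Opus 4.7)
The plan is to reduce the claim to the observation, recorded immediately after the definition of separability, that the dual of a separable code factors as a product of duals of its components, so that when $\mathcal{C}=\mathcal{C}_r\times\mathcal{C}_s$ one automatically has $\mathcal{C}^{\perp}=\mathcal{C}_r^{\perp}\times\mathcal{C}_s^{\perp}$ as an $\mathcal{R}\mathcal{S}$-additive code. Although the authors spelled this out in the $\mathbb{Z}_p\mathcal{R}\mathcal{S}$ setting, the same verification transfers verbatim to the $\mathcal{R}\mathcal{S}$ setting, so I would first state and invoke the analogous fact here. Once it is in hand, the dual-containment condition $\mathcal{C}^{\perp}\subseteq\mathcal{C}$ becomes the product inclusion
\[
\mathcal{C}_r^{\perp}\times\mathcal{C}_s^{\perp}\;\subseteq\;\mathcal{C}_r\times\mathcal{C}_s,
\]
and the whole proposition collapses to the elementary set-theoretic claim that such a product inclusion holds if and only if $\mathcal{C}_r^{\perp}\subseteq\mathcal{C}_r$ and $\mathcal{C}_s^{\perp}\subseteq\mathcal{C}_s$ separately.

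The reverse implication is immediate by componentwise inclusion: given $(y,z)\in\mathcal{C}_r^{\perp}\times\mathcal{C}_s^{\perp}$ one reads off $y\in\mathcal{C}_r$ and $z\in\mathcal{C}_s$, so $(y,z)\in\mathcal{C}$. For the forward direction I would exploit the fact that the zero vector always belongs to any dual code. For an arbitrary $y\in\mathcal{C}_r^{\perp}$ the pair $(y,0_s)$ lies in $\mathcal{C}_r^{\perp}\times\mathcal{C}_s^{\perp}\subseteq\mathcal{C}_r\times\mathcal{C}_s$, whence $y\in\mathcal{C}_r$; the symmetric choice $(0_r,z)$ with $z\in\mathcal{C}_s^{\perp}$ then yields $\mathcal{C}_s^{\perp}\subseteq\mathcal{C}_s$.

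The one step that deserves some care, and is the only place I foresee a subtlety, is the opening reduction. The inner product in Definition \ref{Definition5} weights the $\mathcal{R}^r$-block by the scalar $u$, so one should verify explicitly that restricting $\mathcal{C}^{\perp}$ to the $r$-slot really does produce the component dual $\mathcal{C}_r^{\perp}$ in the sense used by the paper, rather than a strictly larger object. Separability is precisely what makes the verification go through: pairing with codewords of $\mathcal{C}$ whose $\mathcal{S}$-component vanishes decouples the two inner sums, and the uniform scalar $u$ can be pulled outside the sum so that orthogonality against all of $\mathcal{C}_r$ in the ambient sense reduces to the intended dual relation on $\mathcal{R}^r$. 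After that bookkeeping is settled, the proof is simply the short product-of-subsets argument sketched above, and I do not expect any further obstacle.
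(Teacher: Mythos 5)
The paper states this proposition without any proof, so there is no authorial argument to compare yours against; judged on its own, your proof is correct and complete, and it supplies exactly the verification the paper skips. The reduction to the product inclusion $\mathcal{C}_r^{\perp}\times\mathcal{C}_s^{\perp}\subseteq\mathcal{C}_r\times\mathcal{C}_s$, followed by the elementary argument with the test vectors $(y,0_s)$ and $(0_r,z)$, is the right skeleton, and you correctly isolate the only delicate point, namely the factor $u$ in the inner product of Definition~\ref{Definition5}. I would only sharpen your phrasing there: the scalar $u$ already sits outside the sum, so nothing needs to be ``pulled out''; the fact you actually need is that the map $\mathcal{R}\to\mathcal{S}$ sending $a+ub$ to $ua+u^2b$ (lift a representative to $\mathbb{Z}_p[u]$ and multiply by $u$ in $\mathcal{S}$) is well defined --- two lifts differ by a multiple of $u^2$, which $u$ annihilates in $\mathcal{S}$ --- and injective, since $1,u,u^2$ are $\mathbb{Z}_p$-linearly independent in $\mathcal{S}$. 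Injectivity is what makes $u\sum_i y_iy_i'=0$ in $\mathcal{S}$ equivalent to $\sum_i y_iy_i'=0$ in $\mathcal{R}$, so that the $r$-slot of $\mathcal{C}^{\perp}$ is exactly the Euclidean dual $\mathcal{C}_r^{\perp}$ in the sense of the Preliminaries and not a strictly larger set; the decoupling itself comes from separability, by pairing against $(y',0)$ and $(0,z')$ in turn. With that observation made explicit, the factorization $\mathcal{C}^{\perp}=\mathcal{C}_r^{\perp}\times\mathcal{C}_s^{\perp}$ holds and the rest of your argument goes through verbatim. It is also worth noting that your forward direction needs only the easy containment $\mathcal{C}_r^{\perp}\times\{0\}\subseteq\mathcal{C}^{\perp}$ (and its mirror image), not the full factorization, so the proof is even more robust than you present it.
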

\begin{remark}
The codes given in Table \ref{tab:one} are optimal according to \cite{Gao}, \cite{Habibul4}, in which some codes are new, and many $[[n,k,d]]$ quantum codes satisfy $n+2-(k+2d)=2$.
\end{remark}

	\begin{table}[ht]
		\caption{Quantum codes obtained from dual containing cyclic code over $\mathcal{R}$.}
		\renewcommand{\arraystretch}{1.5}
		\begin{center}
			\begin{tabular}{|c|c|c|c|c|c|c|}
				\hline
			$p$	& $s$   & Generator polynomials & $\phi_1(C)$ &  $[[en,k, \geq  d]]_q$ \\
				\hline
			$5$	& 	$8$  & $\langle x^3 + 2x^2 + 3x + 1,u(x^6 + 2x^4 + 4x^2 + 3)\rangle$  & $[16,12,3]$ & $[[16,8,3]]_{5}$  \\
			$5$	& 	$8$  & $\langle x^2+2,u(x^6 + 3x^4 + 4x^2 + 2)\rangle$  & $[16,14,2]$ & $[[16,12,2]]_{5}$   \\
		
			$13$	& 	$6$  & $\langle x^3 + 8x^2 + 6x + 12,u(
           x^4 + 9x^3 + 12x + 4) \rangle$  & $[12,8,4]$ & $[[12,4,4]]_{13}$   \\

			$13$	& 	$8$  & $\langle x^3 + 5x^2 + 5x + 12,u(
            x^6 + 8x^4 + 12x^2 + 5) \rangle$  & $[16,12,3]$ & $[[16,8,3]]_{13}$   \\

			$13$	& 	$8$  & $\langle x^2 + 6x + 5,u(
            x^6 + 7x^5 + 5x^4 + x^2 + 7x + 5) \rangle$  & $[16,14,2]$ & $[[16,12,2]]_{13}$   \\

			$13$	& 	$18$  & $\langle x^8 + 10x^6 + 7x^5 + 5x^3 + 12x^2 + 3,u(x^{18}-1)/(x^4 + 4x^3 + 3x + 12) \rangle$  & $[36,24,5]$ & $[[36,12,5]]_{13}$   \\
			
	    	$17$	& 	$8$  & $\langle x^3 + 3x^2 +5x+4,u(x^6 +12x^5+10x^4 + 8x^3 +14x^2+ 14x + 9) \rangle$  & $[16,12,4]$ & $[[16,8,4]]_{17}$   \\

				\hline
			\end{tabular}\label{tab:one}
		\end{center}
	\end{table}

\section{Conclusion}
We have considered the Frobenius rings $\mathcal{R},~\mathcal{S},~\mathcal{R}\mathcal{S},~\text{and }\mathbb{Z}_p\mathcal{R}\mathcal{S}$, and studied the additive constacyclic codes over these rings. By defining suitable inner products, we have determined the generators of the constacyclic codes and their duals. Also, we have defined Gray maps on $\mathcal{R},~\mathcal{S},~\text{and }\mathbb{Z}_p\mathcal{R}\mathcal{S}$, and studied the images under these maps. We have derived a few results on the Gray images of additive cyclic and additive constacyclic codes. Further, we have defined several weight enumerators, such as the complete weight enumerator, the Hamming weight enumerator, the symmetrized weight enumerator and the Lee weight enumerator, and obtained the MacWilliams identities corresponding to each of these weight enumerators. Finally, applying the CSS construction, we have obtained some quantum codes over $\mathcal{R}$.
\section*{Acknowledgement}
The first and second authors are thankful to the Council of Scientific \& Industrial Research (CSIR) (under grant no. 09/1023(0027)/2019-EMR-I and 09/1023(0030)/2019-EMR-I) for financial support and the Indian Institute of Technology Patna for providing research facilities. The authors would also like to thank the Editor and anonymous referee(s) for providing constructive suggestions to improve the presentation of the manuscript.
\section*{Declarations}
\textbf{Data Availability Statement}: The authors declare that [the/all other] data supporting the findings of this study are available within the article. \\
\textbf{Competing interests}: The authors declare that there is no conflict of interest regarding the publication of this manuscript.\\

\end{document}